\keywords{Lambda-calculus, Böhm Trees, Taylor expansion of lambda-terms}
\newif\iflong
\title{Approximation theory for distant Bang calculus}
\author{Kostia Chardonnet}
{Université de Lorraine, CNRS, Inria, LORIA, F-54000 Nancy, France \url{https://kostiachardonnet.github.io/}}
{kostia.chardonnet@pm.me}
{0009-0000-0671-6390} 
{}
\author{Jules Chouquet}
{Université d'Orléans, INSA CVL, LIFO, UR 4022, Orléans, France \url{https://www.univ-orleans.fr/lifo/membres/chouquet/}}
{jules.chouquet@univ-orleans.fr}
{0000-0003-2676-0297}
{}
\author{Axel Kerinec}
{Université Paris Est Creteil, LACL, F-94010 Créteil, France \url{https://axelkrnc.github.io/}}
{axel.kerinec@yahoo.com}
{0000-0003-0920-8847}
{}
\authorrunning{K. Chardonnet and J. Chouquet and A. Kerinec} 
\newcommand{\<}{\langle}
\renewcommand{\>}{\rangle}
\newcommand{\alt}{~\mid~}
\newcommand{\set}[1]{\ensuremath{\{#1\}}}
\newcommand{\der}[1]{\ensuremath{\mathbf{der}(#1)}}
\renewcommand{\deg}[2]{d_{#1}(#2)}
\newcommand{\proc}{\ensuremath{\mathtt{Proc}}}
\newcommand{\fv}{\mathbf{fv}}
\newcommand{\dcbn}{\ensuremath{\mathtt{dCbN}}}
\newcommand{\cbn}{\ensuremath{\mathtt{CbN}}}
\newcommand{\dcbv}{\ensuremath{\mathtt{dCbV}}}
\newcommand{\cbv}{\ensuremath{\mathtt{CbV}}}
\newcommand{\bang}{\ensuremath{\mathtt{Bang}}}
\newcommand{\dbang}{\ensuremath{\mathtt{dBang}}}
\newcommand{\dbangb}{\ensuremath{\dbang_\bot}}
\newcommand{\dbangres}{\ensuremath{\mathtt{\delta Bang}}}
\newcommand{\dbangv}{\ensuremath{\dbang_V}}
\newcommand{\dbangn}{\ensuremath{\dbang_N}}
\newcommand{\nf}[1]{\mathbf{nf}(#1)}
\newcommand{\ton}{\ensuremath{\to_{\text{n}}}}
\newcommand{\tov}{\ensuremath{\to_{\text{v}}}}
\newcommand{\todb}{\mapsto_{\oc}} %réduction de base (non contextuelle)
\newcommand{\todbs}{\to_{\oc s}} %réduction de surface
\newcommand{\todbf}{\to_{\oc}} %réduction complète
\newcommand{\Todbr}{\Rightarrow_{\delta}} %(ressources) !!réduc entre termes et sommes
\newcommand{\todbr}{\to_{\delta}} %(ressources) !!réduc entre termes, pas sommes!!!
\newcommand{\todbrs}{\to_{\delta s}} %(ressources — surface) 
\newcommand{\todbrf}{\to_{\delta}} %(ressources — full)
\newcommand{\totodbr}{\rightrightarrows_{\delta}} %(ressources — parallèle, entre termes)
\newcommand{\extend}[1]{\overline{#1}}
\newcommand{\es}[2]{[#1/#2]}
\newcommand{\s}[2]{\{#1/#2\}}
\newcommand{\tradv}[1]{\ensuremath{{#1}^v}}
\newcommand{\tradn}[1]{\ensuremath{{#1}^n}}
\newcommand{\app}{\triangleleft_\oc} % δbang -> dbang
\newcommand{\appv}{\triangleleft_v} %δbang -> dcbv
\newcommand{\appn}{\triangleleft_n} %δbang -> dcbn
\newcommand{\mcal}[1]{\ensuremath{\mathcal{#1}}}
\newcommand{\setApprox}{\mcal{A}}
\newcommand{\setapprox}{\setApprox}
\newcommand{\taylor}[1]{\ensuremath{\mathcal{T}(#1)}}
\newcommand{\taylorn}[1]{\ensuremath{\mathcal{T}^n(#1)}}
\newcommand{\taylorv}[1]{\ensuremath{\mathcal{T}^v(#1)}}
\newcommand{\taylornf}[1]{\ensuremath{\text{TNF}(#1)}}
\newcommand{\bt}{\ensuremath{{\tt{AT}}}}
\newcommand{\btb}{\ensuremath{{\tt{AT}_\oc}}}
\newcommand{\btn}{\ensuremath{{\tt{AT}_n}}}
\newcommand{\btv}{\ensuremath{{\tt{AT}_v}}}
\newcommand{\smaller}{\ensuremath{\sqsubseteq}}
\newcommand{\fnl}[1]{\textcolor{black}{#1}}
\begin{document}

\maketitle

\begin{abstract}
  Approximation semantics capture the observable behaviour of $\lambda$-terms.
 Böhm Trees and Taylor Expansion are its two central paradigms, related by the
 Commutation Theorem. While these notions are well understood in Call-by-Name (\cbn{}), they have only recently been developed for Call-by-Value (\cbv{}), which motivate the search for
 a unified approximation framework. The \bang{}-calculus provides such a
 framework: it subsumes both \cbn{} and \cbv{} through linear-logic translations and enjoys robust rewriting properties. We develop the approximation semantics
 of \dbang{} (the \bang{}-calculus with explicit substitutions and distant
 reductions) by introducing approximation trees in the Böhm tradition together with Taylor expansion. We establish their
 fundamental properties, including a commutation theorem. Via translations, our results recover the \cbn{} and \cbv{}
 cases within a single unifying framework capturing infinitary and
 resource-sensitive semantics.
\end{abstract}

\section{Introduction}
A central approach to $\lambda$-calculus semantics is program approximation
theory, which captures program behaviour, finitarily or infinitarily, offering a
characterization of ``meaningful''
terms.  

In \emph{Call-by-Name (\cbn{})}, meaningful terms are the \emph{solvable}
ones, i.e. those reducing to identity under some``testing context''. In \emph{Call-by-Value (\cbv{})}, meaningfulness is given by \emph{scrutability}\footnote{Also called potential valuablility}, which only requires reduction to a value and is strictly finer. 
Among approximation techniques, \emph{evaluation trees} and \emph{Taylor Expansions} are the most influential. 
\emph{Böhm trees} introduced by Barendregt ~\cite{barendregt-bohm} are the seminal evaluation trees. They associate to each $\lambda$-term a (possibly infinite) tree whose nodes describe
successive approximations of the term's \emph{head normal form}, or $\bot$ —
canonical term meaning nonsense, or divergence — when none exists, thus capturing its asymptotic \cbn{} behaviour.
 Böhm trees were related to the notion of \emph{solvability} of  \cbn{}  terms by the fact that a term is solvable if and only if its Böhm tree is not $\bot$.

Ehrhard and Regnier later introduced Taylor expansion ~\cite{ehrhard-uniformity-taylor}, inspired by the differential
$\lambda$-calculus~\cite{ehrhard-differential} and relational
semantics~\cite{laird-weighted-rel}.
Taylor expansion unfolds a $\lambda$-term into an (infinite) formal sum of resource terms underlying the linear-use of resources during computations, rather than progressively revealing their shape as Böhm trees do.

The Commutation Theorem~\cite[Corollary
35]{ehrhard-uniformity-taylor} states that normalizing the Taylor expansion of a term yields exactly the Taylor expansion of its Böhm tree. Taylor expansions can then be understood as
a \emph{resource-sensitive} version of Böhm trees.
Originally proved for
the \cbn{} $\lambda$-calculus, this result provides a deep bridge between
infinitary semantics and differential/resource semantics.

While Taylor expansions for the \cbv{} calculus were already studied
in~\cite{alberto-Giulio-CbV-Solvability,ehrhard:Collapsing}, the development of
\cbv{} Böhm trees remained open until Kerinec's PhD
work~\cite{kerinec:bohmTreeCbV,kerinec:phd}. Those Böhm trees have the same
commutation theorem with Taylor expansion as in the \cbn{} case. They also respect
the same relation with scrutability as \cbn{} Böhm trees do with solvability. This
emphasizes that scrutability is the appropriate notion of meaningfulness in \cbv{}.
However, this result is obtained in an alternative version of the original
Plotkin \cbv{}, which is known to have issues due to the $\beta_v$-reduction being
"too weak". Concretely, unlike in \cbn{}, \cbv{} reduction may get stuck: redexes can
be blocked since their argument is in normal form but not a value. This
phenomenon prevents a straightforward infinitary unfolding analogous to the \cbn{}
case.\\

The aforementioned \cbv{} Böhm trees are defined in the
 \emph{$\lambda_v^\sigma$-calculus} from \cite{Carraro14}, where the
 $\beta_v$-reduction is extended with permutation rules, the so-called
 $\sigma$-rules, originating from the translation of $\lambda$-terms to
 proof-nets~\cite{Girard_Lafont_Regnier_1995}. 
Another way to solve the \cbv{} issue, also coming from proof-nets, is using a
distance-based \cbv{} calculus: \emph{distant \cbv{} (\dcbv{})}\footnote{Also called Value
Substitution Calculus} ~\cite{structural-lambda-calculus,
accattoli:LIPIcs.RTA.2012.6, Accattoli_2012}.  In this system, substitutions may
be frozen thanks to \emph{explicit
substitutions}, written $M\es N x$, which do \emph{not} correspond to
(effective) substitutions, but instead represent substitutions that are yet to be
evaluated. The rewriting rules then act \emph{at a distance} with respect to the
explicit substitutions. \\

The divergence between \cbn{} and \cbv{} has historically required separate developments of most semantic notions. 
Such a duplication naturally called for a unifying perspective. \emph{Call-by-Push-Value (CbPV)}~\cite{levy-cbpv} provides precisely this, reconciling typed \cbn{} and \cbv{} within a single calculus structured around a clear distinction between values and computations.
 CbPV later gave rise to its untyped analogue \emph{Bang-calculus (\bang{})}~\cite{ehrhard2016bang}, by use of Linear Logic~\cite{ehrhard-cbpv-ll}. Both \cbn{}
and \cbv{} arise within the \bang{}-calculus via Girard's translations of the
intuitionistic arrow into Linear Logic~\cite{girard-ll}, making it a natural
setting in which to seek a uniform approximation theory. The \bang{}-calculus is
then an extension of the $\lambda$-calculus with two new constructs: $\oc M$
(pronounced ``bang $M$'') which \emph{freezes} the computation of $M$, and
$\der{M}$ which unfreezes it. Both \cbn{} and \cbv{} can then be translated into the
\bang{}-calculus, which simulates their rewriting strategies within a single
rewriting system.

However, the \bang{}-calculus exhibits the same issue as \cbv{}: ill-formed redexes may
block evaluation. One might expect that adapting the $\sigma$-rules to the
\bang{}-calculus would solve this issue; however, while the \cbv{} calculus with
$\sigma$-rules is confluent, this is not the case for the \bang{} calculus
~\cite[Sec.2.3]{ehrhard2016bang}. Confluence can be recovered modulo an equivalence relation
contained in the $\sigma$-equivalence generated by the $\sigma$-rules, but
this requires working modulo said equivalence. Another solution to this problem
is to consider the distance variant \emph{(\dbang{})}~\cite{bucciarelli2023bang}, similar to the distant variant of \cbv{}.
This allows us to work without the
$\sigma$-rules and the aforementioned equivalences. This new system has been
shown to be confluent~\cite{bucciarelli2023bang}. It has also been used in
unifying multiple results for \dcbn{} and \dcbv{}. For instance,
in~\cite{arrial-diligence} the authors show that the rewriting results
(confluence, factorization) of \dbang{} carry over to the \dcbn{} and \dcbv{}
setting using the translations into \dbang{}\footnote{Note that while the
translation of \dcbn{} into \dbang{} is the usual one mentioned before, the authors
use a new translations for \cbv{} which will be discussed in
Section~\ref{sec:translations}}. Similarly, the notion of solvability in both \dcbn{} and \dcbv{} has been captured
through the notion of \emph{meaningfulness} in \dbang{}~\cite{kag24,
arrial-genericity}. However, despite significant progress, the approximation theory of the
\dbang{}-calculus remains largely underdeveloped.

A recent work by Mazza \& Dufour tried to close that gap~\cite{mazza-dufour}: they developed a generic notion of Böhm tree and Taylor expansion for a
language called \proc{}, representing untyped proof structures. They showed how
any language that can be embedded into \proc{} in a ``nice way'' inherits the
notions of Böhm tree and Taylor expansion from \proc{}, and their commutation Theorem. In
particular, \dbang{} admit such an embedding. However, no notion of
meaningfulness exists in \proc{} and it is not clear how to relate the results
from \cbn{} and \cbv{} with the notion developed in~\cite{mazza-dufour}.

\subsection{Contributions}

We develop a theory of approximation of the distance $\bang$-calculus
(\dbang{}). To this end we recall the definitions and main results of \dbang{}
from~\cite{kag24,agk24d,bucciarelli2023bang} in Section~\ref{sec:dbang}. We
start by developing the Taylor expansion of \dbang{}
(Section~\ref{subsec:taylor-dbang}) where we introduce the resource calculus
(Section~\ref{subsubsec:dbang-ressources}) and define the approximation relation
(Section~\ref{subsubsec:approx-taylor}) and establish a simulation result
between \dbang{} and its approximants in the Taylor expansion
(Theorem~\ref{thm:simu_dbang_full}). We next develop the Böhm approximants of
\dbang{} (Section~\ref{subsec:bohm-dbang})\footnote{\fnl{In the rest of the
paper, we will call them \emph{approximation trees} instead, due to subtle
difference compared to standard Böhm trees. These differences will be explained
in Section~\ref{subsec:bohm-dbang}.}} and prove a commutation theorem between
the Taylor Expansion of Böhm trees and the Taylor normal form
(Theorem~\ref{thm:taylor-bohm-commute}).

We establish the soundness of our definitions with regard to the standard notion
of Böhm trees and Taylor Expansions in the \fnl{distance call-by-name (\dcbn{})
and distance call-by-value (\dcbv{})} $\lambda$-calculus by translating these
systems into \dbang{} (Section~\ref{sec:translations}), in particular, we show
that the Böhm trees of a term $M$ in \dcbn{} (respectively \dcbv{}) are the same
as its translation into \dbang{} (Theorem~\ref{thm:bohm-trad}). We show a
similar result for Taylor expansion (Lemmas~\ref{lem:taylor_trad_dcbn_dcbv} and
Corollaries~\ref{cor:tradn_tnf-tradv-tnf}).

\fnl{
Finally, we establish the commutation result for both \dcbn{} and \dcbv{} as a
consequence of the \dbang{} commutation
theorem~\ref{commutation-thm-translations}.}

\iflong Proofs are available in the Appendix. \fi

\section{The (distance) Bang-calculus}
\label{sec:dbang}
We begin by recalling the theory of \dbang{}, first
with some results from previous
studies~\cite{kag24,agk24d,bucciarelli2023bang}, before developing its approximation theory via Taylor expansion and approximation trees.

% \subsection{Definition of the calculus \dbang}

\begin{definition}{(\dbang: terms and contexts)}~
	\label{def:dbang-terms-and-ctxs}

	\begin{tabular}{ll}
		(Terms) & $M,N: = x\mid MN\mid \lambda xM \mid \oc M \mid \der M \mid M\es Nx$ \\
		
		(List contexts) & $L:= \square \mid L[M/x]$ \\

		(Surface contexts) & $S:= \square \mid SM \mid MS \mid \lambda x S \mid
		\der S \mid S[M/x] \mid	M\es Sx$\\

		(Full contexts) &  $F:= \square \mid FM \mid MF \mid \lambda x F \mid \der F \mid F\es Mx \mid
		M\es Fx \mid \oc F$ \\

	\end{tabular}
\end{definition}
Terms include the standard $\lambda$-calculus constructs: variable (ranging over a countably infinite set), application and abstraction. Two additional ones are used to define the $\bang{}$-calculus: the \emph{bang} (or \emph{exponential}) $\oc M$,
representing delayed evaluation of the subterm
$M$, and \emph{dereliction} $\der M$, which reactivates it.
Finally, \emph{explicit substitutions} $M \es N x$ comes from the "at distance" mechanism, and represents 
pending substitutions.
The lambda-abstraction (as usual) and the explicit substitution
bind the variable $x$ in $M$. We use \emph{contexts}, \emph{i.e.} terms with a subterm hole
($\square$) and  we denote by $C\langle M \rangle$ the term
obtained by filling the hole in $C$ by $M$. We distinguish \emph{list, surface} and \emph{full} contexts. List
contexts are sequences of explicit substitutions and will be used for the reduction
at a distance. Surface and full contexts determine whether reduction under a
$\oc$ is allowed (in particular it is forbidden in \emph{weak}
calculi (\cbn{} or \cbv{})).

The \dbang{} calculus has three reduction rules: 
\[
L\<\lambda x M\>N \todb L\<M\es Nx\>
\hfill
M\es{L\<\oc N\>}x \todb L\<M\{N/x\}\>
\hfill
\der{L\<\oc M\>} \todb L\<M\>
\]
Notice that requiring subterms of the form $\oc M$ assigns them \emph{value} status in the \cbv{} sense.

We write $\todbs$ and $\todbf$ for the closures under surface and full contexts, respectively; both are confluent~\cite[Theorem 1]{kag24}.
We also denote by $\to_i$ the internal reductions (\emph{i.e.} $\to_i
= \todbf\backslash\todbs$, these are the reductions occurring under a
$\oc$-construct).	

\begin{example}
	\begin{align*}
	&(\lambda x xy)[\oc z/y]\oc\oc((\lambda ww)\oc N) \\
	&\todbs (xy)[\oc\oc((\lambda ww)\oc N)/x][\oc z/y] \\
	&\todbs \oc((\lambda ww)\oc N)y[\oc z/y] \\
	&\todbs \oc((\lambda ww)\oc N)z \\
	&\todbs \oc(w[\oc N/w])z \\
	&\to_i \ \ \oc Nz
	\end{align*}
\end{example}

\begin{example}~\label{ex:terms}
	\begin{itemize}
		\item $\Delta=\lambda x (x\oc x), \Omega =\Delta\oc\Delta$. We
			have $\Omega\todbs^2 \Omega$.
		\item $Y^n_x=(\lambda y x\oc(y\oc y))\oc (\lambda y x\oc(y\oc
			y))$. We have $Y^n_x\todbs^+ x\oc Y^n_x$
		\item $Y^v_x=(\lambda y x(y\oc y))\oc (\lambda y x(y\oc y))$. We
			have $Y^v_x\todbs^+x Y^v_x$
	\end{itemize}
	The upper scripts in the two last items, as we shall see further,
	represent the fact that $Y^n_x$ and $Y^v_x$ correspond respectively to
	the \cbn{} and \cbv{} translations from a fixpoint combinator $Y$ of standard
	$\lambda$-calculus\footnote{
		Note that $Y^v_x$ lacks some good properties if one would want to use it for
		computations in \cbv{}, and \cbv{} fixpoints are usually defined differently. We
		nevertheless keep $Y^v_x$ as a running example, so as to illustrate an
		infinite computation, with a fixpoint behaviour, but with empty semantics.
	}.
\end{example}

\subsection{Meaningfulness in \dbang}

This section focus on \emph{meaningfulness}, later related to approximation theory in Section~\ref{sec:meaningfulness}: intuitively, a term is meaningful if it reduces to a desired result under some \emph{testing context}.
 It generalizes the notions of \emph{solvability} \cbn{} and \emph{scrutability} \cbv{}, and has been studied in detail for \dbang\
in \cite{kag24}.

\begin{definition}\label{def:dbang_meaningful}
	A term $M$ of $\dbang$ is said \emph{meaningful} if there exists a
	testing context $T:= \square \mid TM \mid (\lambda x T)M$ and a term $P$ such that $T\<M\>\todbs^*\oc P$.
\end{definition}

Notice that the surface reduction involved in meaningfulness is not
restrictive: for any $N$, $N\todbf^*\oc P$, then there is some $P'$
such that $N\todbs^*\oc P'$ (this follows from a standardisation
property, see Corollary~\ref{cor:standardisation_dbang} below). It has been shown in
\cite{kag24} that the smallest theory that
identifies all meaningless terms is consistent; that meaningful terms and surface-normalizing
terms can be characterized by an intersection type
systems and finally that meaningless subterms
do not affect the operational meaning of a given term~\cite[Proposition 8, Theorem 24 and Corollary
11]{kag24}.

A natural property of surface reduction is that it determines the external shape
of a term. In other words, allowing full reduction does not unlock external
redexes. \fnl{Indeed, with terms like \emph{e.g.} $\lambda x\oc M$, $y\oc M$, $\oc M\oc M$,
reductions occurring contextually inside $M$ cannot make the full term reducible
at the surface (the syntax of such term will be precisely characterized
further).} This is expressed as the following factorization  proposition:

\begin{proposition}\label{prop:standardisation_dbang}(\cite{agk24d}, Corollary
	21)
	Let $M\todbf^* N$. There is some $P$ such that $M\todbs^* P\to_i^* N$.
\end{proposition}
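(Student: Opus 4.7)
\medskip

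\noindent\textbf{Proof plan.} The strategy is a standard postponement argument: push every internal step past the surface steps that follow it and iterate. Concretely, I would prove the local swap lemma
\[
\to_i \cdot \todbs \ \subseteq \ \todbs \cdot \todbf^*,
\]
and then derive the global factorization $M \todbs^* P \to_i^* N$ by induction on the length of $M \todbf^* N$, using a lexicographic measure such as the pair consisting of the length of the sequence together with the position of the leftmost internal step, in order to ensure termination of the rearrangement.

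\medskip

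\noindent\textbf{Case analysis for the local swap.} Fix a peak $M \to_i M' \todbs N$ with internal redex $R_i$ (occurring under some $\oc$-box) and surface redex $R_s$ (occurring outside every $\oc$-box). Since every position strictly inside $R_i$ is itself under a $\oc$, $R_s$ cannot be strictly inside $R_i$, so either the two redexes are disjoint or $R_i$ lies inside $R_s$. The disjoint case commutes trivially. For $R_i$ inside $R_s$, I would split on the rule fired at $R_s$: a $\beta$-step $L\<\lambda x P\>Q \todb L\<P\es Q x\>$ merely relocates the subterm hosting $R_i$ into the fresh explicit substitution, so $R_i$ survives as an internal redex and can be fired after the surface step; an exponential substitution $P\es{L\<\oc Q\>}x \todb L\<P\s Q x\>$ with $R_i$ in $Q$ duplicates or erases $R_i$ according to the occurrences of $x$ in $P$, yielding a swap of the form $\todbs \cdot \to_i^k$ for some $k \geq 0$; a dereliction $\der{L\<\oc Q\>} \todb L\<Q\>$ with $R_i$ in $Q$ \emph{exposes} the former internal redex as a surface one, since a list of explicit substitutions is itself a surface context, producing $\todbs \cdot \todbs$ after the swap---this only strengthens the conclusion, as the exposed step then joins the surface prefix.

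\medskip

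\noindent\textbf{Main obstacle and alternative.} The delicate point is the duplication in the exponential subcase: it forbids a clean one-step diamond and forces the postponement to generate an unbounded number of internal steps per surface step, which is exactly what the lexicographic measure is designed to handle. A cleaner alternative, which I would likely adopt for the actual write-up, is to introduce parallel versions of surface and internal reduction that contract any antichain of redexes simultaneously, and to prove the sharp diamond in which parallel-internal followed by parallel-surface is contained in parallel-surface followed by parallel-internal; duplication is then absorbed in a single parallel step, and the factorization follows from a single outer induction on the length of the full-reduction sequence.
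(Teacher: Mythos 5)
You should first note that the paper does not prove this proposition at all: it is imported verbatim from \cite{agk24d} (Corollary 21). So your attempt has to stand on its own as a factorization proof, and as written it does not close.

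Your local analysis is essentially sound: internal steps cannot create surface redexes (surface redex patterns only involve constructors at surface positions, which a step under a $\oc$ cannot touch), so every peak $\to_i \cdot \todbs$ can be reorganized with the surface step first, giving $\to_i \cdot \todbs \subseteq \todbs \cdot \todbf^*$. The gaps are in both devices you propose for globalizing this. First, the lexicographic measure ``(length of the sequence, position of the leftmost internal step)'' does not decrease: in the exponential subcase one swap replaces two steps by $1+k$ steps, so the first component strictly \emph{increases} whenever $k\geq 2$; saying duplication ``is exactly what the lexicographic measure is designed to handle'' asserts the crux instead of proving it, and postponement-with-duplication is precisely the situation where length-based inductions are known to fail. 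Second, the ``cleaner alternative'' you say you would actually adopt --- the sharp diamond $\Rightarrow_i \cdot\, \Rightarrow_s \ \subseteq\ \Rightarrow_s \cdot\, \Rightarrow_i$ for parallel reductions --- is false in \dbang{}, because of the exposure phenomenon you yourself noticed for dereliction (and which also occurs in the exponential case, where your claim that the residuals form $\to_i^k$ is incorrect: the rule $M\es{L\<\oc N\>}x \todb L\<M\{N/x\}\>$ pulls $N$ out of its bang, so residuals landing at surface occurrences of $x$ are \emph{surface} redexes). Concretely, take $M=\der{\oc\der{\oc z}}$: then $M \to_i \der{\oc z} \todbs z$, but the only non-trivial parallel surface step from $M$ yields $\der{\oc z}$, whose unique redex is now at the surface, so no parallel \emph{internal} step can complete the square to $z$. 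The swap that actually holds is only $\Rightarrow_i \cdot\, \Rightarrow_s \ \subseteq\ \Rightarrow_s \cdot\, \Rightarrow$ with a \emph{full} parallel step on the right, and turning that into $\todbf^* \subseteq\ \todbs^* \cdot \to_i^*$ requires an ingredient your sketch does not supply: either a genuinely decreasing measure on parallel steps, or the merge/split (macro-step) machinery of Accattoli, Faggian and Guerrieri, which is the route the cited literature takes. Until one of these is in place, your argument establishes the local swap but not the proposition.
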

Conversely, internal reductions preserve the \emph{external shape}.  We express this notion with multi-holes
surface contexts, that let us reformulate the factorization property as follows:

\begin{corollary}[Standardization]\label{cor:standardisation_dbang}
	Let $S^+$ denote multi-holes surface contexts and given by the syntax:
	$
		S^+:= \square\mid M\mid	S^+S^+\mid S^+[S^+/x]\mid \der{S^+}\mid \lambda xS^+
	$.
	For any reduction $M\todbf^* P$, there are some terms $N_i$ and a
	multi-hole surface context $S$ such that $P=S\<\oc N_1,\dots ,\oc N_k\>$ and
	$
		M\todbs^*S\<\oc N_1',\dots ,\oc N_k'\>\to_i^*S\<\oc N_1,\dots ,\oc N_k\>
	$
\end{corollary}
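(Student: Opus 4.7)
The argument combines the factorization of Proposition~\ref{prop:standardisation_dbang} with a structural observation about internal reductions. Applying the proposition to the given reduction $M \todbf^* P$ produces an intermediate term $P'$ with $M \todbs^* P' \to_i^* P$. It then remains to reorganize this data so that a single multi-hole surface context $S$ exposes matching bangs in both $P'$ and $P$.

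The key claim is that a single internal step preserves the outermost bang skeleton. Since $\to_i$ fires strictly under a bang, every step $Q \to_i Q''$ factors through a single-hole surface context $C$ and a bang-subterm $\oc A$ such that $Q = C\langle \oc A\rangle$ and $Q'' = C\langle \oc A''\rangle$, with $A \todbf A''$ via one of the three reduction rules. In particular, the outermost occurrences of $\oc$ in $Q$ and $Q''$ lie at the same positions. Defining, for any term $Q$, its canonical maximal multi-hole surface decomposition $S_Q\langle \oc A_1, \ldots, \oc A_k\rangle$ by induction on $Q$ (a hole at every surface-reachable occurrence of $\oc$, leaf terms elsewhere), this positional coincidence gives $S_Q = S_{Q''}$, with the bang-arguments differing at exactly one coordinate.

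Iterating along $P' \to_i^* P$ then shows that $P'$ and $P$ share a common multi-hole surface context $S$, and that the corresponding bang-arguments satisfy $N_i' \todbf^* N_i$ for each $i$ (via reductions that take place entirely under a bang). Reinserting these reduction sequences in the holes of $S$ yields $S\langle \oc N_1', \ldots, \oc N_k'\rangle \to_i^* S\langle \oc N_1, \ldots, \oc N_k\rangle$, and combining with the surface prefix $M \todbs^* P'$ delivers the claimed factorization.

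The main obstacle is formalizing the preservation of the outermost bang skeleton under $\to_i$. One has to verify, by induction on the derivation of the step (which also involves the list context $L$ appearing in each rule), that no internal step can either create or destroy a top-level occurrence of $\oc$, because the rewriting is confined to the subterm under an ambient bang. Once this structural invariant is in hand, the remaining work is essentially bookkeeping: defining the canonical maximal decomposition and iterating across the length of the internal reduction.
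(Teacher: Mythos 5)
Your proposal is correct and follows essentially the same route as the paper, which presents this corollary as a direct reformulation of Proposition~\ref{prop:standardisation_dbang}: first factorize $M\todbs^* P'\to_i^* P$, then observe that internal steps, being confined under a surface-reachable $\oc$, cannot create or destroy the outermost bang skeleton, so $P'$ and $P$ share a common multi-hole surface context. Your explicit formalization of that invariant (the canonical maximal surface decomposition and its preservation step by step) is exactly the bookkeeping the paper leaves implicit.
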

If $k=0$,  the context has no hole and the reduction occurs only at surface level.

%%%%%%%%%%%%%%%%%%%%%%%%%%%%%%%%%%%
%	    TAYLOR APPROXIMANTS
%%%%%%%%%%%%%%%%%%%%%%%%%%%%%%%%%%

\subsection{Taylor expansion}
\label{subsec:taylor-dbang}

We define the Taylor expansion of \dbang{}, starting with the associated resource calculus \dbangres.
This calculus will also serve as the resource language for \dcbn{} and \dcbv{} in
Section~\ref{sec:translations}.	There is no necessity to define specific resource calculi, as
$\dbangres$ fits well as a target 
of usual Taylor expansion (\cbn{}~\cite{ER08} and
cbv{}~\cite{ehrhard:Collapsing}), with a straightforward
adaptation to distant setting.

\subsubsection{\dbang{}: resources}
\label{subsubsec:dbang-ressources}

\begin{definition}[Resource calculus \dbangres]~
	\label{def:dbang-taylor-terms-and-ctxs}

	\begin{tabular}{ll}
	(terms) & $m,n: = x\mid mn\mid \lambda xm \mid  \der m \mid m\es nx\mid
	[m_1,\dots ,m_k]$ \\

	(lists) & $l:= \square \mid l[m/x]$ \\

	(surface) & $s:= \square \mid sm \mid ms \mid \lambda x s \mid
	\der s \mid s[m/x] \mid	m\es sx$\\
	
	(full) & $f:= \square \mid fm \mid mf \mid \lambda x f \mid
	\der f \mid f[m/x] \mid	m\es fx\mid [f,m_1,\dots ,m_k]$\\

	(tests)& $t:= \square \mid tm \mid (\lambda x t)m$
	\end{tabular}
\end{definition}

\paragraph*{Notations (resource terms, multisets/bags, multilinear substitutions)}
Terms $[m_1,\dots ,m_k]$ for $k\in\mathbb N$, called \emph{bags}, denote finite
multisets of resource terms, with $[~]$ the empty bag. When necessary, we use a subscript as $[m]_k$ or
$[m,\dots ,m]_k$ in order to make explicit the number of occurrences.

Contexts are exactly
as in \dbang{}
(Definition~\ref{def:dbang-terms-and-ctxs}), with bags replacing exponentials.
We write $P_k$ for the sets of
permutations of the set $\{1,\dots ,k\}$, and we denote as $d_x(m)$ the number of
free occurrences of the variable $x$ in $m$. Resource substitution is
(multi-)linear, \fnl{and ordered}: when we write
$m\{n_1/x_1,\dots ,n_k/x_k\}$, it is always intended (in the resource setting) 
that $x_i$ represents the $i$-th free occurrence of $x$ in $m$. In that way,
each term $n_i$ is substituted exactly once in $m$.

We are now ready to define the reduction relation.
\begin{definition}~
	\label{def:dbang-taylor-reductions}
	
	The reduction relation $\Todbr\subseteq\dbangres\times\mathbb
	\wp(\dbangres)$ is
	then defined as follows:
	\begin{itemize}
		\item $l\<\lambda x m\>n \Todbr \{l\<m\es nx\>\}$
		\item $m\es{l\<[n_1,\dots ,n_k]\>}x \Todbr 
		\left\{\begin{array}{ll}
			\displaystyle{\bigcup_{\sigma\in
			P_k}}l\<m\{n_{\sigma(1)}/x_1,\dots ,n_{\sigma(k)}/x_k\}\> & \text{ if }
			k=d_x(m) \\
			\emptyset & \text{ otherwise}
		\end{array}\right.$
		\item $\der{l\<[m_1,\dots ,m_k]\>} \Todbr \{l\<m_1\>\}$ if $k=1$ and
			$\emptyset$ otherwise
	\end{itemize}
\end{definition}

We write $m\todbr n$ as soon as $m\Todbr X$ and $n\in X$ for some $n$ (if
$m\Todbr \emptyset$, we also abusively write $m\todbr \emptyset$, and we add the
following equation: if $f\<\emptyset\>=\emptyset$ for any full context $f$). We
also define $\todbrs$ and $\todbrf$ the contextual closures of $\todbr$ under
surface and full contexts, respectively. Notice that none of these reductions is
deterministic.
Both $\todbrs$ and $\todbrf$ are strongly normalizing, which is an immediate
consequence of linearity: the size of bags of resource terms is decreasing. Confluence of $\todbrs$ and $\todbrf$ follows from standard resource-calculus arguments and from the proofs for
$\todbs$ and $\todbf$. 

\subsubsection{Approximation}
\label{subsubsec:approx-taylor}

In Figure~\ref{fig:approx_dbang}, defines the relation
$\app\subseteq\dbangres\times\dbang$, where $m\app M$ means that
$m$ is a multilinear resource approximation on $M$.

\begin{figure}
\begin{center}
	\AxiomC{}\UnaryInfC{$x\app x$}\DisplayProof
	\qquad
	\AxiomC{$m\app M$}\UnaryInfC{$\lambda xm\app\lambda xM$}\DisplayProof
	\qquad
	\AxiomC{$m\app M$}\UnaryInfC{$\der m\app\der M$}\DisplayProof
	\qquad
	\AxiomC{$m\app M$}\AxiomC{$n\app N$}\BinaryInfC{$mn\app MN$}\DisplayProof
	
	\vspace{1em}
	\AxiomC{$m\app M$}\AxiomC{$n\app N$}\BinaryInfC{$m[n/x]\app M[N/x]$}\DisplayProof
	\qquad
	\AxiomC{$m_1\app M$}\AxiomC{\dots }\AxiomC{$m_k\app
	M$}\RightLabel{$k\in\mathbb N$}\TrinaryInfC{$[m_1,\dots ,m_k]\app \oc M$}\DisplayProof
	\caption{Resource approximation for \dbang}
	\label{fig:approx_dbang}
\end{center}
\end{figure}

We extend this definition to list contexts as follows: 
$\square\app\square$, $l[m/x]\app L[M/x]$ if $l\app L$ and $m\app M$. The
extension to surface and full contexts follows analogously.

The expected behaviour of context approximation is given by the following lemma (proved by induction on contexts):

\begin{lemma}~
	\label{lem:approx_contexts}
	\begin{itemize}
		\item If $m\app L\<N\>$, then there exist $l\app L$ and $n\app
			N$ such that $m=l\<n\>$.
		\item If $m\app S\<N\>$, then there exist $s\app S$ and $n\app
			N$ such that $m=s\<n\>$.
	\end{itemize}
\end{lemma}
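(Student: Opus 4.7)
The plan is to prove each item by structural induction on the context, exploiting the fact that the approximation relation $\app$ defined in Figure~\ref{fig:approx_dbang} is syntax-directed on its right-hand side: every rule is determined by the top-level constructor of the \dbang{}-term, so $\app$ admits immediate inversion. Crucially, the only rule whose conclusion is a bag $[\ldots]\app\oc M$ is the bag rule, but bangs never appear at the position of a hole in a list or surface context; so in both statements we only ever invert through the ``simple'' rules where the resource approximant is uniquely shaped.

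For the list case, the base $L=\square$ gives $L\<N\>=N$, and it suffices to take $l:=\square$ and $n:=m$. For the inductive case $L=L'[M/x]$ we have $L\<N\>=L'\<N\>\es Mx$. Inversion of $\app$ at the explicit-substitution constructor forces $m=m'[m''/x]$ with $m'\app L'\<N\>$ and $m''\app M$. Applying the induction hypothesis to $m'\app L'\<N\>$ yields $l'\app L'$ and $n\app N$ with $m'=l'\<n\>$. Setting $l:=l'[m''/x]$ gives $l\app L'[M/x]=L$ (by the list-context extension of $\app$) and $m=l\<n\>$, as desired.

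The surface case is treated identically with one sub-case per production of the grammar of $S$ from Definition~\ref{def:dbang-terms-and-ctxs}. The base case $\square$ is the same as above. In each non-trivial case, inversion of the unique $\app$-rule whose conclusion matches the top constructor of $S\<N\>$ identifies the sub-resource-term approximating $S'\<N\>$; the induction hypothesis on $S'$ provides a decomposition $s'\app S'$ and $n\app N$; and we reassemble $s$ by placing the hole at the appropriate sub-term position, using the matching production in the grammar of resource surface contexts from Definition~\ref{def:dbang-taylor-terms-and-ctxs}. The argument is mostly bookkeeping — the main (and only) thing to verify carefully is that the inversion of $\app$ is always legitimate, which is clear since the six rules in Figure~\ref{fig:approx_dbang} cover pairwise-distinct top constructors (variable, abstraction, dereliction, application, explicit substitution, bang), and list and surface contexts never expose a bang at the hole.
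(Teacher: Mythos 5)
Your proof is correct and follows exactly the paper's (unexpanded) argument: a structural induction on the context, using that $\app$ is invertible because its rules are discriminated by the top-level constructor, and that list/surface contexts never place the hole under a $\oc$ — which is precisely why the statement holds here and fails for full contexts. The paper only states ``proved by induction on contexts,'' so your write-up is a faithful elaboration of the intended proof.
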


Notice that this property fails for full contexts. \fnl{Indeed, consider
	$t=[]\app\oc M=F\<M\>$, there is no $m\app M$ or $f\app F$ such that $m\app M$ and
	$t=f\<m\>$.
}
Consequently, our
definitions do not provide a convenient notion of approximation between
full contexts. This is due to a need for parallel treatment of bags, which - as we shall see later - is incompatible with single-hole contexts.

\begin{definition}{(Taylor expansion)}~
	\label{def:dbang-taylor-expansion}
	For any $M\in\dbang$, we define its Taylor expansion as the set of
	its resource approximants:
	\[
		\taylor M=\{m\in\dbangres\mid m\app M\}
	\]
\end{definition}

By strong normalization of $\dbangres$, we can define the normal form $\nf m$ of a resource term $m$ as the finite set made of its full
reducts. \fnl{Remember that the resource reduction we consider is not
	deterministic ($\Todbr$ is, but not $\todbrf$), hence this
set can contain multiple terms, and can be empty.} We then define the \emph{Taylor normal
form} of \dbang{} terms as $\taylornf M =\bigcup_{m\app M}\nf m$. Notice that
Taylor normal form is made of full normal terms, not only surface normal forms.

\begin{remark}
	\fnl{If $\taylor M\Todbr^* X$ and $X$ is a set of full normal terms, than
		$X=\taylornf M$. But the converse is not true: we can have
		$\taylornf M$ non empty, but no $k$ such that $M\Todbr^k
		\taylornf M$. This is due to the fact that terms in $\taylor M$
		can require different number of reduction steps, as with Taylor
		expansion of fixpoint combinators where this number is unbounded,
		as illustrated in Example~\ref{ex:terms_app}.
	}
\end{remark}

\begin{example}\label{ex:terms_app} Consider the terms given in Example~\ref{ex:terms}.
	\begin{itemize}
		\item An approximant of $m\app\Omega$ must be of shape $(\lambda x
			x[x]_k)[\lambda x x[x]_{k_1},\dots ,\lambda x x[x]_{k_l}]$.
			Now, if $k=l-1$ (otherwise $m\todb \emptyset$) then $m\todbrs (\lambda x
			x[x]_{k_1})[\lambda x x[x]_{k_2},\dots ,\lambda x
			x[x]_{k_l}]$\footnote{
				We consider here one possible reduction, any
				element of the bag could be substituted to the
				inner head variable $x$, not necessarily $\lambda
				x[x]_{k_1}$, the argument is valid for all the
				reduction paths.
			}, which is again an approximant of $\Omega$. But we can
			observe that the cardinality of the bag reduces during
			this reduction; hence if we iterate this reduction, we
			eventually reach a term like $(\lambda
			x[x]_k)[]\todbrs\emptyset$ (if an empty reduction has
			not occurred before). So, $\taylornf\Omega=\emptyset$.
		\item Similarly, if $m\app Y^n_x$, we verify easily that
			$m\todbrs x[n_1,\dots ,n_k]$, with $n_i\app Y^n_x$. In
			particular, $x[]\app Y^n_x$ and is in normal form.
			Actually, $\taylornf{Y^n_x}$ can be characterized
			inductively : $x[]\in\taylornf{Y^n_x}$, and if
			$n_1,\dots ,n_k\in \taylornf{Y^n_x}$, then
			$x[n_1,\dots ,n_k]\in\taylornf{Y^n_x}$, for any $k$.
		\item The other fixpoint term, $Y^v_x$, behaves slightly
			differently: if $m\app Y^v_x$, then we verify
			$m\todbrs^* xn$, for some $n\app Y^v_x$. But here,
			because of the argument not being in a bag, all
			approximants reduce (if not $\emptyset$) to some term
			$xxx\dots  xn$, but are not in normal form; since such a
			reduction terminates, we observe that
			$\taylornf{Y^v_x}=\emptyset$.
	\end{itemize}

\end{example}
\begin{remark}\label{rem:clashes}[Clashes and normal forms]
	Note that we have not excluded so-called \emph{clashes} from our calculus,
	as they cause no issue in our setting. Thus, terms such $\der{\lambda xx}$ are considered as
	regular normal forms. This approach is similar to the one of Dufour and
	Mazza~\cite{mazza-dufour}. However, giving an empty semantics to clashes could
	easily be done, for example, adding reductions such as $\der{\lambda
	xm}\to\emptyset$ to resource calculus.
\end{remark}

\subsubsection{Simulation}\
We now formalize the simulation of $\dbang$ reduction in the approximants of $\dbangres$, as sketched in the
Example~\ref{ex:terms_app}.

\begin{lemma}[Substitution Lemma for Taylor expansion]
	\label{lem:taylor:substitution-lemma}
	For any $M,N$ of $\dbang$, for any $m,n_1,\dots ,n_{\deg xm}$ of $\dbangres$,
	we have $m\{n_1/x_1\dots n_{\deg xm}/x_{\deg xm}\}\app M\{N/x\}$ if and only
	if $m\app M$, and $n_i\app N_i$ for all $i\leq\deg xm$.
\end{lemma}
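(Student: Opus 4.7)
The plan is to prove both directions by structural induction, exploiting the fact that resource substitution is multilinear and therefore distributes bijectively across the free occurrences of $x$ in $m$.

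For the ``if'' direction I would induct on the derivation of $m\app M$ from Figure~\ref{fig:approx_dbang}. The variable case $x\app x$ is immediate, since $x\{n_1/x_1\}=n_1$ and $M\{N/x\}=N$, so the conclusion follows from $n_1\app N$; the case of a variable $y\app y$ with $y\neq x$ involves no substitution. For the binary constructors (application and explicit substitution) and the unary ones (abstraction and dereliction), I would split the substitution family $(n_i)_{i\leq \deg_x(m)}$ according to which immediate subterm of $m$ each free occurrence of $x$ lives in, apply the induction hypothesis componentwise, and reassemble with the matching approximation rule.

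The main obstacle is the bag case. If $m=[m_1,\dots,m_l]\app \oc M'$, each $m_j\app M'$ and the free $x$-occurrences distribute over the $l$ components. I would partition $(n_i)_{i\leq \deg_x(m)}$ into $l$ sub-families of respective sizes $\deg_x(m_1),\dots,\deg_x(m_l)$, apply the IH to each $m_j$ separately to get $m_j\{\vec n^{\,j}/\vec x\}\app M'\{N/x\}$, and then combine the resulting resource terms into a bag using the rule for $\oc$. This last step relies on the definitional equality $[m_1,\dots,m_l]\{\vec n/\vec x\}=[m_1\{\vec n^{\,1}/\vec x\},\dots,m_l\{\vec n^{\,l}/\vec x\}]$ coming from multilinear substitution, and on the fact that the particular partition chosen is immaterial because every component approximates the same $M'$.

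For the converse direction I would induct on the structure of $M$ and exploit that the approximation relation outside of bags is essentially structure-preserving, a property extracted from Lemma~\ref{lem:approx_contexts}: the head constructor of $M$ forces the head constructor of any approximant, so writing $m$ with the same shape as $M$ and invoking the IH on immediate subterms inverts the argument above. The bag case $M=\oc M'$ is again delicate: an approximant of $\oc M'\{N/x\}$ is a bag of approximants of $M'\{N/x\}$, each of which must be decomposed by the IH into a substitution of some approximant of $M'$; the resulting partitions must then be re-aggregated into a single enumeration compatible with the global count $\deg_x([m_1,\dots,m_l])=\sum_j \deg_x(m_j)$, which yields the required decomposition.
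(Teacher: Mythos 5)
Your proof is correct, and there is in fact nothing in the paper to diverge from: Lemma~\ref{lem:taylor:substitution-lemma} is stated and then used for the simulation results, but no proof of it appears in the appendix, so the ``straightforward induction'' you carry out---the forward direction by induction on the derivation of $m\app M$, distributing the family $(n_i)_i$ over the free occurrences of $x$ in the immediate subterms, with the bag case resting on multilinearity of resource substitution; the converse by induction on $M$ with rule inversion---is exactly the argument the authors leave implicit. Two small points: your appeal to Lemma~\ref{lem:approx_contexts} for structure-preservation is not quite the right citation (that lemma concerns list and surface contexts); what you actually need is inversion of the rules of Figure~\ref{fig:approx_dbang}, which is immediate. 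Also, ``$n_i\app N_i$'' in the statement is a typo for $n_i\app N$, which you silently and correctly repair.

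One substantive observation, to your credit: your converse direction proves the \emph{decomposition} form---every $p\app M\{N/x\}$ can be written as $p=m\{n_1/x_1,\dots,n_{d_x(m)}/x_{d_x(m)}\}$ for some $m\app M$ and $n_i\app N$---rather than the literal ``only if'' of the statement, which is false as written: take $M=x$, $N=\lambda y\,y$ and $m=\lambda y\,y$, so that the substitution is vacuous; then $m\app M\{N/x\}$ holds, yet $m\not\app x$. The decomposition reading is the one the paper actually needs (it is what the pull-back Lemma~\ref{lem:simu_dbr_bis} requires in its explicit-substitution case, and it is what makes $\taylor{M\{N/x\}}$ equal to the set of substitution instances of approximants), so your implicit reformulation strengthens rather than weakens the contribution. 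The only detail worth spelling out in the bag cases of both directions is the renumbering of occurrence indices when concatenating or splitting the families $(n_i)_i$ across the components $m_1,\dots,m_l$; since every $n_i$ approximates the same $N$, this bookkeeping is harmless, exactly as you note.
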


We can show, by a straightforward induction, with the help of the Substitution
Lemma stated above, that the surface reduction acts
exactly the same way in a \dbang{} term and in its approximants: 

\begin{lemma}\label{lem:simu_dbr_n}
	Let $m\app M$ and $m\todbrs n$. Then there is
	$N\in\dbang$ such that $M\todbs N$ and $n\app N$.
\end{lemma}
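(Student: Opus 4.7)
The plan is to proceed by induction on the resource surface context $s$ in which the redex fires, writing $m = s\<r\>$ where $r$ reduces at the root via $\Todbr$ and $n = s\<r'\>$ for some $r'$ in the image of $r$.

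First I would establish a mild converse to Lemma~\ref{lem:approx_contexts}: if $m = s\<r\> \app M$ with $s$ a surface resource context, then $M$ decomposes as $S\<R\>$ with $s \app S$ and $r \app R$. This is a straightforward induction on $s$, reading the approximation rules of Figure~\ref{fig:approx_dbang} in reverse; it works cleanly because those rules are syntax-directed and surface contexts never descend under a bang/bag, so there is no mismatch between the parallel bag-rule and the single-hole context constructors. The analogous statement for list contexts is obtained in the same way. With this inversion in hand, the contextual-closure step of the induction becomes immediate: the induction hypothesis supplies the reduction inside $R$, and one simply wraps it back through $S$ to obtain the desired $\todbs$-step in $M$.

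It then suffices to treat the three root-redex cases. For the $\beta$-redex $r = l\<\lambda x\,m_0\>n_0$, inversion gives $R = L\<\lambda x\,M_0\>N_0$ with $l \app L$, $m_0 \app M_0$, $n_0 \app N_0$, so $R \todbs L\<M_0\es{N_0}{x}\>$ and the new approximant $l\<m_0\es{n_0}{x}\>$ clearly approximates this term. For dereliction $\der{l\<[m_1]\>}$ (the only case where the image of $\Todbr$ is nonempty), inversion yields $R = \der{L\<\oc M_1\>}$ with $m_1 \app M_1$, and then $R \todbs L\<M_1\>$ suffices. For the substitution redex $r = m_0\es{l\<[n_1,\ldots,n_k]\>}{x}$, the existence of a reduct $n$ forces $k = \deg{x}{m_0}$; inversion produces $R = M_0\es{L\<\oc N_0\>}{x}$ with $l \app L$, $m_0 \app M_0$ and each $n_i \app N_0$, so that $R \todbs L\<M_0\{N_0/x\}\>$.

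The main obstacle is precisely this substitution case: one must check that the arity condition $k = \deg{x}{m_0}$ (which is forced by the success of the resource reduction) matches the multilinear substitution pattern used in $\app$, and that the choice of permutation $\sigma$ is irrelevant for the approximation target. Both points are exactly what the Substitution Lemma (Lemma~\ref{lem:taylor:substitution-lemma}) is designed to handle: applying it in the direction ``factorization of an approximant of a substitution'' gives $l\<m_0\{n_{\sigma(1)}/x_1,\ldots,n_{\sigma(k)}/x_k\}\> \app L\<M_0\{N_0/x\}\>$ uniformly in $\sigma$, which covers every element of the $\Todbr$-image and completes the simulation.
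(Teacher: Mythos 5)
Your proof is correct and follows essentially the same route as the paper's: induction on the surface context, with the three root-redex cases discharged via the Substitution Lemma (Lemma~\ref{lem:taylor:substitution-lemma}), exactly as in the paper's ``straightforward induction'' and its mirror-image appendix proofs (Lemmas~\ref{lem:simu_dbr_one} and~\ref{lem:simu_dbr}). Your explicit inversion of Lemma~\ref{lem:approx_contexts} (recovering a decomposition of $M$ from the decomposition of $m$) is the step the paper leaves implicit, and it is valid for precisely the reason you give: surface contexts never cross a bag, so the approximation rules remain syntax-directed along the hole path.
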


However, this result is false when considering reduction in full
contexts: take the following term $m=[(\lambda xx)y,(\lambda xx)y]$.
We have $m\app\oc ((\lambda xx)y)$ and $m\todbrf n= [x[y/x],(\lambda
xx)y]$, but $n$ is not an approximant of any $\dbang$ term (\fnl{it has lost its
uniformity, in the sense of Ehrhard and Regnier~\cite{ER08}}). We will
need parallel reduction to achieve full simulation, this is done in
Subsection~\ref{subsec:parallel}.

\begin{lemma}
\label{lem:square}
Considering the following configuration:
\begin{center}
	\begin{tikzpicture}%[scale=0.7]
		\node(M)at(0,0){$M$};
		\node(N)at(2,0){$N$};
		\node(m)at(0,-1){$m$};
		\node(n)at(2,-1){$n$};

		\draw(1,0)node{$\todbs$};
		\draw(1,-1)node{$\todbrs$};
	
		\draw (0,-0.5)node{$\triangledown$};
		\draw (2,-0.5)node{$\triangledown$};
	\end{tikzpicture}
\end{center}
Given three of four terms,
we can always obtain a convenient fourth that makes the square commute.
\iflong (for a proof, see lemmas~\ref{lem:simu_dbr_one}, \ref{lem:simu_dbr} and \ref{lem:simu_dbr_bis} in
 Appendix~\ref{appendix:dbang}).
\fi
\end{lemma}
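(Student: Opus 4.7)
The plan is to split the statement into the three non-trivial directions of completing the square, corresponding to the three appendix lemmas referenced. The two key technical tools throughout are Lemma~\ref{lem:approx_contexts}, which decomposes a resource approximant along a list or surface context enclosing the redex, and the Substitution Lemma for Taylor expansion (Lemma~\ref{lem:taylor:substitution-lemma}), which connects a multilinear resource substitution with a \dbang{} substitution.

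For the \emph{downward} direction, given $M \todbs N$ and $m \app M$, I produce $n$ with $m \todbrs n$ and $n \app N$ by induction on the surface context of the fired redex and case analysis on the three \dbang{} rules. For a $\beta$-step $L\<\lambda x M_0\>N_0 \todbs L\<M_0\es{N_0}x\>$, Lemma~\ref{lem:approx_contexts} decomposes $m$ as $l\<\lambda x m_0\>[n_1,\ldots,n_k]$ with $l \app L$, $m_0 \app M_0$ and each $n_i \app N_0$; the first clause of $\Todbr$ then fires to $l\<m_0\es{[n_1,\ldots,n_k]}x\>$, which is an approximant of $L\<M_0\es{N_0}x\>$ by direct reconstruction of the approximation derivation. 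The explicit-substitution and dereliction rules are analogous, with the caveat that if the bag cardinality does not match the relevant degree, the resource reduct is $\emptyset$; this is read as a vacuous completion, consistent with the $f\<\emptyset\> = \emptyset$ convention. The \emph{upward} direction is the already-stated Lemma~\ref{lem:simu_dbr_n}: starting from $m \app M$ and $m \todbrs n$, the resource redex in $m$ lifts directly to a \dbang{} redex in $M$ because every bag in an approximant sits above an $\oc$-node in $M$, and the Substitution Lemma certifies that the contractum is an approximant of the \dbang{} contractum.

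For the \emph{lift} direction, given $M \todbs N$ and $n \app N$, I must construct $m \app M$ with $m \todbrs n$. The representative case is $M_0\es{L\<\oc N_0\>}x \todbs L\<M_0\{N_0/x\}\>$: Lemma~\ref{lem:approx_contexts} splits $n$ as $l\<p\>$ with $l \app L$ and $p \app M_0\{N_0/x\}$, and then the Substitution Lemma rewrites $p$ as $m_0\{n_1/x_1,\ldots,n_k/x_k\}$ with $m_0 \app M_0$, each $n_i \app N_0$, and $k = d_x(m_0)$. Setting $m := m_0\es{l\<[n_1,\ldots,n_k]\>}x$ yields $m \app M$ by the approximation rules of Figure~\ref{fig:approx_dbang}, and the identity permutation in the second clause of $\Todbr$ produces exactly $n$. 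The $\beta$ and dereliction cases require similar but strictly simpler decompositions.

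The hard part will be precisely this lift direction: one must invert the meta-substitution appearing on the right-hand side of the explicit-substitution rule, showing that every resource approximant of $M_0\{N_0/x\}$ factors as an approximant of $M_0$ with each occurrence of $x$ filled by a bagged approximant of $N_0$, in the number dictated by $d_x(m_0)$. The Substitution Lemma supplies exactly this inversion, so the real work is threading the decomposition through the surrounding list context and selecting the matching permutation; once this is done, the three directions assemble into the stated square-commutation Lemma.
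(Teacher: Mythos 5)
Your overall strategy is the paper's: the square is assembled from the push-forward direction (Lemmas~\ref{lem:simu_dbr_one} and~\ref{lem:simu_dbr}), the pull-back direction (Lemma~\ref{lem:simu_dbr_bis}), and the completion of the \dbang{} reduct (Lemma~\ref{lem:simu_dbr_n}), each proved by case analysis on the root rules plus induction on surface contexts, with Lemma~\ref{lem:approx_contexts} supplying the context decomposition and the Substitution Lemma (Lemma~\ref{lem:taylor:substitution-lemma}) handling the substitution rule, including its use as an inversion in the pull-back case. Your treatment of the $\emptyset$ caveat and of the explicit-substitution and dereliction cases matches the paper's.

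However, your $\beta$-case in the push-forward direction is wrong as written. You decompose an approximant $m \app L\<\lambda x M_0\>N_0$ as $l\<\lambda x m_0\>[n_1,\ldots,n_k]$ with each $n_i \app N_0$. By Figure~\ref{fig:approx_dbang}, the argument of an application is approximated by a \emph{single} resource term ($mn \app MN$ requires $n \app N$); a bag $[n_1,\ldots,n_k]$ approximates only terms of shape $\oc N$. Since the \dbang{} rule $L\<\lambda x M\>N \todb L\<M\es Nx\>$ places no $\oc$ on the argument --- that is precisely the point of producing an explicit substitution rather than substituting --- your claimed decomposition does not exist in general, and the subsequent claim that $l\<m_0\es{[n_1,\ldots,n_k]}x\>$ approximates $L\<M_0\es{N_0}x\>$ is false unless $N_0$ is itself a $\oc$-term. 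The correct (and the paper's) decomposition is $m = l\<\lambda x m_0\>n_0$ with $n_0 \app N_0$ a single term; the redex then fires to $l\<m_0\es{n_0}x\>$, which approximates $L\<M_0\es{N_0}x\>$ directly by the explicit-substitution clause of the approximation relation. You appear to have imported the CbN picture, where an application translates to $\tradn M\,\oc\tradn N$ and arguments do become bags; in \dbang{} itself bags arise only under dereliction and in the substitution rule $M\es{L\<\oc N\>}x$, where the $\oc$ is syntactically required --- which is exactly why your other cases are correct. The fix is local, and the repaired proof coincides with the paper's.
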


\begin{definition} 
	Let $X$ and $Y$ be sets, and $\mathcal R\subseteq X\times Y$ a relation. We write
	$X\extend{\mathcal R} Y$ whenever $\forall x\in X$, $\exists y\in Y$ such that $(x,y)\in \mathcal R$
	and $\forall y\in Y$, $\exists x\in X$ such that $(x,y)\in \mathcal R$.
\end{definition}

Then, using Lemma~\ref{lem:square} we obtain a simulation theorem for
surface reduction:

\begin{theorem}[Simulation]
	\label{thm:simu_taylor_db}
	Let $M,N$ in \dbang{} such that $M\todbs N$. We have $\taylor M
	\extend{\todbrs} \taylor N$.
\end{theorem}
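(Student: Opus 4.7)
The plan is to split $\taylor M \extend{\todbrs} \taylor N$ into its two quantifier directions and discharge each via Lemma~\ref{lem:square}. The genuine work lives in the three rule cases (beta, explicit substitution, dereliction), which are collected in the appendix lemmas (Lemmas~\ref{lem:simu_dbr_one}, \ref{lem:simu_dbr}, \ref{lem:simu_dbr_bis}) that Lemma~\ref{lem:square} packages together.

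For the \emph{projection} direction ($\forall m \in \taylor M$ there is some $n \in \taylor N$ with $m \todbrs n$), I start from $m \app M$ together with the decomposition $M = S\langle R\rangle \todbs S\langle R'\rangle = N$, where $R \todb R'$ is the fired redex and $S$ a surface context. By Lemma~\ref{lem:approx_contexts}, $m$ splits as $s\langle r\rangle$ with $s \app S$ and $r \app R$. A case analysis on the shape of $R$ produces a resource reduct $r \todbr r'$ with $r' \app R'$; closure of $\app$ under surface contexts then gives $n := s\langle r'\rangle \app N$ with $m \todbrs n$.

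For the \emph{lifting} direction ($\forall n \in \taylor N$ there is some $m \in \taylor M$ with $m \todbrs n$), I begin with $n \app N = S\langle R'\rangle$ and again apply Lemma~\ref{lem:approx_contexts} to write $n = s\langle n'\rangle$ with $s \app S$ and $n' \app R'$. The heart of this half is to \emph{invert} the redex: from $n'$ and from the shape of $R$, reconstruct $r \app R$ such that $r \todbr n'$ in one resource step; then $m := s\langle r\rangle$ is the required approximant. The beta and dereliction inversions are immediate: from $l\langle m_0\es{n_0}{x}\rangle$ one takes $r = l\langle\lambda x m_0\rangle n_0$, and from $l\langle m_0\rangle$ one takes $r = \der{l\langle [m_0]\rangle}$.

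The hard part will be the explicit substitution rule in the lifting direction, where the Substitution Lemma (Lemma~\ref{lem:taylor:substitution-lemma}) is essential. Given $n' = l\langle m_0\{n_1/x_1,\ldots,n_k/x_k\}\rangle \app L\langle M_0\{N_0/x\}\rangle$, I must assemble $r = m_0\es{l\langle [n_1,\ldots,n_k]\rangle}{x}$, certify via the Substitution Lemma that $m_0 \app M_0$ and each $n_i \app N_0$, and confirm $k = d_x(m_0)$ so that the resource substitution rule fires (rather than collapsing to $\emptyset$). Matching the permutations in the substitution rule of Definition~\ref{def:dbang-taylor-reductions} with the distribution of the $n_i$'s over the linear occurrences of $x$ in $m_0$ is where the bookkeeping concentrates, and it is why Lemma~\ref{lem:square} is delegated to several appendix lemmas rather than folded into a single inductive argument.
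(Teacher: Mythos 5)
Your overall architecture (splitting $\extend{\todbrs}$ into its two directions, decomposing $m$ through the surface context via Lemma~\ref{lem:approx_contexts}, then a case analysis on the fired redex with the Substitution Lemma carrying the explicit-substitution case) is the same as the paper's, whose two directions are exactly Lemmas~\ref{lem:simu_dbr} and~\ref{lem:simu_dbr_bis}. Your lifting direction is essentially the paper's argument, with correct inversion constructions for the three rules.

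The projection direction, however, has a genuine gap: it is not true that a case analysis on $R$ ``produces a resource reduct $r \todbr r'$ with $r' \app R'$'' for every $r \app R$. Resource reduction can collapse to the empty set. If $R = \der{L\langle \oc P\rangle}$ and $r = \der{l\langle[p_1,\ldots,p_k]\rangle}$ with $k \neq 1$, or if $R = Q\es{L\langle \oc P\rangle}{x}$ and $r = q\es{l\langle[p_1,\ldots,p_k]\rangle}{x}$ with $k \neq d_x(q)$, then $r \Todbr \emptyset$ and $r$ has no reduct at all, let alone one approximating $R'$. Concretely, for $M = \der{\oc P} \todbs P = N$ and $m = \der{[\,]} \app M$, there exists no $n \in \taylor N$ with $m \todbrs n$. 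This is precisely why the paper's Lemmas~\ref{lem:simu_dbr_one} and~\ref{lem:simu_dbr} are stated with the disjunction ``either $m \todbrs \emptyset$ or there is some $n \app N$ such that $m \todbrs n$'', and why the theorem must be read with the paper's abusive convention $m \todbr \emptyset$ (together with $s\langle\emptyset\rangle = \emptyset$). You did notice the cardinality-mismatch phenomenon, but only in the lifting direction, where it is harmless: there you construct $r$ yourself, and the Substitution Lemma guarantees $k = d_x(m_0)$, so the redex genuinely fires. The projection direction is exactly where the collapse cannot be avoided and must surface as an additional case in the statement you prove; as written, your forward half claims something false and the induction cannot go through.
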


\subsubsection{Parallel reduction and full contexts}
\label{subsec:parallel}

Figure~\ref{fig:parallel_reduction} defines a parallel resource reduction allowing us to handle full contexts. Intuitively, it needs to
reduce at once every term occurring in a bag, in order to simulate internal 
reductions like $\oc M\todbf \oc M'$: \fnl{in this case, approximants of the antecedent are of
the form $[m_1,\dots ,m_k]$ with $m_i\app M$, and the simulation needs to reduce all the
$m_i$'s at once, in order to obtain an approximation of the reduct $[m'_1,\dots ,m'_k]\app \oc
M'$.} In particular, the reduction needs to be reflexive
because \emph{e.g.}
$[]\app \oc M$ and $[]$ does not reduce to $\emptyset$, we have to consider that $[]\app
\oc M'$ is obtained from $[]$ by reduction. The resulting relation $\totodbr$, is a
well-known non-deterministic extension of standard reduction which can be used
to prove confluence property. In general, for a reduction $\to$, we have
$\to\subsetneq\rightrightarrows\subsetneq\to^*$\footnote{
	\fnl{Usually the parallel reduction enjoys the diamond property. It is
		not the case here because of the non-determinism induced by
		the permutations; in order to recover it, it is necessary to
		lift the reduction to set of terms. We mentioned it with non parallel
		reduction, whith $\Todbr$ being confluent, but not $\todbrs$,
		for the same reason.
	}
}. See \emph{e.g.} Barendregt's proof of confluence
for the $\lambda$-calculus~\cite{barendregt}, known as Tait—Martin-Löf technique.
\footnote{\fnl{See
Vaux-Auclair~\cite{Vaux19} for a detailed study of this reduction in $\lambda$-calculus and its
interaction with quantitative Taylor expansion.}}
\begin{figure}
\begin{center}
	\AxiomC{}\UnaryInfC{$x\totodbr x$}\DisplayProof
		\qquad
	\AxiomC{$m_1\totodbr m_1'$}\AxiomC{\dots }\AxiomC{$m_k\totodbr
	m_k'$}\RightLabel{$k\in\mathbb N$}\TrinaryInfC{$[m_1,\dots ,m_k]\totodbr
	[m_1',\dots ,m_k']$}\DisplayProof
	
		\vspace{1em}
	\AxiomC{$m\totodbr m'$}
	\RightLabel{*}
	\UnaryInfC{$s\<m\>\totodbr s\<m'\>$}\DisplayProof
		\qquad
	\AxiomC{$[m_1,\dots ,m_k]\totodbr [m_1',\dots ,m_k']$}
	\AxiomC{$n\totodbr n'$}
	\AxiomC{$l\totodbr l'$}
	\RightLabel{**}
	\TrinaryInfC{$n[l\<[m_1,\dots ,m_k]\>]\totodbr
	l'\<n'\{m'_{\sigma(1)}/x_1,\dots ,m'_{\sigma(k)}/x_k\}\>$}\DisplayProof
		
		\vspace{1em}
	\AxiomC{$m\totodbr m'$}
	\AxiomC{$n\totodbr n'$}
	\AxiomC{$l\totodbr l'$}
	\TrinaryInfC{$l\<\lambda x m\> n\totodbr l'\<m'[n'/x]\>$}
	\DisplayProof
		\qquad
	\AxiomC{$m_1\totodbr m_1'$}
	\RightLabel{***}
	\UnaryInfC{$\der{[m_1,\dots ,m_k]}\totodbr m_1'$}
	\DisplayProof

	\begin{itemize}
		\item[*] $s$ is any surface resource context
		\item[**] $\sigma\in P_k$, and if $k=d_x(n')$ (otherwise the
			reduction gives $\emptyset$).
		\item[***] $k=1$. Otherwise, the reduction gives $\emptyset$.
	\end{itemize}
	\caption{Parallel reduction for \dbangres}
	\label{fig:parallel_reduction}
\end{center}
\end{figure}
\iflong We abusively write $l\totodbr l'$ for contexts as soon as
$l=\square[m_1/x_1]\dots [m_k/x_k], l'=\square[m'_1/x_1]\dots [m'_k/x_k]$ and $m_i\totodbr
m'_i$ for any $i\leq k$. \fi

Since $\totodbr$ is size-decreasing, it  enjoys weak normalization,  but
obviously not strong, as it is reflexive. We can now state our simulation
results for full contexts:

\begin{lemma}\label{lem:simu_dbr_full}
	Let $M,N\in\dbang$ with $M\todbf N$. For any $m\app M$, either
	$m\totodbr \emptyset$ or there is some $n\app N$ such that $m\totodbr n$
\end{lemma}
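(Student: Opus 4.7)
The plan is to proceed by induction on the full context $F$ witnessing the reduction, so that $M = F\<M_0\>$, $N = F\<N_0\>$, and $M_0 \todb N_0$ is one of the three root-level rules. The key ingredients of $\totodbr$ making this induction go through are the surface-context closure rule, which propagates a single reduction through any part of $F$ not crossing a $\oc$, together with the bag rule, which reduces every copy of a subterm under a $\oc$ simultaneously, thereby re-establishing the approximation condition on the bag.

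For the base case $F = \square$, I analyse the three root rules separately. The dB-step $L\<\lambda xM'\>N'\todb L\<M'\es{N'}{x}\>$ is the simplest: every approximant has the matching shape $l\<\lambda xm'\>n'$ and parallel-reduces to $l\<m'[n'/x]\>$, which is an approximant of the reduct, so no $\emptyset$-collapse arises. For the substitution and dereliction steps the approximant involves a bag $[n_1,\ldots,n_k]$, and the resource-level side-condition ($k = d_x(m')$ and $k = 1$ respectively) determines whether we obtain an approximant of $N$ or collapse to $\emptyset$.

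For the inductive cases in which $F$ is of the form $F'M'$, $M'F'$, $\lambda xF'$, $\der F'$, $F'\es{M'}{x}$ or $M'\es{F'}{x}$, any approximant $m\app F\<M_0\>$ decomposes directly by the rules of $\app$ as $m = s\<m_0\>$, where $s$ is a surface resource context (formed from the outer constructor of $F$ with the hole at the appropriate position and an approximant of the other subterm where present) and $m_0\app F'\<M_0\>$. Applying the induction hypothesis to $m_0$ gives either $m_0 \totodbr \emptyset$ or $m_0 \totodbr n_0$ with $n_0 \app F'\<N_0\>$; the surface-context closure rule of $\totodbr$ then lifts this to $s\<m_0\> \totodbr s\<n_0\>$, which approximates $N$, or to $s\<\emptyset\>=\emptyset$ via the convention $f\<\emptyset\>=\emptyset$.

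The main obstacle, and the sole reason parallel reduction was introduced, is the case $F = \oc F'$. An approximant of $M = \oc F'\<M_0\>$ is a bag $[m_1,\ldots,m_k]$ with each $m_i \app F'\<M_0\>$, and the induction hypothesis applied \emph{independently} to each $m_i$ yields, for each index, either $m_i \totodbr \emptyset$ or $m_i \totodbr n_i$ with $n_i \app F'\<N_0\>$. The bag rule then assembles all these reductions into a single parallel step: if every IH yields a non-empty reduct we obtain $[m_1,\ldots,m_k] \totodbr [n_1,\ldots,n_k]$, an approximant of $\oc F'\<N_0\> = N$; otherwise the $\emptyset$-absorption convention collapses the whole bag to $\emptyset$. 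This simultaneous treatment of all copies in the bag is exactly what single-step $\todbrs$ fails to provide, as illustrated by the counterexample $[(\lambda xx)y,(\lambda xx)y]\app \oc((\lambda xx)y)$ given immediately before the lemma.
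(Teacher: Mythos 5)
Your proof is correct and follows essentially the same route as the paper's: the crucial case $F=\oc F'$ (decompose the approximant as a bag, apply the induction hypothesis to each element independently, reassemble with the parallel bag rule, letting $\emptyset$ absorb any collapsing component) is exactly the paper's argument. The only difference is organizational: the paper inducts on the exponential depth of the redex and delegates the surface-level cases to Lemma~\ref{lem:simu_dbr} together with the inclusion $\todbrs\subseteq\totodbr$, whereas you induct on the structure of the full context and re-derive those cases inline via the root rules and the surface-closure rule of $\totodbr$.
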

The symmetric counterpart of this result is obtained with a similar reasoning:
\begin{lemma}\label{lem:simu_dbr_full_bis}
	Let $M,N\in\dbang$ with $M\todbf N$. For any $n\app N$,
	there is some $m\app M$ such that $m\totodbr n$.
\end{lemma}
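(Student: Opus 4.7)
My plan is to argue by induction on the derivation of $M\todbf N$; equivalently, on the full context $F$ for which $M=F\<M_0\>$, $N=F\<N_0\>$ with $M_0\todb N_0$ a root reduction of \dbang{}. First I would note that $\totodbr$ is reflexive: an easy structural induction on resource terms, using the variable, bag, and surface-context clauses of Figure~\ref{fig:parallel_reduction}, shows that every term reduces to itself in one parallel step, so we may freely leave sub-terms unchanged when firing a single redex.

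For the base case $F=\square$ I would analyse the three root rules. In each, Lemma~\ref{lem:approx_contexts} lets me push the decomposition of $n$ through the list context $L$, and the substitution Lemma~\ref{lem:taylor:substitution-lemma} takes care of the meta-substitution. Concretely, for $L\<\lambda xM'\>N'\todb L\<M'\es{N'}{x}\>$ I write $n=l\<m'\es{n'}{x}\>$ with $l\app L$, $m'\app M'$, $n'\app N'$ and take $m=l\<\lambda xm'\>n'$; for $M'\es{L\<\oc N'\>}{x}\todb L\<M'\{N'/x\}\>$ the substitution lemma yields $n=l\<m'\{n_1/x_1,\ldots,n_k/x_k\}\>$ with $k=d_x(m')$ and every $n_i\app N'$, so the bag $[n_1,\ldots,n_k]$ appears naturally in $m=m'\es{l\<[n_1,\ldots,n_k]\>}{x}$; and for $\der{L\<\oc M'\>}\todb L\<M'\>$ I simply take $m=\der{l\<[n']\>}$ where $n=l\<n'\>$. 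In each case the matching parallel rule of Figure~\ref{fig:parallel_reduction} — fired with the identity permutation and reflexive sub-reductions on everything that should stay fixed — produces $m\totodbr n$.

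For the inductive step when $F$ is of the form $F'M'$, $M'F'$, $\lambda xF'$, $\der F'$, $F'\es{M'}{x}$ or $M'\es{F'}{x}$, the head constructor of $N=F\<N_0\>$ forces a matching head constructor on $n$ via the clauses of Figure~\ref{fig:approx_dbang}. I would extract the subapproximant corresponding to $F'\<N_0\>$, invoke the induction hypothesis to obtain an approximant of $F'\<M_0\>$ reducing to it, and glue the remaining pieces back via the surface-context parallel rule — this is applicable because each of these six full-context shapes is itself a surface context in the resource calculus.

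The delicate case is $F=\oc F'$. Here an approximant of $\oc F'\<N_0\>$ must be a bag $[n_1,\ldots,n_k]$ with every $n_i\app F'\<N_0\>$. Applying the induction hypothesis componentwise yields $m_i\app F'\<M_0\>$ with $m_i\totodbr n_i$, and the bag rule of Figure~\ref{fig:parallel_reduction} assembles these into $m=[m_1,\ldots,m_k]\app M$ with $m\totodbr n$. This is precisely where parallel reduction is indispensable, and is the main difficulty: a single redex cannot fire several bag components at once, and — as already noted after Lemma~\ref{lem:approx_contexts} — no single-hole context can point at them simultaneously. This is the whole reason the statement has to be formulated with $\totodbr$ rather than $\todbrf$.
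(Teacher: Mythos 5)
Your proof is correct and takes essentially the same route as the paper: the paper handles all the surface-shaped cases by invoking its pull-back Lemma~\ref{lem:simu_dbr_bis} (whose proof is exactly your base-case analysis of the three root rules plus the surface gluing step), notes that $\todbrs\subseteq\totodbr$, and then treats only the remaining $\oc F'$ case by applying the induction hypothesis componentwise to the bag and reassembling with the parallel bag rule, exactly as you do. Your version merely inlines that lemma into a single induction on the full context instead of citing it.
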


The two previous lemmas give us the desired simulation result for full reduction:
\begin{theorem}\label{thm:simu_dbang_full}
	Let $M,N\in\dbang$ such that $M\todbf N$, then we have
	$
		\taylor M \extend{\totodbr}\taylor N
	$
\end{theorem}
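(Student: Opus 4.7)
The plan is to combine the two preceding lemmas directly, once the definition of $\extend{\totodbr}$ is unfolded. (I read the conclusion as $\taylor M \extend{\totodbr} \taylor N$, taking the repetition of $\taylor M$ to be a typo; otherwise the statement is vacuous by reflexivity of $\totodbr$.) By definition, the extension relation requires showing two things: (i) every $m \in \taylor M$ has an image $n \in \taylor N$ with $m \totodbr n$, and (ii) every $n \in \taylor N$ has an antecedent $m \in \taylor M$ with $m \totodbr n$.

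For direction (i), fix $m \app M$ and apply Lemma~\ref{lem:simu_dbr_full}: either $m \totodbr \emptyset$, or there exists $n \app N$ with $m \totodbr n$. In the latter case we take this $n$ as the required element of $\taylor N$. The $\emptyset$-alternative arises precisely for approximants whose bag cardinalities are incompatible with the redex being fired (e.g.\ $m\es{[n_1,\dots,n_k]}x$ with $k \neq \deg xm$); under the convention that $\emptyset$ is absorbing for contexts (stated after Definition~\ref{def:dbang-taylor-reductions}), this degenerate case is consistent with the extension relation read up to $\emptyset$-cancellation. For direction (ii), Lemma~\ref{lem:simu_dbr_full_bis} provides the required antecedent directly.

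The main obstacle is not in this proof itself, which is a routine combination, but upstream in the two simulation lemmas. The difficulty there is that an internal step $\oc M \todbf \oc M'$ in \dbang{} corresponds, on the resource side, to the simultaneous reduction of every element of a bag $[m_1,\dots,m_k] \app \oc M$; a single-hole surface reduction $\todbrs$ touches only one element and hence cannot, in general, relate $[m_1,\dots,m_k]$ to an approximant of $\oc M'$. This is exactly why the square lemma (Lemma~\ref{lem:square}) fails for full contexts and why one needs the parallel reduction $\totodbr$ of Figure~\ref{fig:parallel_reduction}, whose rule for bags reduces all components at once. The simulation lemmas are then proved by induction on the position of the fired redex in $M$, using Lemma~\ref{lem:approx_contexts} to decompose approximants through list and surface contexts, and the Substitution Lemma~\ref{lem:taylor:substitution-lemma} to handle the substitution case. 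Once these are in place, Theorem~\ref{thm:simu_dbang_full} follows immediately from the argument above.
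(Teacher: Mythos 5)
Your proof is correct and takes essentially the same route as the paper: the paper obtains Theorem~\ref{thm:simu_dbang_full} precisely by combining Lemma~\ref{lem:simu_dbr_full} for the forward direction (with the $m \totodbr \emptyset$ case absorbed by the stated convention, exactly as you note) and Lemma~\ref{lem:simu_dbr_full_bis} for the backward direction. Your reading of the conclusion as $\taylor M \extend{\totodbr} \taylor N$ is indeed the intended one; the repetition of $\taylor M$ in the statement is a typo.
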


We saw that surface reduction acts similarly in $\dbang$ and in $\dbangres$,
while parallel reduction is necessary to give a multilinear account to internal
reductions (the definition of parallel reduction alone does not imply that they
occur exclusively inside bags, but this is made mandatory through the use of invariant multi-hole
surface contexts). The factorization properties established for \dbang{}
(Corollary~\ref{cor:standardisation_dbang}) can easily be
translated in the resource setting:

\begin{proposition}[Factorization]\label{prop:standardisation_resources}

	Let $s^+$ denote multi-holes surface contexts and given by the syntax
	$
		s^+:= \square\mid m\mid	s^+s^+\mid s^+[s^+/x]\mid \der{s^+}\mid \lambda xs^+
	$.
	For any reduction $m\todbrf^* p$, there are some bags $\bar n_i$ and some
	multi-hole context $s$ such that $p=s\<\bar n_1,\dots ,\bar n_k\>$ and 
	$
		m\todbrs^*s\<\bar n_1',\dots ,\bar n_k'\>\totodbr^*s\<\bar n_1,\dots ,\bar n_k\>
	$
\end{proposition}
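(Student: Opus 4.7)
The plan is to transpose, step by step, the derivation taking Proposition~\ref{prop:standardisation_dbang} to Corollary~\ref{cor:standardisation_dbang} into the resource calculus.

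First I establish the resource-level factorization: if $m \todbrf^* p$ then there exists $q$ with $m \todbrs^* q \to_i^* p$, where $\to_i$ denotes the restriction of $\todbrf$ to steps fired strictly under a bag constructor $[\cdot,\ldots,\cdot]$. This is proved by induction on the length of the $\todbrf^*$-derivation, using a local swap lemma: whenever $m \to_i m' \todbrs m''$, there is some $m'''$ such that $m \todbrs m''' \to_i^* m''$. The swap is handled by case analysis on the relative positions of the two redexes; the multilinear character of resource substitution keeps most cases routine, and the only delicate situation is when the surface redex consumes a bag containing the internal redex, in which case the internal redex is either discarded (if the bag cardinality does not match the degree of the substituted variable, so $m' \todbrs \emptyset$ and the swap collapses via the convention $f\<\emptyset\>=\emptyset$) or propagated as at most $d_x(\cdot)$ independent copies, each giving a separate $\to_i$-step on the other side.

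Second, I exploit the crucial syntactic fact that the multi-hole surface context syntax $s^+$ has no production going under a bag. Marking the positions of the bags where the intermediate reduction $q \to_i^* p$ fires, I obtain a common multi-hole surface context $s$ together with bags $\bar n_1',\ldots,\bar n_k'$ inside $q$ and $\bar n_1,\ldots,\bar n_k$ inside $p$, with $\bar n_i' \to_i^* \bar n_i$ for each $i$; the subterms of $q$ untouched by any internal step are absorbed into $s$ via the production $s^+ ::= m$.

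Third, I lift the internal sequence to parallel reduction. Since every single $\todbrf$-step embeds into a single $\totodbr$-step (by reflexivity of $\totodbr$ on the non-redex subterms, combined with closure under surface resource contexts and the componentwise bag rule of Figure~\ref{fig:parallel_reduction}), each $\to_i$-step inside one $\bar n_i'$ lifts to a $\totodbr$-step of the global term that touches only that bag. Concatenating yields $s\<\bar n_1',\ldots,\bar n_k'\> \totodbr^* s\<\bar n_1,\ldots,\bar n_k\>$, as required. The main obstacle is the bag-mismatch case of the swap lemma in the first step: propagating $\emptyset$ consistently through the postponement argument requires some care, but it is exactly what the abusive convention on $f\<\emptyset\>$ adopted after Definition~\ref{def:dbang-taylor-reductions} is designed to accommodate; once this bookkeeping is dealt with, the rest is a routine induction mirroring the \dbang{} proof.
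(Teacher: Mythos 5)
Your overall architecture (postpone internal steps, then read off the multi-hole surface context, then lift the internal tail to $\totodbr$ via $\todbrf\subseteq\totodbr$) is reasonable, and it is worth noting that the paper itself gives no proof of this proposition: it merely asserts that the \dbang{} factorization (Corollary~\ref{cor:standardisation_dbang}, itself imported from \cite{agk24d}) ``can easily be translated'' to resources. So you are genuinely filling a hole. Your second and third steps are fine: internal steps never modify the skeleton outside their outermost enclosing bag, and every single $\todbrf$-step is a $\totodbr$-step thanks to reflexivity of $\totodbr$ and the componentwise bag rule.

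However, your local swap lemma is false as stated, and the error sits exactly in the case you flag as delicate. When the surface redex consumes the bag containing the internal redex, the residual of that internal redex need \emph{not} be an internal step: linear substitution places each bag element at the unique occurrence of the bound variable it is assigned to, and that occurrence may be at \emph{surface} position. Concretely, take $m = x\es{[\der{[y]}]}{x}$. Then $m \to_i x\es{[y]}{x} \todbrs y$, but the only surface step from $m$ is $m \todbrs \der{[y]}$, and the remaining step $\der{[y]} \todbrs y$ is a surface step (it is at the root), not an internal one; hence no $m'''$ satisfies $m \todbrs m''' \to_i^* y$. (Your phrase ``at most $d_x(\cdot)$ independent copies'' is also inaccurate: resource substitution is linear, so the element carrying the internal redex is propagated to exactly one position.) The repair is standard but must be made explicit: weaken the swap to $\to_i \cdot \todbrs\ \subseteq\ \todbrs \cdot \todbrf^{\leq 1}$, where the trailing step may be surface, internal, or absent (the $\emptyset$/mismatch case, which you do treat correctly), and replace the plain induction on derivation length by induction on a measure that this weaker swap decreases — for instance, lexicographically, the number of internal steps in the sequence and the number of pairs (internal step occurring before a surface step). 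Since swaps never lengthen the sequence and $\todbrf$ is strongly normalizing, this rearrangement terminates, and the rest of your argument then goes through.
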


% \subsubsection{Taylor normal form}

Following the reduction occurring on resource terms, we have defined
previously Taylor normal form. \iflong We develop in this section some lemmas
on those objects. They will \else The following lemmas will\fi be useful in order to prove the
Commutation Theorem between approximation trees and Taylor expansions. 
\begin{lemma}
	\label{lem:bohm-taylor:equiv-imply-same-nfTaylor}
	Given $M \todbf^* N$ then $\taylornf M = \taylornf N$.
\end{lemma}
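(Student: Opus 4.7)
The plan is to reduce to the one-step case by induction on the length of the full reduction $M \todbf^* N$ and then to prove both inclusions $\taylornf M \subseteq \taylornf N$ and $\taylornf N \subseteq \taylornf M$ when $M \todbf N$. In both directions, the key ingredients are the single-step simulation results (Lemmas~\ref{lem:simu_dbr_full} and~\ref{lem:simu_dbr_full_bis}), the inclusion $\totodbr \subseteq \todbrf^*$, and the confluence and strong normalisation of $\todbrf$.

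For $\taylornf N \subseteq \taylornf M$: take $p \in \nf n$ with $n \app N$. By Lemma~\ref{lem:simu_dbr_full_bis} there exists $m \app M$ with $m \totodbr n$; concatenating yields $m \todbrf^* n \todbrf^* p$ with $p$ normal, hence $p \in \nf m \subseteq \taylornf M$. For the converse, let $p \in \nf m$ with $m \app M$, and apply Lemma~\ref{lem:simu_dbr_full}. If this yields $n \app N$ with $m \totodbr n$, then $m \todbrf^* n$ and also $m \todbrf^* p$; since $p$ is a $\dbangres$-normal form, confluence of $\todbrf$ forces $n \todbrf^* p$, so $p \in \nf n \subseteq \taylornf N$.

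The main obstacle is the remaining case $m \totodbr \emptyset$ in the reverse direction, which must be excluded whenever $\nf m \neq \emptyset$. One has to argue that, in the presence of the absorbing reduct $\emptyset$, confluence of $\todbrf$ forbids the coexistence of a reduction to $\emptyset$ with a reduction to a $\dbangres$-normal form: a normal form of $\dbangres$ has no further reducts and in particular cannot be further reduced to $\emptyset$, while parallel reduction to $\emptyset$ decomposes (via $\totodbr \subseteq \todbrf^*$) into a sequence of $\todbrf$ steps actually reaching the absorbing element. The two are therefore incompatible, and the case does not arise.

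Once this verification is in hand, the one-step equality $\taylornf M = \taylornf N$ is established, and the statement for arbitrary $M \todbf^* N$ follows by a routine induction on the number of reduction steps, with the base case being the trivial reflexive one.
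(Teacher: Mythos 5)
Your second direction ($\taylornf N \subseteq \taylornf M$) is correct and is essentially the paper's own argument: pull an approximant $n \app N$ with $p \in \nf n$ back through Lemma~\ref{lem:simu_dbr_full_bis} and concatenate reductions (you are in fact slightly more careful than the paper, which applies the pull-back lemma to the element of $\taylornf N$ itself rather than to an approximant above it). The outer induction on the length of $M \todbf^* N$ is also unproblematic.

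The direction $\taylornf M \subseteq \taylornf N$, however, has a genuine gap: both the step ``confluence of $\todbrf$ forces $n \todbrf^* p$'' and your justification for discarding the case $m \totodbr \emptyset$ invoke confluence of $\todbrf$ \emph{as a relation between individual resource terms}, and that relation is not confluent. The permutation nondeterminism of the substitution rule branches irrevocably: $(x[x])\es{[a,b]}{x}$ reduces in one step to the two distinct normal forms $a[b]$ and $b[a]$. (The paper's sentence asserting ``confluence of $\todbrs$ and $\todbrf$'' is sound only for reduction lifted to sets, or sums, of resource terms, as in standard resource calculi; it does not give Church--Rosser for the projected term-to-term relation your argument uses.) Concretely, let $M=((x\oc z)(x\oc z))\es{\oc(\lambda y(w\oc y))}{x}$, so that $M \todb N=((\lambda y(w\oc y))\oc z)\,((\lambda y(w\oc y))\oc z)$, and consider the approximant $m=((x[z])(x[\,]))\es{[\lambda y(w[y]),\,\lambda y(w[\,])]}{x}\app M$. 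Substituting with the identity permutation gives $n_{\mathrm{good}}=((\lambda y(w[y]))[z])\,((\lambda y(w[\,]))[\,])$, which normalizes to $p=(w[z])(w[\,])$; substituting with the transposition gives $n_{\mathrm{bad}}=((\lambda y(w[\,]))[z])\,((\lambda y(w[y]))[\,])$, all of whose maximal reductions end in $\emptyset$, since both explicit substitutions it creates have mismatched cardinalities and mismatches can never be erased. Both $n_{\mathrm{good}}$ and $n_{\mathrm{bad}}$ are approximants of $N$, and Lemma~\ref{lem:simu_dbr_full} is free to return $n_{\mathrm{bad}}$: then $n_{\mathrm{bad}}\not\todbrf^* p$, and no confluence argument can recover $p$. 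The same example shows that $m\todbrf^*\emptyset$ while $\nf m=\{p\}\neq\emptyset$, so your general principle that a reduction to $\emptyset$ cannot coexist with a reduction to a normal form is false as stated; the one-parallel-step version you actually need happens to hold, but for a different reason (a cardinality mismatch fired by a parallel step is already present in $m$ and persists in every reduct, so all maximal reductions die), not by confluence.

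What the paper's terse ``immediately, by Theorem~\ref{thm:simu_dbang_full}'' conceals is exactly this point: the simulating reduct $n\app N$ must be \emph{chosen compatibly with} the given reduction $m\todbrf^* p$ (e.g.\ by letting the parallel step use the same permutations as that reduction), rather than taken arbitrarily and rejoined afterwards. Alternatively, one can avoid resource-level joining altogether: given $p\in\taylornf M$, Lemma~\ref{lem:taylor:approx-in-reduces-bis} gives $M\todbf^* M'$ with $p\app M'$; confluence of $\todbf$ (which does hold, at the level of $\dbang$ terms) gives $P$ with $M'\todbf^* P$ and $N\todbf^* P$; since a normal resource term contains no redex, its only parallel reduct is itself, so iterating Lemma~\ref{lem:simu_dbr_full} along $M'\todbf^* P$ yields $p\app P$; finally, pulling back along $N\todbf^* P$ with Lemma~\ref{lem:simu_dbr_full_bis} yields $n\app N$ with $n\totodbr^* p$, hence $p\in\taylornf N$.
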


\begin{lemma}
	\label{lem:taylor:approx-in-reduces-bis}
	Given $m \in \taylornf M$ then there exists $M'$ such that $M \todbf^*
	M'$ and $m\app {M'}$.
\end{lemma}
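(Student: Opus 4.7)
The plan is to unfold the definition of $\taylornf M$ and then lift the resource reduction that witnesses $m \in \nf{m_0}$ back to a \dbang{}-reduction of $M$, using standardization to separate surface from internal phases. By definition, there is $m_0 \app M$ with $m_0 \todbrf^* m$, where $m$ is in full $\todbrf$-normal form.

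First I would apply Proposition~\ref{prop:standardisation_resources} to factor the reduction as $m_0 \todbrs^* \tilde m \totodbr^* m$, with $\tilde m$ and $m$ sharing a common multi-hole surface resource context $s$ so that the second segment consists of parallel reductions confined to the bags. Iterating Lemma~\ref{lem:simu_dbr_n} along the surface segment yields $\tilde M$ with $M \todbs^* \tilde M$ and $\tilde m \app \tilde M$. By Lemma~\ref{lem:approx_contexts} (extended to multi-hole surface contexts) this approximation forces $\tilde M = S\<\oc P_1,\ldots,\oc P_k\>$ for some surface context $S$ with $s \app S$, and at each hole the bag in $\tilde m$ consists of resource terms all approximating the corresponding $P_i$.

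For the internal segment I would proceed by structural induction on the resource terms, whose subterms inside the bags are strictly smaller. Each bag in $\tilde m$, say $[q_{i,1},\ldots,q_{i,j_i}]$, parallel-reduces to the corresponding bag $[r_{i,1},\ldots,r_{i,j_i}]$ of $m$, and every $r_{i,\ell}$ is itself in full resource normal form. Applying the inductive hypothesis to each $r_{i,\ell} \in \taylornf{P_i}$ yields $P_i^{(\ell)}$ with $P_i \todbf^* P_i^{(\ell)}$ and $r_{i,\ell} \app P_i^{(\ell)}$; by confluence of $\todbf$ I pick a common reduct $P_i^\star$ of all the $P_i^{(\ell)}$. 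Setting $M' := S\<\oc P_1^\star,\ldots,\oc P_k^\star\>$ gives $M \todbs^* \tilde M \to_i^* M'$, and it remains to verify that $m \app M'$, i.e.\ that each $r_{i,\ell}$ still approximates the unified target $P_i^\star$.

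The main obstacle is exactly this last verification, for which I would establish the auxiliary monotonicity property: \emph{if $r$ is in $\todbrf$-normal form and $r \app P$, then $r \app P'$ for every $P \todbf^* P'$}. The key observation is that any redex of $P$ whose position is not shielded from the root by an $\oc$ sitting above an empty bag in $r$ would, by structural approximation, produce a matching resource redex inside $r$ (a $\beta$-, substitution- or dereliction-redex, depending on the case), contradicting normality of $r$; hence the \dbang{}-reduction happens in a subterm that $r$ does not constrain, so the approximation survives. The proof of this auxiliary lemma, threaded along the path from the root of $P$ to the redex, is the technical core and precisely justifies why the confluence-based unification of the $P_i^{(\ell)}$ into $P_i^\star$ closes the induction.
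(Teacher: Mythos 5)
Your proposal follows essentially the same route as the paper's proof: unfold the definition of $\taylornf M$, factorize the resource reduction via Proposition~\ref{prop:standardisation_resources} into a surface phase and a bag-internal parallel phase, lift the surface phase by iterating Lemma~\ref{lem:simu_dbr_n}, and treat the bag elements by induction (the paper inducts on the exponential depth at which the reduction occurs, you on the structure of the normal form; the two are interchangeable here, since bag elements are strictly smaller subterms and sit at strictly greater depth).

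Where you diverge is in the final unification step, and this is an improvement rather than a detour. The paper applies its induction hypothesis to each bag element $n_{i,j}'$ separately and then asserts that a \emph{single} reduct $N_i'$ of $N_i$ satisfies $n_{i,j}'\app N_i'$ for all $j$ simultaneously; as stated, the induction hypothesis only yields one reduct per element, and the paper offers no justification for this uniformity. Your combination of confluence of $\todbf$ (merging the reducts $P_i^{(\ell)}$ into a common $P_i^\star$) with the auxiliary monotonicity lemma (a $\todbrf$-normal approximant of $P$ remains an approximant of every $\todbf$-reduct of $P$) is precisely what closes this gap, and the monotonicity lemma is indeed true and provable along the lines you sketch: by induction on the exponential depth of the contracted redex, the base case being vacuous because a surface redex of $P$ forces, via Lemma~\ref{lem:approx_contexts}, a matching resource redex in the approximant, contradicting its normality. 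One phrasing quibble: the dichotomy is not quite ``matching redex versus redex shielded by an $\oc$ above an empty bag,'' since a redex below a \emph{non-empty} bag also induces matching redexes in every element of that bag; the correct invariant is the recursive one you allude to when speaking of threading along the path from the root to the redex—at each $\oc$ on the path, either the bag is empty (and the approximant is insensitive to everything below) or one recurses into each of its elements. With that reading, your argument is correct and, at the uniformity step, more rigorous than the paper's own.
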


\subsection{Approximation trees}\label{subsec:bohm-dbang}

A classical way to approximate $\lambda$-terms is through evaluation trees, which can be seen as infinitary normal forms. 
The evaluation tree of a $\lambda$-term corresponds to the limit of a sequence
of increasingly precise approximants of this $\lambda$-term. Those approximants
are terms in normal forms, but using potentially a specific subterm $\bot$,
representing an undefined computation. When reducing the $\lambda$-term some
subterms become stable under further reductions, these subterms are precisely
those captured by approximants. 

Different comparable evaluation trees coexist. The original and best-known ones are Böhm trees. They have been introduced in the \cbn{} paradigm where they intersect with numerous notions. Indeed, they give a precise classification of $\lambda$-terms in regard of their operationnal behaviour.

In \cbn{}, using all partial normal forms (possibly containing $\bot$) as approximants, corresponds to Lévy-Longo trees \cite{Longo83}. Böhm trees use a restriction on this set of approximants, so that unsolvable terms have an empty Böhm tree. This restriction is based on restricting approximants to (partial) head normal forms, since being solvable is equivalent to having a head normal form. Lévy-Longo trees instead rely on weak head normal forms. In fact, it only leads to the suppression of "stupid" approximants such as $\lambda x\bot$. Böhm trees and Lévy-Longo trees differ essentially in their treatment of abstractions over undefined computations and therefore share similar properties. Among others, they both satisfy variants of the Commutation Theorem, which states that Taylor expansion of the Böhm tree of a term is equal to its Taylor normal form ~\cite{ER08}.

More recently, Böhm trees have been introduced in the \cbv{} setting (in a variant with permutation rules)~\cite{kerinec:bohmTreeCbV}. 
They are defined without restriction on approximants. Indeed, in this case,
there is no known syntactic characterization of solvability in terms of a
specific normal form. In that, they may be considered closer to the Lévy-Longo
trees. However, the relation with scrutability — the notion of meaning — in
\cbv{} is respected: unscrutable $\lambda$-terms have an empty Böhm tree. An
approximant such as $\lambda x\bot$ is meaningful in this setting, since it is a
value.

In this work we define approximation trees for the distant version of \cbv{}, \cbn{} and \bang. As in the \cbv{} case, we do not restrict our approximants for \dcbn{} and \dcbv{}. In order to be consistent between the different calculus, we adopt the same approach for \dcbn{}. The resulting trees are called approximation trees here. They correspond to Lévy-Longo trees in the \dcbn{} case, while in the \dcbv{} case they are closer to Böhm trees in spirit, although the situation is more subtle. Since those two family of trees are very strongly related, we choose not to focus too much on whether our trees should be considered as Böhm or Lévy–Longo trees.\\

In this section we will introduce those trees for dBang and their relation to the Taylor expansion via the Commutation Theorem (Theorem~\ref{thm:taylor-bohm-commute}).

\begin{definition}[$\dbangb$]
	Let $\dbangb$ be the set of $\dbang$-terms extended with the symbol $\bot$.
	In the following we use subscripts as $\bot_i$ in order to distinguish
	occurrences of $\bot$ as a subterm.
	Similarly, we extend the different types of contexts; and
	we extend reductions in the obvious way.
\end{definition}
\begin{definition}\label{def:bohm_approx}
	The set of approximants is a strict subset of $\dbangb$  generated by the grammar:
\begin{minipage}{0.49\textwidth}
	$$\begin{array}{lll}
	A&:= & \bot\mid B\mid \lambda xA\mid \oc A\mid A\es{A_\oc }x\\
	B&:=& x\mid A_\lambda A\mid \der{A_\oc}\\
	\end{array}$$
\end{minipage}
\begin{minipage}{0.49\textwidth}
	$$\begin{array}{lll}
	A_\oc &:= & B \mid \lambda x A \mid A_\oc \es{A_\oc}x\\
	A_\lambda &:= &B \mid \oc A\mid A_\lambda \es{A_\oc}x\\
	\end{array}$$
\end{minipage}
\end{definition}

\begin{lemma}
	Approximants are the normal forms of $\dbangb$. 
\end{lemma}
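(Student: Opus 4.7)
The plan is to prove the two inclusions separately, each by induction carrying appropriate shape invariants.

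\textbf{Approximants are normal forms.} I would proceed by simultaneous induction on the four mutually recursive grammars $A$, $B$, $A_\oc$, $A_\lambda$, strengthening the statement with two shape invariants: no term of $A_\lambda$ is of the form $L\<\lambda y M\>$, and no term of $A_\oc$ is of the form $L\<\oc M\>$, for any list context $L$ and term $M$. These invariants are precisely what blocks the three redex families at a distance. For each production I verify (i) that no redex appears at the root: in an application $A_\lambda A$ the left-hand side cannot be $L\<\lambda y M\>$, so no dB-step fires; in a dereliction $\der{A_\oc}$ and in a substitution $A\es{A_\oc}x$ the relevant position cannot be $L\<\oc M\>$, so der- and sub-steps are blocked; and the absorbing reductions that the ``obvious'' extension of $\todb$ provides for $\bot$ (namely $\bot M \to \bot$, $\der{\bot} \to \bot$, and $M\es{\bot}x \to \bot$) cannot fire because neither $A_\lambda$ nor $A_\oc$ contains $\bot$; and (ii) that no inner redex appears, by the inductive hypothesis applied to subterms. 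The shape invariants are themselves preserved by the productions $A_\lambda\es{A_\oc}x$ and $A_\oc\es{A_\oc}x$, since a term of the form $N\es{P}x$ has shape $L\<\lambda y M\>$ (resp.\ $L\<\oc M\>$) if and only if its body $N$ already does.

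\textbf{Normal forms are approximants.} I would proceed by structural induction on $M$, strengthening the target statement: if $M$ is a normal form of $\dbangb$ then $M \in A$; moreover, if $M \neq \bot$ and $M$ is not of the form $L\<\lambda y M'\>$, then $M \in A_\lambda$; and if $M \neq \bot$ and $M$ is not of the form $L\<\oc M'\>$, then $M \in A_\oc$. A variable $x$ sits in $B$, hence in all four classes. For $\lambda y N$ and $\oc N$, the IH puts $N \in A$; then $\lambda y N$ enters $A \cap A_\oc$ and $\oc N$ enters $A \cap A_\lambda$ via the appropriate productions, while the ``opposite'' condition is vacuous. For an application $N_1 N_2$, normality forces $N_1 \neq \bot$ and $N_1$ not to be $L\<\lambda y M'\>$ (otherwise a dB-redex or an absorbing step would fire); the strengthened IH then yields $N_1 \in A_\lambda$ and $N_2 \in A$, so $N_1 N_2 \in B$ via $A_\lambda A$. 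The cases $\der N$ and $N_1\es{N_2}x$ are symmetric, using the $A_\oc$ invariant on the body of $\der$ and on the substituend, respectively; for $N_1\es{N_2}x$ the same auxiliary observation as in the first direction lets us transfer the shape conditions from $N_1\es{N_2}x$ onto $N_1$. Finally $\bot \in A$ directly.

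The main obstacle in both directions is the bookkeeping around list-context shapes in the explicit-substitution cases. The crux is the simple auxiliary observation that $N\es{P}x$ has shape $L\<\lambda y M\>$ (resp.\ $L\<\oc M\>$) if and only if $N$ itself has the corresponding shape with a smaller list context, reducing each invariant on a substitution to the same invariant on its body. Once this is noted, the induction runs smoothly, and the careful design of the grammar ensures that the two shape invariants track each other faithfully across both directions.
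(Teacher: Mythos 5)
The paper states this lemma without proof (it treats it as immediate from the design of the grammar), so the only question is whether your argument is sound. Your overall strategy---mutual induction over the four grammars with shape invariants, together with the observation that shapes transfer through the bodies of explicit substitutions---is the right one, but your handling of $\bot$ has a genuine gap. All the rules of \dbang{} act \emph{at a distance}, so the ``obvious'' extension of reduction to $\dbangb$ must also act at a distance: the terms that need to be redexes are $L\<\bot\>N$, $\der{L\<\bot\>}$ and $M\es{L\<\bot\>}{x}$, not merely $\bot N$, $\der{\bot}$ and $M\es{\bot}{x}$. Under the non-distant absorbing rules you assume, the lemma you are proving is actually \emph{false}: $\bot\es{y}{x}\,z$ and $\der{\bot\es{y}{x}}$ are then normal forms (their head $\bot\es{y}{x}$ is neither $\bot$ nor of shape $L\<\lambda y M'\>$ or $L\<\oc M'\>$), yet neither is an approximant, because no term generated by $A_\lambda$ or $A_\oc$ can have $\bot$ at the bottom of its substitution chain. (Note that $\bot\es{A_\oc}{x}$ \emph{is} an approximant---the grammar allows $\bot$ in body position---so the absorbing rules must fire only when $L\<\bot\>$ sits in function, dereliction or substituend position; your rules respect this constraint but only at distance zero, which is exactly what breaks.)

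The same weakness then breaks your second direction as written. The strengthened hypotheses ``$M\neq\bot$ and $M$ not of the form $L\<\lambda y M'\>$ (resp.\ $L\<\oc M'\>$)'' must be replaced by ``$M$ not of the form $L\<\bot\>$ and not of the form $L\<\lambda y M'\>$ (resp.\ $L\<\oc M'\>$)''. With your version, the explicit-substitution case fails: from $N_1\es{N_2}{x}\neq\bot$ you cannot conclude $N_1\neq\bot$, so the induction hypothesis cannot be applied to $N_1$; and the claim you would need is false anyway, since $\bot\es{y}{x}$ satisfies your hypotheses for membership in $A_\lambda$ but $\bot\es{y}{x}\notin A_\lambda$. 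The repair is precisely your own auxiliary observation applied to one more shape: $N\es{P}{x}$ is of the form $L\<\bot\>$ if and only if $N$ is, so the condition ``not of the form $L\<\bot\>$'' transfers through substitution bodies exactly like the $\lambda$- and $\oc$-shapes. With that strengthening (and, in the first direction, the invariant upgraded to: no $A_\lambda$-term is of shape $L\<\bot\>$ or $L\<\lambda y M\>$, and no $A_\oc$-term is of shape $L\<\bot\>$ or $L\<\oc M\>$), both inductions go through against the at-a-distance $\bot$-rules.
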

Substituting $\bot$ by any term preserves normal forms, as a consequence of the syntactic structure of
approximants. First, approximants contain no redexes. Secondly, 
there are no approximants containing
subterms of shape $\der \bot, \bot A, A\es\bot x$ that could hide a
potential redex. Formally:
\begin{lemma}
	\label{lem:approx_bot}
	Let $A$ be an approximant, $M$ any term of $\dbangb$, and $\bot_i$ some
	occurrence of $\bot$ in $A$. If	$A[M/\bot_i]\todbf N$, then
	$N=A[M'/\bot_i]$ with $M\todbf M'$. \emph{i.e.} we cannot create a redex
	when replacing a $\bot$ by a term $M$ in an approximant: the only redexes
	obtained this way are those already present in $M$.
\end{lemma}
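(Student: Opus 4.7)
I will proceed by structural induction on $A$. The key observation, which I establish first, is a grammar invariant of Definition~\ref{def:bohm_approx}: writing any $A_\oc$ (resp.\ $A_\lambda$) in the form $l\<h\>$ where $l$ is a list context and $h$ is not itself an explicit substitution, the head $h$ is never $\bot$ and never of the form $\oc A_1$ (resp.\ $\lambda y A_1$). This follows by a direct induction on the derivations of $A_\oc$ and $A_\lambda$: in the explicit-substitution production we strip one layer off and invoke the induction; in the remaining cases $h$ is a $B$ (variable, application, or dereliction) or a $\lambda y A$ (resp.\ $\oc A$), none of which is $\bot$ or of the forbidden bang (resp.\ abstraction) shape.

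I then show this invariant is preserved under $[M/\bot_i]$. Decomposing $A_\oc = l\<h\>$, the occurrence $\bot_i$ lies either in $l$ or in $h$. In the former case $h$ is untouched; in the latter, since $h \neq \bot$, the outermost constructor of $h[M/\bot_i]$ equals that of $h$, which is still not $\oc(\cdot)$. In both cases $l[M/\bot_i]$ remains a list context. Hence $A_\oc[M/\bot_i]$ is never of the form $L\<\oc N\>$, and symmetrically $A_\lambda[M/\bot_i]$ is never of the form $L\<\lambda y N\>$.

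The main induction on $A$ then proceeds by cases. The base cases $A = \bot$ (take $M' = N$ where $M \todbf N$) and $A = x$ (no reduction possible) are immediate. The cases $A = \lambda y A'$ and $A = \oc A'$ admit no root redex, so the reduction occurs inside $A'[M/\bot_i]$ and the IH applied to $A'$ gives the conclusion. The remaining inductive cases are $A = \der{A_\oc}$, $A = A_\lambda A''$, and $A = A' \es{A_\oc}y$: in each of them a root-level $\todbf$-step would require one of the shapes $L\<\oc P\>$ or $L\<\lambda y P\>$ at the relevant subposition, which is excluded by the preserved invariant. The reduction must therefore happen strictly inside a proper sub-approximant containing $\bot_i$, and the induction hypothesis (applied to the appropriate sub-approximant, noting that every $A_\oc$ and $A_\lambda$ is itself an $A$) yields the desired $M'$.

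The main obstacle is really setting up the grammar invariant in the first paragraph: once it is in place, the rest of the argument is routine case analysis. A subtle point worth double-checking is that $\bot_i$, being a single specified occurrence, lives in exactly one of the subterms at each inductive step, so there is no ambiguity about which sub-approximant the induction hypothesis applies to; in particular, one never has to combine reductions happening simultaneously in different subterms.
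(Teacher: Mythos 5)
Your proof is correct and is essentially a rigorous version of the paper's own justification, which argues only informally from the syntactic structure of approximants (they contain no redexes, and no $\bot$ sits in a redex-forming position); your head-under-list-context invariant is the right formalization of that remark, and it correctly accounts for the distant redexes $L\langle\lambda x P\rangle N$, $P\es{L\langle\oc Q\rangle}{x}$ and $\der{L\langle\oc P\rangle}$. One step you use silently should be made explicit: in the binary cases $A = A_\lambda A''$ and $A = A'\es{A_\oc}{y}$, concluding that the reduction lies inside the sub-approximant containing $\bot_i$ (rather than in the sibling subterm, which is untouched by the substitution) also requires that approximants are normal forms, i.e.\ the lemma stated immediately before this one in the paper.
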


\begin{definition}[Order]
Let $\smaller\ \subseteq \dbangb\times \dbangb$ be the least contextual closed
	preorder on $\dbangb$ generated by setting: 
 $\forall M\in \dbangb$ and for all full context $F$, $F[\bot] \subseteq F[M]$. 
\end{definition}
In other words, $M\smaller N$ if and only if $M=N\overrightarrow{\{P/\bot\}}$. In
the evaluation trees literature, we sometimes find approximation orders where only
terms of shape $\oc P$ are replaced by a $\bot$ (see \emph{e.g.} Dufour and
Mazza~\cite{mazza-dufour}). For technical reasons (Theorem~\ref{thm:bohm-trad} in
particular) due to \cbn{} and
cbv{} embeddings into $\dbang$, we need, for example, $\lambda
x\bot\smaller\lambda xM$ to be a valid approximation.

The following lemma states that when a reduction occurs in a term,
it cannot be \emph{seen} by its approximations, that only represent a part of
the skeleton of their normal form. In other words, when $A\smaller M$, then
$A$ represents a subtree of $M$ which cannot be modified by some reduction.

\begin{lemma}
	\label{lem:bang:bohm:reduce-have-at-least-same-approxs}
	$A\smaller M$ and $M \todbf^* N$ then $A\smaller N$.
\end{lemma}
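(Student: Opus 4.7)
The plan is to unfold $\smaller$ as a $\bot$-substitution and reduce the claim to a multi-hole strengthening of Lemma~\ref{lem:approx_bot}. By the characterization stated just after Definition~\ref{def:bohm_approx}, $A\smaller M$ is equivalent to $M = A\{P_1/\bot_1,\ldots,P_k/\bot_k\}$, where $\bot_1,\ldots,\bot_k$ enumerate the $\bot$-occurrences of $A$ and each $P_i\in\dbangb$. The goal then becomes: if such an $M$ full-reduces to $N$, then $N$ has the same shape $A\{P'_1/\bot_1,\ldots,P'_k/\bot_k\}$, which yields $A\smaller N$.

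The core step I would establish is a multi-hole version of Lemma~\ref{lem:approx_bot}: any one-step reduction $A\{P_1/\bot_1,\ldots,P_k/\bot_k\}\todbf R$ must occur strictly inside one of the substituted terms, producing $R = A\{P_1/\bot_1,\ldots,P'_j/\bot_j,\ldots,P_k/\bot_k\}$ for some $j$ with $P_j\todbf P'_j$. Once this is in hand, the theorem follows by a straightforward induction on the length of $M\todbf^* N$: the base case is immediate, and each single step preserves the $A\overrightarrow{\{P/\bot\}}$-shape by the multi-hole lemma, allowing the induction hypothesis to close the argument.

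The main obstacle will be the multi-hole generalization itself. My plan is to induct on $k$, the number of $\bot$-occurrences in $A$, reducing at each stage to the single-hole Lemma~\ref{lem:approx_bot}. To isolate a particular $\bot$, I would replace the remaining $\bot$-occurrences by fresh, distinct variables: this operation still yields an approximant, because the grammar of Definition~\ref{def:bohm_approx} places both $\bot$ and variables in the $A$-sort (variables via $x\in B$ and $B\subseteq A$), so the two are interchangeable at every $\bot$-position. Alternatively, the multi-hole claim can be proved by a direct structural induction on $A$, checking in each case that the sort constraints of $A_\lambda$ and $A_\oc$ preclude $\bot$---and hence any substituted material---from occupying positions (the function side of an application, the content of a dereliction, or the bang side of an explicit substitution) where it could complete a redex with the surrounding $A$-skeleton. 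Either route is essentially a bookkeeping extension of the single-hole argument, but care is needed to confirm that the grammatical invariants survive the replacement of $\bot$'s by fresh variables.
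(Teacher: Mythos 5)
Your proposal matches the paper's proof essentially step for step: the paper likewise unfolds $A\smaller M$ as $M=A\{P_1/\bot_1,\ldots,P_k/\bot_k\}$, invokes Lemma~\ref{lem:approx_bot} to localize each one-step reduction inside a single $P_j$ (so that $N=A(\{P_i/\bot_i\})_{i\neq j}\{P'_j/\bot_j\}$ with $P_j\todbf P'_j$), and closes by induction on the number of reduction steps. The multi-hole strengthening you make explicit is exactly what the paper tacitly assumes when it cites the single-occurrence Lemma~\ref{lem:approx_bot}; of your two routes for it, prefer the direct structural induction on $A$ using the sort constraints (the $A_\lambda$/$A_\oc$ sorts exclude $\bot$ all along list-context spines, so no redex can straddle the skeleton), since the fresh-variable detour still needs that same positional argument to transfer reductions of $M$ to the fresh-variable term, where redex creation under variable substitution must again be excluded.
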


\begin{definition}[Set of approximants]
\label{def:bang:bohm:approx-set}
Given $M \in \dbang$, the set of approximants of $M$ is defined as follows:
$\setApprox(M) = \set{A \mid M \todbf^* N, A \smaller N}$.
\end{definition}

Note that $\setApprox{(M)}$ is never empty, since it contains at least $\bot$.

\begin{example}\label{ex:terms-approx}
	Consider the terms given in Example~\ref{ex:terms}.
	\begin{itemize}
		\item The only reducts of $\Omega$ being $\Omega$ itself, and
			since the syntax $A$ of approximants contains no redex,
			we easily conclude that $\setapprox{(\Omega)}=\{\bot\}$.
		\item The reducts of $Y_x^n$ are of shape $x!x!x\dots !Y_x^n$, we
			conclude that $\setapprox{(Y_x^n)}$ contains precisely
			the terms $\bot,x\bot,x\oc\bot,x\oc x\bot,x\oc
			x\oc\bot,\dots $.
		\item The reducts of $Y_x^v$ follow the same observation, minus
			the exponentials, and $\setapprox{(Y_x^v)}$ contains
			precisely $\bot, x\bot,xx\bot,xxx\bot,\dots $
	\end{itemize}
\end{example}

\begin{lemma}
	\label{lem:bang:bohm:reduce-have-same-approxs}
	$M \todbf^* N$ then $\setApprox(M) = \setApprox(N)$.
\end{lemma}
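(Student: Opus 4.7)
The plan is to prove both inclusions directly, using confluence of $\todbf$ (stated after Definition~\ref{def:dbang-terms-and-ctxs}) together with Lemma~\ref{lem:bang:bohm:reduce-have-at-least-same-approxs}.

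For the inclusion $\setApprox(N) \subseteq \setApprox(M)$: take $A \in \setApprox(N)$. By definition, there exists $N'$ with $N \todbf^* N'$ and $A \smaller N'$. Since $M \todbf^* N$ by hypothesis and $\todbf^*$ is transitive, we obtain $M \todbf^* N'$, hence $A \in \setApprox(M)$. This direction is essentially definitional and requires no heavy machinery.

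For the reverse inclusion $\setApprox(M) \subseteq \setApprox(N)$: take $A \in \setApprox(M)$, so there exists $M'$ with $M \todbf^* M'$ and $A \smaller M'$. I would then invoke confluence of $\todbf$ applied to the peak $M \todbf^* M'$ and $M \todbf^* N$: this yields a common reduct $P$ with $M' \todbf^* P$ and $N \todbf^* P$. Applying Lemma~\ref{lem:bang:bohm:reduce-have-at-least-same-approxs} to $A \smaller M'$ and $M' \todbf^* P$ gives $A \smaller P$. Together with $N \todbf^* P$, this shows $A \in \setApprox(N)$.

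There is no real obstacle here: the proof is a clean diagram-chasing argument once confluence and the monotonicity of $\smaller$ under reduction (Lemma~\ref{lem:bang:bohm:reduce-have-at-least-same-approxs}) are available. The only minor care needed is to keep track of which reduction goes in which direction and to note that the asymmetric form of Lemma~\ref{lem:bang:bohm:reduce-have-at-least-same-approxs} (the approximation survives further reductions of the ambient term) is exactly what closes the confluence square.
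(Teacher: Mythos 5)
Your proof is correct and follows essentially the same route as the paper: the inclusion $\setApprox(N) \subseteq \setApprox(M)$ comes directly from Definition~\ref{def:bang:bohm:approx-set}, and the reverse inclusion from Lemma~\ref{lem:bang:bohm:reduce-have-at-least-same-approxs}. The only difference is that you spell out the confluence step needed to join $M'$ and $N$ before applying that lemma, a step the paper's very terse proof leaves implicit.
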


\begin{remark}~\label{rem:approx-normal}
	If $A=F[\vec\bot]\in\setapprox(M)$ (where $F$ is a
	multi-hole context), then $M\to^*F[\vec N]$ for some $\vec N$. In particular, if
	$A$ contains no $\bot$ subterms, then $M$ has a normal form, which is
	exactly $A$.
\end{remark}

\begin{definition}[Ideal]
	Let $X\subseteq \dbangb$. We set $X$ is an \emph{ideal} when $X$ is
	downwards closed for $\smaller$, and directed: for all $M, N \in X$
	there exists some upper	bound (with respect to $\smaller$) to
	$\{M,N\}$.
\end{definition}

\begin{lemma}\label{lem:ideal}
	Let $M\in\dbang$. $\setApprox(M)$ is an ideal.
\end{lemma}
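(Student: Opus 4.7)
The plan is to verify the two defining conditions of an ideal in turn: that $\setApprox(M)$ is downward closed under $\smaller$, and that it is directed.

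Downward closure is straightforward. Suppose $A \in \setApprox(M)$ and $A'$ is an approximant with $A' \smaller A$. Unpacking Definition~\ref{def:bang:bohm:approx-set}, there is $N$ with $M \todbf^* N$ and $A \smaller N$. Since $\smaller$ is a preorder, transitivity gives $A' \smaller N$, hence $A' \in \setApprox(M)$.

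For directedness, the strategy is first to align two given approximants against a common reduct of $M$, then to construct a join. Given $A_1, A_2 \in \setApprox(M)$, fix witnesses $N_1, N_2$ with $M \todbf^* N_i$ and $A_i \smaller N_i$. By confluence of $\todbf$ (Theorem 1 of \cite{kag24}), there exists $P$ with $N_i \todbf^* P$ for $i=1,2$, and applying Lemma~\ref{lem:bang:bohm:reduce-have-at-least-same-approxs} twice yields $A_1 \smaller P$ and $A_2 \smaller P$. It then suffices to exhibit an approximant $A$ with $A_1, A_2 \smaller A \smaller P$, for then $A \in \setApprox(M)$ and $A$ is the required upper bound.

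The join $A_1 \sqcup A_2$ is defined by induction on the structure of $P$: if $A_1 = \bot$ take $A_2$, symmetrically if $A_2 = \bot$ take $A_1$; otherwise, because each $A_i \smaller P$ with $A_i \neq \bot$, both $A_i$ must share the top-level constructor of $P$, and we recurse on immediate subterms. By construction both inequalities $A_i \smaller A_1 \sqcup A_2 \smaller P$ hold. The main obstacle, and the only nontrivial point, is checking that $A_1 \sqcup A_2$ belongs to the grammar of Definition~\ref{def:bohm_approx}. For this, observe that each position of $P$ is visited by the recursion in a specific non-terminal class ($A$, $B$, $A_\oc$, or $A_\lambda$) inherited from how the recursion enters it, and the two subterms of $A_1, A_2$ at this position must both lie in that class, since each $\smaller$ the corresponding subterm of $P$. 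The join operation preserves the class: $\bot$ is introduced only where both $A_i$ already had $\bot$, which occurs exclusively at $A$-positions; and the non-$\bot$ cases are closed under the constructor prescribed by $P$, whence a case analysis over the four non-terminals concludes.
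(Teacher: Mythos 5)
Your proof is correct, but it takes a genuinely different route from the paper's. The paper proves directedness by induction on the structure of $A_1$: at each step it uses Lemma~\ref{lem:bang:bohm:reduce-have-same-approxs} to transfer $A_2$ into the approximant set of a reduct of $M$ whose shape matches $A_1$ (e.g.\ $M \todbf^* \lambda x N$ when $A_1 = \lambda x A_1'$), then case-analyses $A_2$ (either $\bot$ or the same head constructor) and builds an upper bound $A_3$ as the induction unwinds; confluence is never invoked explicitly, being hidden inside the proof of that invariance lemma. You instead invoke confluence of $\todbf$ once, up front, to obtain a single common reduct $P$ with $A_1, A_2 \smaller P$ (via Lemma~\ref{lem:bang:bohm:reduce-have-at-least-same-approxs}), and then reduce directedness to a purely syntactic fact: two prunings of the same term admit a join $A_1 \sqcup A_2$ that is again a pruning of $P$. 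You correctly identify the only delicate point, namely that the join stays inside the grammar of Definition~\ref{def:bohm_approx}; your class-tracking argument works because the non-terminal class of a position is determined by the constructors along the path from the root (the same in $A_1$, $A_2$, and the join, all being prunings of $P$), $\bot$ can only occur at $A$-positions, and each class is closed under the relevant constructors with children in the prescribed classes. The trade-off: your decomposition is more modular and actually produces the \emph{least} upper bound, isolating the order-theoretic content from the rewriting content, whereas the paper's induction on $A_1$ reuses its already-established invariance lemma, stays within the $\setApprox$ formalism throughout, and distributes the grammar-membership check over the case analysis (e.g.\ using that the left part of an application cannot be an abstraction) instead of a separate positional argument.
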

	
In order to define approximation trees, we follow the long-established tradition for term
rewriting systems, including \cbn{} and \cbv{} \cite{Ariola70, Boudol83,Barendregt84,
Amadio98, Ariola02, kerinec:bohmTreeCbV}, based on the ideal completion. 
This method constructs the set of ideals of approximants (ordered by a
direct partial order). Evaluation trees, such as Böhm trees are then identified with such ideals.
The finite and infinite ideals represent respectively the finite and
infinite trees. For a $\lambda$-term $M$, its evaluation tree is the ideal
generated by its set of approximants; equivalently, it can be seen as
the supremum of these approximants in the associated directed-complete
domain.
This domain admits a concrete presentation as a coinductive grammar extending
that of approximants, where constructors may be unfolded infinitely often.
Given $M \in \dbang$, $\setApprox(M)$ has a supremum, noted $\cup\setApprox(M)$,
which is a potentially infinite tree.
\begin{definition}[Approximation Tree]
	The approximation tree of a term $M$ in $\dbang$, is given by
	$\cup\setApprox(M)$, and denoted $\btb(M)$.
\end{definition}

Approximation trees satisfy the following immediate properties (we already used these facts on approximants for previous results, they lift immediately to approximation trees).
\begin{proposition}[Some properties of approximation trees]\label{prop:bohm-properties}~
	\begin{itemize}
		\item If $M$ is in normal form, then $\btb(M)=M$.
		\item $\btb(M)=\bot$ if and only if
			$\setApprox(M)=\{\bot\}$\footnote{
				This distinguishes our approximation trees from those of
				Mazza and Dufour~\cite{mazza-dufour}, in which
				$\btb(M)=\bot$ as soon as $M$ has no surface
				normal form. Indeed, for technical reasons
				(relative to \cbv{}
				embeddings), if $\btb(M)=\bot$,
				we need to have $\btb(\lambda xM)=\lambda
				x\bot$, and not $\bot$, because $\lambda x\bot$
				embeds to $\oc(\lambda x\bot)$, while $\bot$
				embeds to $\bot$, which loses the exponential
				and breaks commutation properties between approximation
				trees and \cbv{} embedding
				(Theorem~\ref{thm:bohm-trad}). This distinction
				vanishes at the semantical level: as soon as we
				consider Taylor expansion of approximation trees, the
				trees having no surface normal form are given an
				empty expansion
				(Definition~\ref{def:taylor-dbangb}).
			}, if and only if any reduct of $M$ is a redex.
		\item If $M\to^*N$, $\btb(M)=\btb(N)$, for example
		 $\btb(\der{\oc M})=\btb(M)$.
		%  $\btb((\lambda xM)N)=\btb(M[N/x])$ and
		%  $\btb(M[\oc N/x])=\btb(M\{N/x\})$
		\item $\btb(\lambda xM)=\lambda x\btb(M)$ and $\btb(\oc M)=\oc(\btb(M))$.
		\item If $\btb(M)\neq L\<\lambda x-\>$, then
			$\btb(MN)=\btb(M)\btb(N)$.
		\item If $\btb(M)\neq L\<\oc-\>$, then $\btb(\der M)=\der{\btb(M)}$ and
			$\btb(N[M/x])=\btb(N)[\btb(M)/x]$
	\end{itemize}
\end{proposition}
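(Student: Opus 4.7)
The plan is to verify each of the six items from the definition $\btb(M)=\cup\setApprox(M)$ together with a structural analysis of approximants, relying on the invariance lemmas already established (notably Lemmas~\ref{lem:bang:bohm:reduce-have-same-approxs}, \ref{lem:approx_bot}, and \ref{lem:ideal}). Most items are combinatorial consequences of how approximants decompose under each term constructor.

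The first four items I expect to be essentially immediate. For bullet~1, a normal-form $M$ satisfies the approximant grammar (by the earlier identification of approximants with $\dbangb$-normal forms), so $M\in\setApprox(M)$; as $M$ contains no $\bot$ it is maximal for $\smaller$, giving $\btb(M)=M$. For bullet~2, the equivalence $\btb(M)=\bot\iff\setApprox(M)=\{\bot\}$ follows directly from $\bot$ being the minimum in the ideal $\setApprox(M)$ (Lemma~\ref{lem:ideal}): the supremum is $\bot$ iff the set is $\{\bot\}$. The characterization via ``every reduct is a redex'' I would read off the grammar by a case analysis: a reduct that is not a root-redex exposes an outer constructor permitted by the approximant syntax, yielding a non-$\bot$ approximant of $M$, and conversely any non-$\bot$ approximant pins down a non-redex reduct via Lemma~\ref{lem:approx_bot}. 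Bullet~3 is precisely Lemma~\ref{lem:bang:bohm:reduce-have-same-approxs}. For bullet~4, no reduction acts on an outer $\lambda$ (resp.\ outer $\oc$), hence every full reduct of $\lambda xM$ has the shape $\lambda xM'$ with $M\to^*M'$; this gives $\setApprox(\lambda xM)=\{\bot\}\cup\{\lambda xA : A\in\setApprox(M)\}$, and the outer constructor commutes with suprema in the ideal completion.

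Bullets~5 and~6 form the technical core. My plan is to show that the hypothesis $\btb(M)\neq L\<\lambda x-\>$ prevents $M$ from ever reducing to a term of shape $L\<\lambda xP\>$: by contraposition, if $M\to^*L\<\lambda xP\>$, then bullet~3 gives $\btb(M)=\btb(L\<\lambda xP\>)$, and a structural argument through the outer chain of explicit substitutions and the outer $\lambda$ (which are preserved by all reductions keeping the binding structure intact) yields $\btb(M)=L'\<\lambda x\btb(P)\>$, contradicting the hypothesis. Consequently, all reducts of $MN$ decompose as $M'N'$ with $M\to^*M'$ and $N\to^*N'$, and the grammar restricts $\setApprox(MN)$ to $\{\bot\}\cup\{A_\lambda A : A_\lambda\in\setApprox(M),\,A\in\setApprox(N)\}$; taking suprema then commutes with application in the ideal completion, producing $\btb(M)\btb(N)$. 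The same scheme handles $\der M$ and $N[M/x]$ under the hypothesis $\btb(M)\neq L\<\oc-\>$, by ruling out the root-redexes $\der{L\<\oc-\>}$ and $P\es{L\<\oc-\>}x$ respectively.

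I expect the main obstacle to be the structural step ``$M\to^*L\<\lambda xP\>$ implies $\btb(M)$ has shape $L'\<\lambda x-\>$'', which calls for an auxiliary commutation of $\btb$ through explicit substitutions. Since this resembles bullet~6 applied to the premise of bullet~5, some care is needed to avoid circularity; the clean route is an induction on the length of $L$, using the fact that firing an outer substitution $\es{Q}{y}$ (when $Q$ reduces to some $L_i\<\oc-\>$) reshapes the list context but preserves the outer $\lambda$ above $P$, so reducts keep their ``list-context-around-$\lambda$'' skeleton and so does the resulting Böhm tree. Once this commutation is in hand, the decomposition of $\setApprox(MN)$, $\setApprox(\der M)$, and $\setApprox(N[M/x])$ follows by direct grammatical case analysis combined with Lemma~\ref{lem:approx_bot}, which guarantees that no approximant can encode a redex pattern excluded by the hypothesis.
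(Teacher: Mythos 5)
The paper never writes out a proof of this proposition: it is asserted as a list of ``immediate properties'' lifting from facts already established about approximants, and nothing for it appears in the appendix. So the only question is whether your blind argument is sound. Your bullets 1, 3 and 4 are fine (bullet 3 is indeed exactly Lemma~\ref{lem:bang:bohm:reduce-have-same-approxs}, bullet 1 uses the identification of approximants with normal forms, and the first equivalence of bullet 2 follows from $\bot$ being the minimum of the ideal $\setApprox(M)$ of Lemma~\ref{lem:ideal}).

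The genuine gap is in the step you yourself single out as the technical core. The claim ``$M\todbf^* L\<\lambda xP\>$ implies $\btb(M)$ has shape $L'\<\lambda x-\>$'' is false, because the grammar of Definition~\ref{def:bohm_approx} forbids $\bot$ in exactly the positions your argument needs it: $A_\oc$ (content of an explicit substitution) and $A_\lambda$ (function position of an application) do not contain $\bot$. Concretely, take $M=(\lambda x z)\es{\Omega}{y}$. The substitution can never fire ($\Omega$ never reduces to any $L\<\oc-\>$), the only approximant of $\Omega$ is $\bot$, and $\bot\notin A_\oc$, so $\setApprox(M)=\{\bot\}$ and $\btb(M)=\bot$, which is not of shape $L\<\lambda x-\>$ --- yet $M$ literally \emph{is} of shape $L\<\lambda xP\>$. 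Your contraposition would then conclude that $Mw$ has no root redex, but $Mw\todbf (z\es{w}{x})\es{\Omega}{y}$, and one computes $\setApprox(Mw)=\{\bot\}$, hence $\btb(Mw)=\bot$ while $\btb(M)\btb(w)=\bot w$: the decomposition of $\setApprox(MN)$ you aim for genuinely breaks on such terms (this also shows that the statement itself must be read with care, since the hypothesis ``$\btb(M)\neq L\<\lambda x-\>$'' does not prevent $M$ from reducing to an abstraction under a list context carrying meaningless content; the intended hypothesis is really about the reducts of $M$). The same defect invalidates your argument for the third equivalence of bullet 2: a reduct that is not a root redex need not expose a non-$\bot$ approximant, since $\der{\Omega}$, $\Omega w$ and $z\es{\Omega}{y}$ are not redexes while $\der{\bot}$, $\bot w$ and $z\es{\bot}{y}$ are not approximants, so all these terms have $\{\bot\}$ as set of approximants; and the analogous configuration $(\oc z)\es{\Omega}{y}$ defeats your plan for bullet 6. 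Any correct treatment has to confront these degenerate configurations (meaningless subterms in rigid positions, clashes) explicitly rather than argue purely from the outer constructor.
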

In particular, we can infer from the facts above that if $\btb(M)$ is an application, then
it is equal to $(x)\btb(N)$ for some $N$ (hence $M\to^*L\<x\>N$).

\begin{example}\label{ex:terms-bt}
	From Example~\ref{ex:terms-approx}, we can infer that
	$\btb(\Omega)=\bot$, $\btb(Y^n_x)$ is the infinite application $x\oc
	x\oc x\oc x\dots $, and $\btb(Y^v_x)$ is also an infinite application
	$xxxxxx\dots $. This is the intended behaviour of approximation trees, as in this
	category of terms they represent, at the limit, the amount of result
	produced by a computation, even if the term itself has no normal form.
\end{example}

\subsection{The commutation between approximation tree and Taylor expansion}
\label{subsec:bang:commutation}
Combining the previous results, we state the Commutation Theorem. We start by defining the Taylor expansion of approximation trees.

\begin{definition}\label{def:taylor-dbangb}
We extend the Taylor expansion to $\dbangb$-terms by setting
	$\taylor \bot=\emptyset$ (recall that $f[\emptyset]=\emptyset$ for any
	full context $f$). In other words, there is no $\dbangres$ term $m$ such
	that $m\app \bot$.
\end{definition}

One immediate consequence of this definition is that, for $M$ in $\dbangb$, $\taylor
M\neq\emptyset$ if $M$ contains no $\bot$ as a surface subterm. In other words, $M$
expands to a non-empty set if and only if the $\bot$ are under exponentials
$\oc$ (in that case, these exponentials can be approximated by empty bags).
Moreover, $M\smaller N$ implies $\taylor M\subseteq \taylor N$.

\begin{definition}[Taylor expansion of approximation trees]\label{def:taylor_bohm}
	Given $M\in \dbang$, $\taylor{\btb(M)}=\cup_{a\in\setapprox(M)}\taylor a$.
\end{definition}

\begin{example}\label{ex:bt-taylor}
	Following the terms of previous examples, we easily check that
	$\taylor{\btb(\Omega)}=\emptyset$ (because
	$\setapprox{(\Omega)}=\{\bot\}$).
	Recall that $\setapprox{(Y^n_x)}=\{\bot,x\bot,x\oc\bot,\dots \}$. The first of
	these approximants having a non-empty expansion is $x\oc\bot$,
	approximated by $x[]$, 
	$\taylor{\btb(Y^n_x)}=\{x[],x[x[],\dots ,x[\dots ]],\dots \}$.
	And
	$\taylor{\btb(Y^v_x)}=\emptyset$. Indeed, all approximants of $Y^v_x$ are
	of shape $xxx\dots x\bot$, and have a surface $\bot$ that expands to
	$\emptyset$.
	With examples~\ref{ex:terms-approx}
	and~\ref{ex:terms_app}, we observe for these terms the identity of $\taylornf{M}$ and
	$\taylor{\btb(M)}$ which is proved globally in this section.
\end{example}
Remark that, since $\setapprox(M)$ is an ideal, $\taylor{\btb (M)}$ is a directed
union. The purpose of Taylor expansion is to approach a term by finitary resource
terms only, so we do not consider any infinite supremum of this union, and keep
a set of terms, this approximation being inductive; while Böhm trees are
coinductive and consist of infinite objects.

\fnl{The two following lemmas formalize a compatibility between approximation order
and inclusion of Taylor expansion.}

\begin{lemma}\label{lem:approx-taylor-subset}
	Let $A\smaller M$, then $\taylor A\subseteq\taylor M$.
\end{lemma}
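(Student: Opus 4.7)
The plan is to proceed by structural induction on $A$, exploiting a key observation about the order $\smaller$: when $A$ is not the constant $\bot$, the outermost constructor of $A$ must coincide with that of $M$. This is because $\smaller$ is generated (by contextual closure) from the axiom $F\<\bot\>\smaller F\<N\>$, so $M$ is obtained from $A$ by substituting arbitrary terms for some $\bot$-occurrences in $A$, and such substitution cannot alter the root of a non-$\bot$ term. Consequently, whenever $A$ is not $\bot$, $M$ decomposes with the same top-level constructor as $A$, and the immediate subterms are again related by $\smaller$, enabling the inductive hypothesis.

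For the base case $A=\bot$, Definition~\ref{def:taylor-dbangb} gives $\taylor{A}=\emptyset$, so the inclusion is vacuous. For $A=x$, the only $M$ with $x\smaller M$ is $M=x$ itself, and the inclusion is trivial.

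For the inductive cases, we treat each syntactic constructor uniformly. Write $A$ as $\lambda y A_1$, $A_1 A_2$, $\der{A_1}$, $A_1\es{A_2}{y}$, or $\oc A_1$; by the observation above, $M$ has the matching shape with immediate subterms $M_i$ satisfying $A_i\smaller M_i$. Any $m\in\taylor{A}$ decomposes according to the rules of Figure~\ref{fig:approx_dbang} into $m=\lambda y\, m_1$, $m_1 m_2$, $\der{m_1}$, $m_1\es{m_2}{y}$, or $[n_1,\dots,n_p]$ respectively, with $m_i\app A_i$ (or, in the $\oc$-case, $n_j\app A_1$ for each $j\leq p$, where $p$ is arbitrary). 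Applying the induction hypothesis componentwise gives $m_i\app M_i$ (resp.\ $n_j\app M_1$), and reapplying the same approximation rule yields $m\app M$, hence $m\in\taylor{M}$.

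No genuine obstacle is anticipated: the argument reduces to checking that each rule of the approximation relation lifts through $\bot$-substitution. The only mildly delicate case is the $\oc$-constructor, since the rule admits bags of arbitrary cardinality; however, the uniformity of the rule in the size of the bag together with the pointwise application of the induction hypothesis closes this case without difficulty.
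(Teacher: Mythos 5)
Your proof is correct and follows essentially the same route as the paper's: structural induction on $A$, with the base cases $A=\bot$ (empty expansion) and $A=x$, and the inductive cases handled by observing that $M$ shares the outermost constructor of $A$, decomposing the approximant accordingly, and applying the induction hypothesis componentwise. The paper's proof is terser (it spells out only the $\bot$, variable, and abstraction cases, leaving the rest as routine), but your explicit treatment of the bag case and of the constructor-matching observation fills in exactly the details it elides.
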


\begin{lemma}\label{lem:approx_subset_nft}
	Let $A\in\setapprox(M)$, then $\taylor A\subseteq\taylornf M$.
\end{lemma}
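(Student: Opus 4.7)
The plan is to unfold the definition of $\setApprox(M)$ and then chain the two available tools: Lemma~\ref{lem:approx-taylor-subset}, to move from a resource approximant of $A$ to a resource approximant of a reduct of $M$, and the backward direction of the full simulation (Lemma~\ref{lem:simu_dbr_full_bis}), to pull that resource approximant back to $M$ itself.

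Concretely, let $A \in \setApprox(M)$; Definition~\ref{def:bang:bohm:approx-set} provides some $N \in \dbang$ with $M \todbf^* N$ and $A \smaller N$. Pick any $m \in \taylor A$; Lemma~\ref{lem:approx-taylor-subset} gives $m \in \taylor N$. Iterating Lemma~\ref{lem:simu_dbr_full_bis} along $M \todbf^* N$ then produces some $m' \app M$ with $m' \totodbr^* m$, and since parallel reduction refines to multistep full reduction ($\totodbr \subseteq \todbrf^*$), we obtain $m' \todbrf^* m$. To conclude $m \in \nf{m'} \subseteq \taylornf M$, it then suffices to verify that $m$ is itself in $\todbrf$-normal form.

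I expect this last verification to be the main, although routine, obstacle. The auxiliary claim I would isolate is: for every Böhm approximant $A$ and every $m \app A$, the resource term $m$ is already a full normal form. It follows from a structural induction on $A$ driven by the grammar of Definition~\ref{def:bohm_approx}. The key point is that the stratification into $A_\lambda$ and $A_\oc$ already rules out, at the syntactic level, the three kinds of redexes of $\dbangres$: no application has a (distant) abstraction at its head, no dereliction has a (distant) bag under it, and no explicit substitution has a bag on its right-hand side. These shape constraints transfer from $A$ to $m$ through the rules of Figure~\ref{fig:approx_dbang}, since each of those rules preserves the outermost syntactic constructor. The $\bot$-leaves require no special care: either they appear under a bang, in which case $m$ carries an empty bag there and no redex is hidden, or they appear elsewhere and $\taylor A = \emptyset$, making the statement vacuous for such $A$.
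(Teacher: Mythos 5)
Your proposal is correct and follows essentially the same route as the paper's proof: unfold $\setapprox(M)$ to get $M\todbf^* N$ with $A\smaller N$, apply Lemma~\ref{lem:approx-taylor-subset} to obtain $\taylor A\subseteq\taylor N$, observe that every element of $\taylor A$ is a $\todbrf$-normal form (the paper asserts this fact without your explicit structural induction, but it is the same observation), and transfer these normal forms back to $\taylornf M$. The only difference is bookkeeping: where the paper invokes Lemma~\ref{lem:bohm-taylor:equiv-imply-same-nfTaylor} for the final transfer, you inline its proof by iterating the pull-back Lemma~\ref{lem:simu_dbr_full_bis} and using $\totodbr\ \subseteq\ \todbrf^*$ — these are interchangeable, since that lemma is itself proved by exactly this iteration.
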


\fnl{Analogously, normalized resource approximants match the tree approximants.}

\begin{lemma}
	\label{lem:taylor-bohm:nf-imply-in-approx}
	Let $m\app M$ in normal form, then there exists an approximant $A$ such that
	$A\smaller M$ and $m \app A$
	(remember that $m\app M$ and $m\in\taylor M$ are the same thing).
\end{lemma}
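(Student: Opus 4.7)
The plan is to proceed by structural induction on the $\dbangres$-normal term $m$, with a strengthened inductive hypothesis that tracks the grammar class of the produced approximant: besides obtaining some approximant $A$ with $A \smaller M$ and $m \app A$, if $m$ is not of the shape $l\<\lambda x\,m_0\>$ then $A$ can be chosen to be an $A_\lambda$-approximant, and if $m$ is not of the shape $l\<[m_1, \ldots, m_k]\>$ then $A$ can be chosen to be an $A_\oc$-approximant. This mirrors the grammar of Definition~\ref{def:bohm_approx}, in which $A_\lambda$ excludes precisely those head-forms ($\bot$ and $\lambda$) that would create a $\dbang$-redex in the function of an application, and $A_\oc$ excludes those ($\bot$ and $\oc$) that would create one under a dereliction or on the right of an explicit substitution. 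The strengthening is needed for the inductive cases to compose: when $m = p\,q$, for example, the $B$-clause $A_\lambda A$ of the grammar requires the approximant produced for $p$ to be an $A_\lambda$-approximant.

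The structural cases for variables, abstractions, derelictions, applications and explicit substitutions all follow the same pattern. The assumption that $m$ is in normal form restricts the shape of its subterms in exactly the way required by the strengthened IH — for instance, if $m = \der n$ then $n$ cannot be a list context around a bag, lest $m$ itself be a $\delta$-redex; and if $m = p\,q$ then $p$ cannot be a list context around a $\lambda$-abstraction, so the IH delivers an $A_\lambda$-approximant for $p$. Applying the IH to each subterm yields approximants in the right grammar class, and assembling them with the same constructor as in $m$ produces the desired $A$; the relation $A \smaller M$ follows by contextuality of $\smaller$, and $m \app A$ from the corresponding rule of Figure~\ref{fig:approx_dbang}.

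The main obstacle is the bag case: $m = [m_1, \ldots, m_k] \app \oc M'$. The IH applied to each $m_i$ (which is normal as a subterm of $m$) yields approximants $A_i \smaller M'$ with $m_i \app A_i$, but the approximation rule for $\oc$ requires a \emph{single} approximant $A'$ satisfying $m_i \app A'$ for every $i$. My plan is to build $A'$ as the pointwise join of the $A_i$'s in the poset $\{N \mid N \smaller M'\}$: at each position of $M'$, $A'$ reveals $M'$'s subterm whenever some $A_i$ does, and carries $\bot$ only at positions hidden by every $A_i$. By construction $A' \smaller M'$ and each $A_i \smaller A'$, hence $m_i \app A'$ by Lemma~\ref{lem:approx-taylor-subset}. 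The delicate verification is that $A'$ is itself an approximant, i.e., a $\dbangb$-normal form: any visible redex in $A'$ at some position $p$ would require the surrounding structure (say, an application) to come from some $A_j$; but in $A_j$ the function slot of that application must, by the approximant grammar, be an $A_\lambda$-approximant, which excludes both $\bot$ and $\lambda$. Since $A_j \smaller M'$ and $M'$ necessarily carries a $\lambda$-abstraction at that position (as this is what makes $A'$'s redex visible), no such $A_j$ can exist, giving the contradiction; the $\der$ and substitution cases are symmetric. Setting $A = \oc A'$ thus lies in the approximant grammar (and in $A_\lambda$), with $A \smaller \oc M' = M$ and $m \app A$, closing the induction.
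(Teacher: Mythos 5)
Your proof is correct and follows the same skeleton as the paper's own proof: structural induction on the normal resource term $m$, with the compound cases (application, dereliction, explicit substitution) using normality of $m$ to exclude the shapes $l\<\lambda x-\>$ and $l\<[\,\cdots]\>$ that would push the assembled approximant outside the grammar of Definition~\ref{def:bohm_approx}. Your strengthened induction hypothesis tracking the classes $A_\lambda$ and $A_\oc$ is a more rigorous rendering of what the paper dispatches as ``a simple examination of the syntax of approximants''.

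The genuine divergence is the bag case, and there your argument is not merely stylistic: it repairs a real gap in the paper's proof. The paper handles $m=[n_1,\ldots,n_k]\app\oc N$ by obtaining $A_i\smaller N$ with $n_i\app A_i$ from the induction hypothesis and then ``taking $A=\oc A_i$, which works for any $i$''. As you observe, this does not match the approximation rule for bags, which demands a \emph{single} approximant majorizing every element: the $A_i$ may reveal incompatible portions of $N$ (for $N=x\,\oc P\,\oc Q$, the elements $n_1=x[p][\,]$ and $n_2=x[\,][q]$ can come with $A_1=x\,\oc A_P\,\oc\bot$ and $A_2=x\,\oc\bot\,\oc A_Q$, and neither works for both). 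Your pointwise join — reveal a position of $N$ exactly when some $A_i$ does — is the missing ingredient: it gives $A'\smaller N$ with each $A_i\smaller A'$, hence $n_i\app A'$ by Lemma~\ref{lem:approx-taylor-subset}, and your verification that $A'$ is itself an approximant is sound, since a (potential) redex revealed in $A'$ would force some $A_j$ to place an application (resp.\ dereliction, substitution) node whose critical child in $N$ has head $\lambda$ or $\bot$ (resp.\ $\oc$ or $\bot$), precisely what the classes $A_\lambda$ and $A_\oc$ forbid along spines of explicit substitutions. Note that appealing instead to directedness (Lemma~\ref{lem:ideal}) plus Lemma~\ref{lem:approx-taylor-subset} would be slightly off target, since that lemma only produces an upper bound below some \emph{reduct} of $N$, whereas the statement requires $A\smaller M$ itself; your direct join stays below $N$, which is exactly what is needed.
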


From the previous lemmas we deduce the Commutation Theorem:
\begin{theorem}
	\label{thm:taylor-bohm-commute}
	Let $M\in\dbang$. $\taylor{ \btb(M)} =\taylornf M$.
\end{theorem}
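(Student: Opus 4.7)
The plan is to prove the theorem by showing the two inclusions $\taylor{\btb(M)} \subseteq \taylornf M$ and $\taylornf M \subseteq \taylor{\btb(M)}$ separately, using the lemmas accumulated just above as the main tools. Unfolding definitions, $\taylor{\btb(M)} = \bigcup_{A \in \setapprox(M)} \taylor{A}$ and $\taylornf{M} = \bigcup_{m \app M} \nf{m}$, so it suffices to match up the indexing sets element-by-element.

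For the $\subseteq$ direction, I would start from an arbitrary $t \in \taylor{\btb(M)}$, extract an approximant $A \in \setapprox(M)$ with $t \app A$, and directly invoke Lemma~\ref{lem:approx_subset_nft} to conclude $t \in \taylornf{M}$. This half is essentially a repackaging of that lemma combined with the definition of $\taylor{\btb(M)}$ as a union over $\setapprox(M)$.

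For the $\supseteq$ direction, the idea is to travel through the calculus: take $t \in \taylornf{M}$, so by definition there is some $m \app M$ with $t \in \nf{m}$. By Lemma~\ref{lem:taylor:approx-in-reduces-bis} there exists $M'$ with $M \todbf^* M'$ and $t \app M'$; since $t$ is a full normal form of a resource term, I can then apply Lemma~\ref{lem:taylor-bohm:nf-imply-in-approx} to the pair $t \app M'$ and obtain an approximant $A \smaller M'$ with $t \app A$. Because $M \todbf^* M'$, the pair $(M', A)$ witnesses $A \in \setapprox(M)$ directly from Definition~\ref{def:bang:bohm:approx-set}, so $t \in \taylor{A} \subseteq \taylor{\btb(M)}$.

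The main conceptual obstacle has already been dispatched by the two bridging lemmas: Lemma~\ref{lem:taylor:approx-in-reduces-bis}, which says that any Taylor normal form is actually realized as the image of an approximant of a \emph{reduct} of $M$ (not just of $M$ itself), and Lemma~\ref{lem:taylor-bohm:nf-imply-in-approx}, which extracts a Böhm-style approximant $A$ from a resource approximant that happens to be in normal form. Granted these, the commutation theorem reduces to chaining inclusions, and no further induction on the reduction or on the structure of $M$ is required. The only subtlety worth flagging in the write-up is that $\setapprox(M)$ is defined relative to \emph{all} reducts of $M$, which is exactly what makes the $\supseteq$ inclusion go through after the detour via $M'$.
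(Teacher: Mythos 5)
Your proposal is correct and follows essentially the same route as the paper's proof: a double inclusion where the forward direction is Lemma~\ref{lem:approx_subset_nft} applied to an approximant extracted from Definition~\ref{def:taylor_bohm}, and the reverse direction chains Lemma~\ref{lem:taylor:approx-in-reduces-bis} with Lemma~\ref{lem:taylor-bohm:nf-imply-in-approx} and then reads off membership in $\setApprox(M)$ from Definition~\ref{def:bang:bohm:approx-set}. The only cosmetic difference is that in the forward inclusion the paper routes through a reduct $M_0$ and Lemma~\ref{lem:bohm-taylor:equiv-imply-same-nfTaylor}, whereas you invoke Lemma~\ref{lem:approx_subset_nft} directly as stated, which is equally valid.
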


\section{Translations}
\label{sec:translations}
This section is about the approximation theory of \dcbn{} and \dcbv{}, in particular about how the
embeddings into $\dbang$ can profit from the result of previous Section. 
We briefly recall the relevant definitions and results of those embeddings, that have been already well studied  ~\cite{bucciarelli2023bang,kag24,arrial-quantitative,arrial-diligence,agk24d,arrial-genericity}
$\dcbv$ and $\dcbn$ share the same syntax:
$M, N ::= x \alt \lambda x M \alt MN \alt M\es Nx$.
% While surfaces contexts (in \dcbn{} and \dcbv{} respectively) are defined as:
% \begin{itemize}
% 	\item $S_n ::= \square \alt S_n~M \alt \lambda x S_n \alt S_n\es Nx$
% 	\item $S_v ::= \square \alt S_v~M \alt M~S_v \alt S_v\es Mx \alt M \es
% 		{S_v}x$
% \end{itemize}
%
Furthermore, the values $V$ are defined as either a variable $x$ or a
$\lambda$-abstraction $\lambda x M$. List contexts are defined as in \dbang{}; reduction rules are:
\begin{itemize}
	\item In $\dcbn$: $L\langle \lambda x M \rangle~N \ton L\langle M\es
		Nx\rangle$ and $M\es Nx \ton M\s Nx$
	\item In $\dcbv$:$L\langle \lambda x M \rangle~N \tov L\langle M\es
		Nx\rangle$ and $M\es {L\<V\>}x \tov L\<M\s Nx\>$
\end{itemize}
% The surface reduction of \dcbv{} (resp. \dcbn{}) is the surface closure of
% rewrite rules above.
We are now ready to define the translations of \dcbn{} (noted $()^n$) and 
\dcbv{} (noted $()^v$) into \dbang. Several translations of \dcbv{} into \dbang{} exist (or the original \bang-calculus
without explicit substitutions). The first one~\cite{ehrhard2016bang}, inspired
by Girard \emph{second translation}, does not
preserve normal forms ($xy$ translates to $\der{\oc x}\oc
y$). Another translation was then proposed~\cite{bucciarelli2023bang}, which
fixes this problem by simplifying the created
redexes by the translations on the fly. This is the translation that we use
here. It is worth noticing that this translation does not satisfy reverse
simulation from \dbang{} to \dcbv{}~\cite[Figure 1]{agk24d}
% : $\tradv{\left(\left((\lambda xx)(\lambda
% yy)\right)z\right)}=\der{(\lambda x\oc x)\oc(\lambda y\oc y)}\oc z\todbf\der{\oc(\lambda
% y\oc y)}\oc z$, while the last term does not correspond to a valid translation.
This issue has been addressed by a third translation~\cite{agk24d}.
% (which adds some $\oc$ and $\der{}$ in the translation).
However, reverse simulation is not necessary in our case, and we will
keep with the translation proposed in~\cite{bucciarelli2023bang},
although we are convinced that the same developments could be carried
out with the translation from~\cite{agk24d}. This choice is motivated
by the fact that the translation we consider fits better with the
Linear Logic discipline from which stems Taylor expansion, and also
because when considering the study of meaningfulness, we can rely on
results that have been proved for this translation. Moreover, this
translation enjoys a weaker property than strict reverse simulation,
that we call embedding (see Lemma~\ref{lem:trad_embedding}) and which
is sufficient for our study (approximation trees and Taylor expansion concern
mostly iterated reductions $\to^*$, and one-step, thus reverse
simulation is then not mandatory for our results).
\begin{center}
\begin{minipage}{0.30\textwidth}
\begin{align*}
	\tradn x &=  x \\
	\tradn{(\lambda x M)} &= \lambda x \tradn M \\
	\tradn{(M~N)} &= \tradn M \oc \tradn N \\
	\tradn{(M\es Nx)} &= \tradn M\es{\oc \tradn N}x
\end{align*}
\end{minipage}
\hfill
\begin{minipage}{0.30\textwidth}
\begin{align*}
	\tradv x &= \oc x \\
	\tradv{(\lambda x M)} &= \oc (\lambda x \tradv M) \\
	\tradv{(M~N)} &= \left\{
		\begin{array}{c}
		L\langle P \rangle \tradv N \qquad \text{ if } \tradv M = L \langle \oc P \rangle \\
		\der{\tradv M}~\tradv N \qquad \text{ otherwise }
		\end{array}
	\right. \\
	\tradv{(M\es Nx)} &= \tradv M\es{\tradv N}x
\end{align*}
\end{minipage}
\end{center}

We abusively extend the translations to list contexts: let
$\circ\in\{n,v\}$; then given a list context
$L=\square[M_1/x_1]\dots [M_k/x_k]$, we write
$L^\circ=\square[M_1^\circ/x_1]\dots [M_k^\circ/x_k]$.

We are now ready to define meaningfulness in both systems:
\begin{definition}[\dcbv{} and \dcbn{} meaningfulness~\cite{kag24}]
	\label{def:dcbn-dcbv-meaningfulness}
	We say that a term $M \in \dcbv{}$ is \dcbv-meaningful (resp.
	$M\in\dcbn{}$ and is \dcbn-meaningful) if $T\langle M \rangle
	\tov^* V$ for some value $V$ (resp. $T\langle M \rangle \ton^*
	\lambda xx$) where $T$ is a testing context, defined by: $T :=
	\square \alt T~N \alt (\lambda x T)~N$.
	\footnote{
		Note that as for \dbang, if such a reduction exists, then 
		a weak reduction (analoguous to surface
		reduction) is sufficient, \emph{i.e.} it is not necessary to reduce under
		the $\lambda$s in $\dcbv$ or in the arguments of applications (or explicit
		substitutions) in \dcbn.
	}.
\end{definition}
\begin{example}
\fnl{ Consider the fixpoint $Y_x=(\lambda y x(yy))(\lambda y x(yy))$,
	its translations are:
	\begin{itemize}
		\item $\tradn{(Y_x)}=(\lambda yx\oc(y\oc y))\oc(\lambda y x\oc(y\oc y))$.
		\item $\tradv{(Y_x)}=(\lambda y x(y\oc y))\oc(\lambda y x(y\oc y))$.
	\end{itemize}
	Which are the terms $Y^n_x$ and $Y^v_x$ of
	Example~\ref{ex:terms}.}
\end{example}
It has been shown that \dcbn{} \dcbv{} can be simulated through their encoding
in \dbang{}, and that the meaningfulness of \dcbv{} and \dcbn{} coincides with
the one of \dbang{} (Definition~\ref{def:dbang_meaningful}):

\begin{theorem}~
	\label{thm:trad_meaningfulness}
	\begin{enumerate}
		\item If $M \ton N$ (resp. $M\tov N$) then $\tradn M \todb \tradn N$
		(resp. $\tradv M\todb\tradv N$)~\cite[Lemma 4.6]{bucciarelli2023bang}.
		\item $M$ is \dcbn{}-meaningful iff $M^n$ is meaningful~\cite[Theorem 25]{kag24}.
		\item $M$ is \dcbv{}-meaningful iff $M^v$ is meaningful~\cite[Theorem 30]{kag24}.
	\end{enumerate}
\end{theorem}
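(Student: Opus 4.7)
The plan is to observe that all three items are established in the cited literature, so the proof consists of briefly summarizing the structure of each argument while highlighting where the real work lies.

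For item 1, I would proceed by case analysis on the reduction rule fired. In $\dcbn$, the $\beta$-step $L\langle \lambda xM\rangle N \ton L\langle M\es Nx\rangle$ translates to $L^n\langle \lambda xM^n\rangle \oc N^n$, which fires a distant $\oc$-$\beta$ redex of $\dbang$ yielding $L^n\langle M^n \es{\oc N^n}x\rangle = (L\langle M\es Nx\rangle)^n$. The substitution step $M\es Nx \ton M\s Nx$ translates to $M^n\es{\oc N^n}x$, which is an exponential redex reducing to $M^n\s{N^n}{x}$; this equals $(M\s Nx)^n$ by a routine substitution-compatibility lemma for $(\cdot)^n$. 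For $\dcbv$, the application case splits according to whether $\tradv M$ has the shape $L\langle \oc P\rangle$, but in both subcases the translation exposes a $\oc$-$\beta$ redex whose contractum is $\tradv{L\langle M\es Nx\rangle}$; the substitution case relies on the fact that values translate to terms of the form $L\langle \oc W\rangle$, which exposes the exponential redex needed to fire the substitution.

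For items 2 and 3, the forward direction follows directly from item 1 together with a small lemma checking that the translations $(\cdot)^n$ and $(\cdot)^v$ lift to testing contexts. If $T\langle M\rangle \ton^* \lambda xx$ (resp.\ $T\langle M\rangle \tov^* V$), then by iterated simulation the translation reduces to $\tradn{\lambda xx}$ (resp.\ $\tradv V$), both of which are $\oc$-headed, witnessing $\dbang$-meaningfulness after possibly absorbing a leading $\der$ from the translated testing context.

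The hard part, and the main technical content drawn from~\cite{kag24}, is the reverse implication, which cannot be obtained by reverse simulation alone since $\tradv$ provably fails to enjoy it (a $\todb$-reduct of $\tradv M$ need not arise as the translation of any $\dcbv$-reduct of $M$). The cited proofs sidestep this obstruction by characterizing meaningfulness in each of $\dcbn$, $\dcbv$ and $\dbang$ via a dedicated intersection type system and showing that typability is both preserved and reflected by the translations. The desired equivalence then follows at the semantic level rather than the rewriting level, and this is the step I would expect to be the most delicate to reproduce from scratch.
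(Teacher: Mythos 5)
First, note that the paper does not prove Theorem~\ref{thm:trad_meaningfulness} at all: it is a recalled result, stated with explicit citations (item 1 to \cite{bucciarelli2023bang}, items 2 and 3 to \cite{kag24}), and no argument for it appears in the body or the appendix. So your decision to treat it as a literature result is exactly the paper's stance, and your sketch of item 1 (case analysis on the rule fired, using compatibility of the translations with substitution) is indeed the standard argument; the paper even isolates the required substitution compatibility as Lemma~\ref{lem:trad_subst}, though it uses it for other purposes.

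However, the step you sketch for the forward direction of items 2 and 3 fails as written. For \dcbn{}, meaningfulness of $\tradn M$ requires reaching a term of shape $\oc P$, but $\tradn{(\lambda xx)}=\lambda xx$ is \emph{not} $\oc$-headed; moreover the image of $\tradn{()}$ contains no dereliction whatsoever, so there is no ``leading $\der$'' to absorb. The standard repair is to enlarge the testing context with an extra banged argument, e.g. $T'=\tradn T\,(\oc\oc z)$, which is still a \dbang{} testing context and gives $T'\<\tradn M\>\todbs^*(\lambda xx)\oc\oc z\todbs^*\oc z$. For \dcbv{} the obstruction is different: since the translation of an application is defined by cases on the shape of the translated function, $\tradv{(T\<M\>)}$ need not have the form $S\<\tradv M\>$ for any \dbang{} testing context $S$ --- the ``otherwise'' case produces a dereliction over the hole, and $\der{\square}$ is not allowed in testing contexts (Definition~\ref{def:dbang_meaningful}), so ``absorbing'' it is not licensed. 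These mismatches are precisely what makes the cited Theorems 25 and 30 of \cite{kag24} nontrivial in \emph{both} directions, not only in the reverse one; their proofs go through the type-theoretic characterization of meaningfulness rather than through simulation alone, as you correctly anticipate for the converse but should also expect for the direct implication.
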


We study approximation trees and Taylor expansion for \dcbv{} and \dcbn{} via their translations, in view of meaningfulness.

\subsection{Taylor expansion and  approximation trees for \dcbn\ and \dcbv}

\subsubsection{Taylor expansion for \dcbn{} and \dcbv{}}

\begin{definition}[Resource approximations and Taylor expansion of
	\dcbn]\label{def:dcbn_taylor_approx}~
	
	We define an approximation $\appn$ relation between
		\dbangres and
		\dcbn. Notice that despite the approximants being defined in
		\dbangres, there is no dereliction needed in the case of
		\dcbn. The relation is given by $x\appn x$; if $m\appn M$ then
		$\lambda xm\appn \lambda xM$; and if $m\appn M$ and $n_i\appn
		N$ for any $i\leq k$ then we have $m[n_1,\dots ,n_k]\appn MN$ and
		$m[[n_1,\dots ,n_k]/x]\appn M[N/x]$.

	Taylor expansion is again defined as sets of approximations: $\taylorn
	M=\{m\in\dbangres\mid m\appn M\}$.
\end{definition}
We do not need to define a specific
		resource calculus for \dcbn{} nor \dcbv{}, since \dbangres{}
		semantics precisely subsumes both approximation theories. 

\begin{remark}
	\fnl{The syntax of approximants of \dcbn{} can be given by the
following syntax: $m, n := x \alt m~[n_1, \dots , n_k] \alt \lambda xm
\alt m\es{[n_1, \dots , n_k]}x$}
\end{remark}

\begin{definition}[Resource approximation and Taylor expansion of
	\dcbv]\label{def:dcbv_taylor_approx}
	We define the relation $m\appv M$ for $m\in\dbangres$ and $M\in\dcbv$.
	\begin{itemize}
		\item $[x,\dots ,x]_k\appv x$ for any $k\in\mathbb N$.
		\item $[\lambda xm_1,\dots ,\lambda xm_k]\appv \lambda xM$ if
			$m_i\appv M$ for
			any $i\leq k$.
		\item $ \der mn\appv MN$ if $m\appv M, n\appv N$ and
				$M\notin V$
		\item $mn\appv VN$ if $[m]\appv V$ and $n\appv N$

		\item $m[n/x]\appv M[N/x]$ if $m\appv M$ and $n\appv N$.
	\end{itemize}
	
	Taylor expansion is defined as $\taylorv M=\{m\in\dbangres\mid m\appv
	M\}$. Notice that, so as Taylor expansion commutes with this embedding,
	Taylor approximation also suppresses derelictions redexes, so as the
	expansion also preserves normal forms.
\end{definition}

\begin{example}
	\fnl{
	Approximants can fit both for \dcbn\ and \dcbv\ terms, such as $x[x]\appv xx$ and
	$x[x]\appn xx$. But this is not the case in general, \emph{e.g.} $[x]\appv x$ but
	there is no \dcbn\ term $M$ such that $[x]\appv M$.
}
\end{example}
	
Both in \dcbv\ and \dcbn\, we also define $\taylornf{M}$ as the set
containing the normal forms of resource approximants of $M$. We get the following: 

\begin{lemma}
	\label{lem:taylor_trad_dcbn_dcbv}
	Let $M \in \dcbn$ (resp. $\dcbv$) then $\taylorn M = \taylor{\tradn M}$ (resp. $\tradv M$)
	% Let $M\in\dcbn$. $\taylorn M = \taylor{\tradn M}$.
	% Let $M\in\dcbv$. $\taylorv M = \taylor{\tradv M}$.
\end{lemma}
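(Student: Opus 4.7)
The approach is structural induction on $M$, treated separately for the CbN and CbV translations. In each case, I would show both inclusions simultaneously via an inversion argument: for every syntactic constructor of $M$, the $\appn$- (resp.\ $\appv$-) rule and the $\app$-rule applied to $\tradn M$ (resp.\ $\tradv M$) produce approximants of exactly the same shape, so the inductive hypothesis on strict subterms closes the equivalence.

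The CbN case is the simpler one and I would handle it first. The translation $\tradn$ is strictly compositional, inserting a bang at every argument and substitution position: $\tradn{MN} = \tradn M\,\oc\tradn N$ and $\tradn{M\es Nx} = \tradn M\es{\oc\tradn N}x$. The $\appn$-rules use bags $[n_1,\dots,n_k]$ with each $n_i \appn N$ in precisely those positions, and an $\app$-approximant of $\oc\tradn N$ is, by definition of $\app$ (Figure~\ref{fig:approx_dbang}), a bag of $\app$-approximants of $\tradn N$. Combined with the IH on $N$, this gives a one-to-one correspondence of approximants. Variables and abstractions follow directly from the definitions.

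The CbV case follows the same plan but is more delicate due to the case split in $\tradv$ on applications. Values translate to $\oc$-terms ($\tradv x = \oc x$ and $\tradv{\lambda xM} = \oc(\lambda x\tradv M)$) and the $\appv$-rules for values likewise produce bags ($[x,\dots,x]_k$ or $[\lambda xm_1,\dots,\lambda xm_k]$), matching the $\app$-approximants by IH. The substitution case is compositional and closed by IH. For an application $M_1M_2$, I would split on whether $\tradv{M_1}$ has the form $L\<\oc P\>$: in that case $\tradv{M_1M_2} = L\<P\>\tradv{M_2}$ and the $\appv$-rule $mn \appv VN$ (using $[m] \appv V$) produces the matching bag-free shape; otherwise $\tradv{M_1M_2} = \der{\tradv M_1}\tradv{M_2}$, which mirrors the $\der m\,n$ form imposed by the $\appv$-rule when $M_1\notin V$.

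The main obstacle is aligning the syntactic case split in $\tradv$ (a check on the shape of $\tradv{M_1}$) with the structural check ``$M_1\in V$'' used by $\appv$. I would discharge this via an auxiliary observation, established by a sub-induction on $M_1$: $\tradv{M_1}$ has the form $L\<\oc P\>$ exactly when $M_1$ is a value, possibly wrapped in a list of explicit substitutions (so that $M_1 = L'\<V\>$ with $\tradv{L'} = L$ and $\tradv V = \oc P$). Lemma~\ref{lem:approx_contexts} then lets us factor every $\app$-approximant of $L\<P\>\tradv{M_2}$ through the list context, reducing the comparison to the value case already treated. Once this factorization is in place, both inclusions follow mechanically from the IH, completing the proof.
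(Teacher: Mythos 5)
Your overall strategy --- a structural induction establishing $m\appn M$ iff $m\app\tradn M$ (resp.\ $m\appv M$ iff $m\app\tradv M$), with both inclusions handled at once by matching the shapes of the approximation rules --- is exactly the paper's. Your \dcbn{} half coincides with the paper's proof essentially line by line, as do the value and application-headed subcases of the \dcbv{} half.

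The gap is in the one \dcbv{} case where the two case splits genuinely disagree, which is precisely the case your auxiliary observation exposes: $M_1 = L'\<V\>$ with $L'$ a \emph{non-empty} list context. Your characterization ($\tradv{M_1} = L\<\oc P\>$ iff $M_1 = L'\<V\>$) is correct, but the claimed reduction ``to the value case already treated'' does not go through. Values are only variables and abstractions, so $L'\<V\>\notin V$ when $L'\neq\square$; hence the only $\appv$-rule applicable to $M_1M_2$ is $\der mn\appv M_1M_2$, and every element of $\taylorv{M_1M_2}$ carries a head dereliction. On the other side $\tradv{(M_1M_2)} = L\<P\>\tradv{M_2}$, whose $\app$-approximants (Figure~\ref{fig:approx_dbang}) are dereliction-free applications. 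Lemma~\ref{lem:approx_contexts} only factors approximants through the list context; it cannot remove the $\mathbf{der}$. Concretely, for $M = (x\es zy)z$ one has $\der{[x]_k\es{[z]_j}y}\,[z]_i\in\taylorv{M}$, while $\taylor{\tradv M}$ contains only terms of shape $(x\es{[z]_j}y)[z]_i$: the two sets are disjoint, so neither of your ``mechanical'' inclusions holds. For what it is worth, you have put your finger on a real subtlety: the paper's own proof silently skips this case (it treats only ``$M_1$ an application'' and ``$M_1$ a value''), and with the definitions read literally this case contradicts the stated equality, which suggests the side conditions in the $\appv$-rules are intended to be read up to list contexts of explicit substitutions. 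Either way, the step ``both inclusions follow mechanically from the IH'' is where your proof breaks, and it cannot be repaired without amending (or reinterpreting) the definition of $\appv$.
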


In the $\dcbv$ case, notice that the translation of application, as
well as its Taylor expansion, is described by case on its first
component such that the translation (resp. the expansion) of a term in
normal form does not lead to any reducible pattern.
From that we obtain the following properties:

\begin{corollary}\label{cor:tradn_tnf-tradv-tnf}
	Let $M\in\dcbn$ (resp. $\dcbv$) then $\taylornf
	M=\emptyset\leftrightarrow \taylornf{\tradn M}=\emptyset$ (resp. $\tradv M$).
\end{corollary}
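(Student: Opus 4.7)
The plan is to obtain this corollary as an essentially immediate consequence of Lemma~\ref{lem:taylor_trad_dcbn_dcbv}. That lemma yields the set-level equalities $\taylorn M = \taylor{\tradn M}$ and $\taylorv M = \taylor{\tradv M}$, and the paper has been careful to define Taylor normal forms uniformly against the common resource calculus $\dbangres$: regardless of the source system, $\taylornf N$ is obtained as $\bigcup_{m} \nf m$ with $m$ ranging over the resource approximants of $N$ and $\nf m$ the set of $\todbrf$-normal reducts of $m$ computed in $\dbangres$. Since the index sets of these unions coincide on each side, the unions themselves coincide as sets, giving in fact the stronger statement $\taylornf M = \taylornf{\tradn M}$ (resp. $\tradv M$), from which the empty/non-empty biconditional is immediate.

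The only point that requires a moment's attention is that no ``hidden'' reduction in $\dbangres$ can distinguish the two sides. In the dCbN case this is automatic: by the remark following Definition~\ref{def:dcbn_taylor_approx}, dCbN approximants contain no $\der$ and their $\todbrf$-redexes come only from genuine $\beta$-redexes already present in $M$, so $\nf m$ is literally the same set whether one views $m$ as an element of $\taylorn M$ or of $\taylor{\tradn M}$. In the dCbV case, the design of Definition~\ref{def:dcbv_taylor_approx} — splitting the application clause according to whether the head is a value so as to suppress the $\der\oc$-redexes that the naive translation of $xy$ would introduce — is precisely what secures the match. This is the content of the ``the expansion also preserves normal forms'' remark appended to that definition, and it is what makes the set equality of Lemma~\ref{lem:taylor_trad_dcbn_dcbv} hold in the first place.

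The main obstacle, such as it is, has thus already been absorbed into Lemma~\ref{lem:taylor_trad_dcbn_dcbv}: once $\appn$ and $\appv$ are set up so that the equalities hold, the corollary requires no further rewriting or inductive argument, only the observation that the normalization operator $\nf{(-)}$ acts on resource terms in a way that is oblivious to which source system they came from.
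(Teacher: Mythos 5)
Your proposal is correct and matches the paper's own (implicit) justification: the paper likewise derives the corollary directly from Lemma~\ref{lem:taylor_trad_dcbn_dcbv}, using the fact that both Taylor normal forms are computed by the same $\nf{(-)}$ operator in the common resource calculus $\dbangres$, together with the remark that the \dcbv{} translation/expansion is defined by cases on the head precisely so that normal forms create no reducible patterns. Your observation that the stronger set equality $\taylornf M = \taylornf{\tradn M}$ (resp.\ $\tradv M$) holds is exactly what the paper relies on.
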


\subsubsection{Approximation trees for \dcbn{} and \dcbv{}}

\begin{definition}The syntax of \dcbn\ approximants is given by:
\begin{center}
	$A_n ::=  \bot\alt N_{\lambda} \alt \lambda x A_n
	\qquad\qquad N_{\lambda } := x  \alt N_{\lambda } A_n$
\end{center}
\end{definition}
 
Notice that this syntax contains inductive weak head normal forms,  as it is standard for
Lévy-Longo trees ~\cite{Longo83}. Subterms such as $\lambda\vec x\bot$,
which might correspond to non trivial approximants of non-solvable terms (\emph{e.g.} $\lambda
x\Omega$) are not included in Böhm trees, which correspond to inductive head normal forms.  We could endow the approximants with equations identifying $\lambda\vec x\bot$
to $\bot$, but this is not necessary\footnote{
	In fact, as for \dbang, having $\bt(\lambda xM)=\lambda x\bt(M)$ holding globally is
	convenient for technical reasons.
}
since this trivialization is achieved by Taylor
expansion. %The same remarks holds for $\dcbv$. 

\begin{definition} The syntax of \dcbv\ approximants is given by:

\begin{minipage}{0.45\textwidth}
$$\begin{array}{lll}
	A_v & ::= & \bot\alt A_{\lambda } \alt \lambda x A_v \alt A_v\es{A_{x\lambda }}x \\
	A_{\lambda} & ::= &  x  \alt A_{\lambda} A_v \alt A_{\lambda
	}\es{A_{x\lambda }}x \\
\end{array}$$		
\end{minipage}
\begin{minipage}{0.54\textwidth}
	$$\begin{array}{lll}
	A_{x\lambda } & ::= & A_{\lambda } A_v   \alt
	 A_{x\lambda }
	\es{A_{x\lambda }}x
\end{array}$$
\end{minipage}

\end{definition}
In \dcbv{}, as well as in the \cbv{} case (and its variant with permutation rules), there is no equivalent to (weak) head normal forms. We define our approximants as the normal forms potentially containing $\bot$. This is similar to the ones used for Böhm trees in \cite{kerinec:bohmTreeCbV} (for a variant of \cbv{} with permutation rules).\\ 

We then define approximation trees in the usual way, setting
$\bt_\circ(M)=\cup\{A_\circ\mid M\to_\circ^*N,A_\circ\sqsubseteq N\}$, for $\circ\in\{n,v\}$,
(and where $M\sqsubseteq N$ means again that $M$ is obtained by replacing subterms of
$N$ by $\bot$). We leave it to the reader to
verify that the proof of Lemma~\ref{lem:ideal} can be adapted to \dcbv\ and
\dcbn, incorporating the distant setting into these standard
results.

In the following, we extend translations $()^n$ and $()^v$ to
approximants by setting $\bot^n=\bot^v=\bot$, and study the
commutation between approximation and the embedding
(Theorem~\ref{thm:bohm-trad}), that will allow us to transport our
Commutation Theorem (Theorem~\ref{thm:taylor-bohm-commute}) into
\dcbv\ and \dcbn. Notice that $M$ is a \dcbn{} (resp. \dcbv{})
approximant if and only if $\tradn M$ (resp. $\tradv M$) is a \dbang{}
approximant.

Now we can prove the weaker form of reverse simulation mentioned in the preamble of
this section. Notice that this result corresponds to the 
embedding notion of Dufour and Mazza~\cite{mazza-dufour}.

\begin{lemma}[Embedding]~\label{lem:trad_embedding}
	\begin{enumerate}
		\item Let $M\in\dcbn$. If $\tradn M\todbf N$, then there is some
			$P\in\dcbn$ such that $M\ton^* P$ and $N\todbf^*\tradn
			P$.
		\item Let $M\in\dcbv$. If $\tradv M\todbf N$, then there is some
			$P\in\dcbv$ such that $M\tov^* P$ and $N\todbf^*\tradv
			P$.
	\end{enumerate}
\end{lemma}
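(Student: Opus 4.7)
Both parts will be proved by structural induction on $M$, with a case analysis on the position of the step $\tradn{M}\todbf N$ (resp.~$\tradv{M}\todbf N$). In each case the redex is either (i) strictly inside the translation of a proper subterm---handled by the induction hypothesis and repackaged using the syntactic shape of the translation---or (ii) a root step at the top of the translation, which must correspond to an actual \dcbn{} or \dcbv{} redex. As preliminaries, I would first establish by routine induction the substitution laws $\tradn{M\{N/x\}} = \tradn{M}\{\tradn{N}/x\}$ and $\tradv{M\{V/x\}} = \tradv{M}\{V'/x\}$ where $\tradv{V} = \oc V'$, together with the shape invariant that $\tradv{M}$ has the form $L\<\oc Q\>$ if and only if $M = L_0\<V\>$ for some dCbV list context $L_0$ and value $V$; this invariant is preserved under value-substitution and justifies the case split of $\tradv{\cdot}$ on applications.

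For part~(1), the CbN translation never introduces any dereliction, so the only possible root steps in $\tradn{M}$ are a beta step at a distance and an explicit-substitution step. The subtle case is the root beta, where $\tradn{M} = \tradn{M_0}\,\oc \tradn{N_0}$ admits a redex only if $\tradn{M_0} = L\<\lambda y Q\>$; a short auxiliary induction, relying on the syntax-directedness of $\tradn{\cdot}$, shows that this forces $M_0 = L_0\<\lambda y M'\>$ with $\tradn{L_0} = L$ and $\tradn{M'} = Q$. Hence $M \ton L_0\<M'[N_0/y]\>$, whose translation matches $N$ by the CbN substitution lemma. The root explicit-substitution case is immediate from the same lemma, and subterm reductions propagate via the IH without surprises.

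For part~(2), the case split in the application clause of $\tradv{\cdot}$ makes the inductive analysis more intricate. Root beta occurs when $M = L_0\<\lambda y Q\>\,N_0$, and root explicit substitution when $M = M_0[L_0\<V_0\>/y]$; both correspond to \dcbv{} steps at a distance, matched via the CbV substitution lemma. The delicate situation is an interior reduction inside $\tradv{M_0}$ when $M = M_0 N_0$ falls in the ``$\der$'' branch of the translation, i.e., $\tradv{M} = \der{\tradv{M_0}}\,\tradv{N_0}$: the IH yields $M_0 \tov^* M_0''$ with the reduct $X$ satisfying $X \todbf^* \tradv{M_0''}$, and if the reduction has turned $\tradv{M_0''}$ into the $L\<\oc P\>$ shape, an additional dereliction step $\der{L\<\oc P\>} \todb L\<P\>$ is needed to recover $\tradv{M_0'' N_0}$ in its translated form. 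I expect this ``cleanup'' phenomenon to be the main obstacle: it is precisely the reason why the conclusion is formulated as $N \todbf^* \tradv{P}$ rather than a single step, and handling it cleanly requires tracking how the shape $L\<\oc Q\>$ can be created by a reduction inside $\tradv{M_0}$.
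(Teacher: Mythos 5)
Your proposal is correct and follows essentially the same route as the paper's proof: a substitution lemma for both translations (the paper's Lemma~\ref{lem:trad_subst}, with exactly the value restriction you state for dCbV), induction following the position of the redex, and the crucial observation that in the dCbV case a reduction under a dereliction can create the $L\<\oc Q\>$ shape, requiring an extra administrative dereliction step --- which is precisely why the conclusion reads $N\todbf^*\tradv P$ rather than a single step. The shape invariant you isolate ($\tradv M = L\<\oc Q\>$ iff $M = L_0\<V\>$ with $V$ a value) is used implicitly in the paper's proof, so making it explicit is a harmless refinement, not a different argument.
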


\begin{theorem}~\label{thm:bohm-trad}
	\begin{enumerate}
		\item Let $M\in \dcbn$. $\tradn{(\btn(M))} = \btb(\tradn M)$.
		\item Let $M\in\dcbv$. Then $\tradv{(\btv(M))} = \btb(\tradv M)$.
	\end{enumerate}
\end{theorem}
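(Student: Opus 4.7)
The plan is to show that the ideal $\setApprox_\circ(M)$ of $\circ$-approximants of $M$ (for $\circ \in \{n,v\}$) corresponds, via the translation, to a cofinal subset of $\setApprox(M^\circ)$, so both ideals have the same supremum once $(-)^\circ$ is extended to $\bot$ by $\bot^\circ = \bot$ and to the approximant syntax compositionally. I first verify that every $\circ$-approximant translates to a $\dbang$-approximant in the sense of Definition~\ref{def:bohm_approx}, by induction on the approximant grammars: for \dcbn{} the clauses of $A_n$ map to $A$-shapes in which every argument sits under a $\oc$; for \dcbv{} the stratification $A_\lambda \subsetneq A_v$ aligns with the value/non-value dichotomy of $\tradv$, so values become $\oc$-headed $A_\oc$ terms while non-values become $A_\lambda$ terms. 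A second easy induction gives monotonicity of the extended translation: $A_\circ \smaller_\circ N$ implies $A_\circ^\circ \smaller N^\circ$, since replacing a subterm with $\bot$ in $N$ translates to replacing a subterm with $\bot$ in $N^\circ$, possibly under a surrounding $\oc$.

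The forward inclusion is then immediate: for $A_\circ \in \setApprox_\circ(M)$ witnessed by $M \to_\circ^* N$ and $A_\circ \smaller_\circ N$, simulation (Theorem~\ref{thm:trad_meaningfulness}(1)) gives $M^\circ \todbf^* N^\circ$, and monotonicity gives $A_\circ^\circ \smaller N^\circ$, so $A_\circ^\circ \in \setApprox(M^\circ)$. The reverse inclusion requires inverting the translation at the approximant level. Given $A' \in \setApprox(M^\circ)$ witnessed by $M^\circ \todbf^* N'$ and $A' \smaller N'$, I would iterate Lemma~\ref{lem:trad_embedding} on the length of $M^\circ \todbf^* N'$ to obtain some $P$ with $M \to_\circ^* P$ and $N' \todbf^* P^\circ$; Lemma~\ref{lem:bang:bohm:reduce-have-at-least-same-approxs} then yields $A' \smaller P^\circ$. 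The crux is then an inversion lemma: every $\dbang$-approximant $A' \smaller P^\circ$ is dominated by the translation of some $\circ$-approximant of $P$, i.e.\ there exists $A_\circ \smaller_\circ P$ with $A' \smaller A_\circ^\circ$. I would prove this by induction on $A'$, exploiting the rigid shape of $P^\circ$: in \dcbn{} every application in $P^\circ$ has the form $-\,\oc-$, forcing applications of $A'$ into the shape $A_\lambda\,\oc A$, from which one reads off the \dcbn-approximant directly; in \dcbv{} the case split in $\tradv$ on applications, combined with the fact that values translate to $L\langle\oc-\rangle$-terms, is mirrored precisely by the $A_v/A_\lambda/A_{x\lambda}$ stratification on the source side.

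Since the extended translation commutes with directed suprema in the approximant domain (being compositional and mapping $\bot$ to $\bot$), the forward inclusion gives $(\bt_\circ(M))^\circ \smaller \btb(M^\circ)$ and the inversion gives $\btb(M^\circ) \smaller (\bt_\circ(M))^\circ$, hence equality. The main obstacle is the inversion step, especially for \dcbv{}: one must track the case split in $\tradv$ so that each $\oc$ in $A'$ can be blamed on either a value or on an explicit $\oc$-wrapping in the source, and conversely verify that every combination of $\dbang$-approximant constructors occurring above $P^v$ is accounted for by the stratified \dcbv-approximant grammar. The \dcbn{} case is conceptually simpler since its translation is uniformly compositional with no branching on subterm shape.
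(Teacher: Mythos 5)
Your proposal is correct in its overall architecture, but it takes a genuinely different route from the paper's proof. The paper argues by coinduction on the structure of $\bt_\circ(M)$: each constructor case is dispatched using the compositionality laws of $\btb$ collected in Proposition~\ref{prop:bohm-properties} (e.g.\ $\btb(\lambda xM)=\lambda x\btb(M)$, $\btb(\oc M)=\oc\btb(M)$, and the application/substitution clauses guarded by head shape), so that approximant sets are inspected directly only in the case $\bt_\circ(M)=\bot$, where---exactly as in your reverse inclusion---Lemma~\ref{lem:trad_embedding} and Lemma~\ref{lem:bang:bohm:reduce-have-at-least-same-approxs} are combined with simulation to show $\setApprox(M^\circ)=\{\bot\}$. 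You instead stay at the level of approximant sets throughout: the translated image of $\setApprox_\circ(M)$ is contained in $\setApprox(M^\circ)$ (simulation plus monotonicity of the extended translation) and is dominating in it (embedding plus your inversion lemma), and you conclude by continuity of the translation on the ideal completion. What the paper's coinduction buys is precisely the avoidance of your crux: it never needs to read back a source approximant from a $\dbang$ one, since structure is matched one constructor at a time; what your approach buys is an explicit cofinality statement, closer to the ideal-completion definition of Böhm trees, which isolates the simulation and embedding ingredients cleanly. Note that both proofs apply Lemma~\ref{lem:trad_embedding} along a multi-step reduction $M^\circ\todbf^* N'$, although the lemma is stated for a single step; the iteration needs confluence to rejoin the reduct of the embedding with the remaining reduction, a point the paper leaves just as implicit as you do.

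The one substantive obligation you leave open is the inversion lemma, and it is not a triviality. You must show that the ``rounded-up'' source term both lies in the stratified \dcbv{} approximant grammar and translates above $A'$, which rests on two facts worth stating explicitly: (i) a $\dbang$ approximant never places $\bot$ in function position, under a dereliction, or as the content of an explicit substitution, mirroring exactly the $A_\lambda$/$A_{x\lambda}$ restrictions on the source side; and (ii) the case split of $\tradv$ on applications never flips between a term and a non-$\bot$ approximant of it, because an approximant of an $L\<V\>$-shaped function keeps the head variable (an abstraction-headed function forces $\bot$ at or above the application, where the whole subterm collapses). With these two points made precise, your induction on $A'$ goes through, and the weight of this argument is comparable to the coinductive case analysis the paper performs; without them, the \dcbv{} half of your reverse inclusion is only a plausible sketch.
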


Thanks to the compatibility of approximation trees and Taylor expansion with the
translations of \dcbn\ and \dcbv\ into \dbang\, we can
apply our commutation result for \dbang\ to both calculi. Although these results
have been well established (to our knowledge, only for non-distant \cbn{}
and \cbv{}), this application illustrates
 that the subsuming power of \dbang\ has been brought in the
approximation theory, which was our purpose. 

\begin{theorem}~
	\label{commutation-thm-translations}
	\begin{enumerate}
		\item Let $M\in\dcbv$. $\taylorv{\bt_v(M)}=\nf{\taylorv M}$.
		\item Let $M\in\dcbn$. $\taylorn{\bt_n(M)}=\nf{\taylorn M}$.
	\end{enumerate}
\end{theorem}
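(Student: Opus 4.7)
The plan is to derive both items uniformly from the \dbang-level Commutation Theorem (Theorem~\ref{thm:taylor-bohm-commute}) together with the translation compatibilities of this section. For the dCbN case (the dCbV case being fully analogous), the whole argument reduces to the chain
\[
\taylorn{\btn(M)}
 \;=\; \taylor{(\btn(M))^n}
 \;=\; \taylor{\btb(M^n)}
 \;=\; \taylornf{M^n}
 \;=\; \nf{\taylor{M^n}}
 \;=\; \nf{\taylorn M},
\]
whose second equality is Theorem~\ref{thm:bohm-trad}, third is Theorem~\ref{thm:taylor-bohm-commute}, fourth unfolds the definition of $\taylornf$, and fifth is Lemma~\ref{lem:taylor_trad_dcbn_dcbv}. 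The dCbV case replaces $(\cdot)^n,\btn,\taylorn,\appn$ by their $v$-variants and uses the corresponding clauses of Theorem~\ref{thm:bohm-trad} and Lemma~\ref{lem:taylor_trad_dcbn_dcbv}.

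The real work lies in the first equality $\taylorn{\btn(M)} = \taylor{(\btn(M))^n}$. I would first extend $(\cdot)^n$ and $\appn$ to the $\bot$-extended calculi by setting $\bot^n = \bot$ (as already announced in the paper) and stipulating that no $\dbangres$-term approximates $\bot$, mirroring Definition~\ref{def:taylor-dbangb}. A direct induction on dCbN approximants $A$ then lifts Lemma~\ref{lem:taylor_trad_dcbn_dcbv} to $\taylorn A = \taylor{A^n}$. Defining $\taylorn{\btn(M)} = \bigcup_{A \in \setapprox_n(M)} \taylorn A$ in the style of Definition~\ref{def:taylor_bohm} and applying this approximant-level equality rewrites the union as $\bigcup_{A \in \setapprox_n(M)} \taylor{A^n}$; this coincides with $\taylor{\btb(M^n)} = \bigcup_{A' \in \setapprox(M^n)} \taylor{A'}$ provided the image family $\{A^n \mid A \in \setapprox_n(M)\}$ is cofinal in $\setapprox(M^n)$ with respect to $\smaller$, thanks to the monotonicity of Taylor expansion in approximants (Lemma~\ref{lem:approx-taylor-subset}).

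The \textbf{main obstacle} is this cofinality claim. One direction is immediate: the simulation half of Theorem~\ref{thm:trad_meaningfulness} together with the monotonicity of $(\cdot)^n$ gives $A^n \in \setapprox(M^n)$ for every $A \in \setapprox_n(M)$. For the converse, given $A' \in \setapprox(M^n)$, I would use Lemma~\ref{lem:trad_embedding} to pull back the reduction $M^n \todbf^* N'$ witnessing $A' \smaller N'$ to a dCbN-reduction $M \ton^* P$ with $N' \todbf^* P^n$; Lemma~\ref{lem:bang:bohm:reduce-have-at-least-same-approxs} then transports $A' \smaller N'$ into $A' \smaller P^n$. The rigid syntactic shape of $P^n$ dictates, case by case against the grammar of $A_n$ (respectively $A_v$ in the dCbV case), which subterms of $P$ must be replaced by $\bot$ to produce a source-side approximant $A \in \setapprox_n(M)$ with $A' \smaller A^n$. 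Once this cofinality step is discharged, the displayed chain closes and both items of the theorem follow simultaneously.
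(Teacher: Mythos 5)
Your proposal is correct and takes essentially the same route as the paper: the paper's proof is exactly your chain read in reverse, $\nf{\taylorn M}=\nf{\taylor{\tradn M}}=\taylor{\btb(\tradn M)}=\taylor{\tradn{(\btn(M))}}=\taylorn{\btn(M)}$ (and its $v$-analogue), justified by the same three results, Lemma~\ref{lem:taylor_trad_dcbn_dcbv}, Theorem~\ref{thm:taylor-bohm-commute} and Theorem~\ref{thm:bohm-trad}. The only difference is one of care rather than of method: the paper's final step applies Lemma~\ref{lem:taylor_trad_dcbn_dcbv} directly to the Böhm tree $\btn(M)$ (resp.\ $\btv(M)$), an infinite, $\bot$-containing object outside the lemma's stated scope, whereas you explicitly flag this as the real work and sketch how to discharge it (the $\bot$-extended translation, the approximant-level equality, and the cofinality of $\{\tradn A\mid A\in\setapprox_n(M)\}$ in $\setapprox(\tradn M)$), which is a legitimate tightening of the same argument rather than a different approach.
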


\section{Meaningfulness and Taylor expansion}
\label{sec:meaningfulness}
In \cbn{} and \cbv{}, solvability and scrutability are characterized by non-empty Taylor normal forms~\cite{ER08,CG14}.
We aim to establish an analogous link in \dbang, for which a characterization of
meaningfulness has been provided~\cite{kag24}, but without an approximation theory related
to this notion. The first part of this result is achieved by
Theorem~\ref{thm:tnf-dbang}. The
second part presents signifiant challenges, which  this section aims to
address.

\begin{theorem}\label{thm:tnf-dbang}
	Let $M\in\dbang$. If $M$ is meaningful, then $\taylornf{M}\neq
	\emptyset$.
\end{theorem}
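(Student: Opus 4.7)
The plan is to transfer the meaningfulness witness into the resource calculus via the simulation theorem, decompose the resulting approximant using the context lemma, and then extract a non-empty normal form of an approximant of $M$ via strong normalization and confluence of $\todbrs$.

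\textbf{Step 1 (Lifting the witness).} Meaningfulness supplies a test context $T$ and a term $P$ with $T\langle M\rangle \todbs^* \oc P$. The crucial observation is that the empty bag $[\,]$ is a resource approximant of $\oc P$ (take $k = 0$ in the bag rule of Figure~\ref{fig:approx_dbang}), and it is trivially in $\todbrs$-normal form. Iterating the second direction of the surface simulation theorem (Theorem~\ref{thm:simu_taylor_db}) along the reduction $T\langle M\rangle \todbs^* \oc P$ produces an approximant $m \app T\langle M\rangle$ with $m \todbrs^* [\,]$.

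\textbf{Step 2 (Decomposition).} Test contexts are a subclass of surface contexts in which the hole sits at the head. An easy induction on the shape of $T$, patterned after Lemma~\ref{lem:approx_contexts}, then decomposes $m = t\langle m_M\rangle$ for some resource test context $t \app T$ and some $m_M \app M$, where the hole of $t$ is at a surface position of $t$.

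\textbf{Step 3 (Extracting a non-empty normal form).} By strong normalization of $\todbrs$, $m_M$ admits at least one normal-form reduct $m'_M$. Since the hole of $t$ is at surface position, $m_M \todbrs^* m'_M$ lifts contextually to $t\langle m_M\rangle \todbrs^* t\langle m'_M\rangle$. Combined with $t\langle m_M\rangle \todbrs^* [\,]$ and confluence of $\todbrs$, and using that $[\,]$ is itself in normal form (so the only common reduct is $[\,]$), we obtain $t\langle m'_M\rangle \todbrs^* [\,]$. Because $f\langle\emptyset\rangle = \emptyset$ for any full context $f$ and $[\,] \neq \emptyset$, we must have $m'_M \neq \emptyset$. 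Hence $m'_M \in \nf{m_M} \subseteq \taylornf{M}$, so $\taylornf{M} \neq \emptyset$.

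\textbf{Main obstacle.} The delicate point of Step 3 is the confluence invocation: the resource calculus enjoys confluence only modulo the absorbing behaviour of $\emptyset$ induced by non-deterministic reductions at size mismatches (dereliction with a bag of cardinality different from $1$, explicit substitution with cardinality different from $d_x(m)$). One needs to state confluence carefully enough that reductions which reach a genuine term like $[\,]$ cannot coexist with a rival reduction collapsing to $\emptyset$. An alternative route avoiding confluence would be an induction on the structure of $t$: in the application and abstraction cases, one analyses how the head reduction of $t\langle m_M\rangle$ to $[\,]$ must consume $m_M$, forcing some reduct of $m_M$ to take the shape required to continue the computation, hence to be non-empty.
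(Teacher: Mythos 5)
Your Steps 1 and 2 coincide with the paper's own argument: the paper invokes its pull-back simulation (Lemma~\ref{lem:simu_dbr_bis}, i.e.\ one half of Theorem~\ref{thm:simu_taylor_db}) and the decomposition Lemma~\ref{lem:approx_contexts} for exactly this purpose, so the only real divergence is Step 3, where the paper argues by contraposition and you try to extract a normal form directly. Step 3 as written has a genuine gap: it lives entirely at the surface level, while $\taylornf{M}$ does not. By definition $\taylornf{M}=\bigcup_{m\app M}\nf{m}$, where $\nf{m}$ collects the \emph{full} ($\todbrf$) normal forms of $m$; the paper explicitly stresses that Taylor normal form ``is made of full normal terms, not only surface normal forms''. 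Your $m'_M$ is only $\todbrs$-normal and non-$\emptyset$, and that neither puts it in $\nf{m_M}$ nor even guarantees $\nf{m_M}\neq\emptyset$: a surface-normal term can have no full normal form at all. For instance, with $\omega=(\lambda x\,x[x])[\,]\app\Omega$ (all of whose maximal reductions end in $\emptyset$), the term $\lambda y\,y[\omega]$ is surface-normal and distinct from $\emptyset$, yet every maximal $\todbrf$-reduction of it ends in $\emptyset$, because the only redexes sit inside the bag and $f\langle\emptyset\rangle=\emptyset$ absorbs the whole term. So the final claim $m'_M\in\nf{m_M}\subseteq\taylornf{M}$ is unjustified. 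A smaller slip of the same kind: strong normalization does not give ``at least one normal-form reduct'', since maximal reduction sequences may end in $\emptyset$ (as they do for all approximants of $\Omega$); you must take a maximal reduct and only afterwards rule out $\emptyset$.

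The repair stays within your scheme but must use full reduction. By strong normalization of $\todbrf$, pick a maximal $\todbrf$-reduct $r$ of $m_M$, which is either a full normal form or $\emptyset$; lift contextually to $t\langle m_M\rangle\todbrf^* t\langle r\rangle$; since also $t\langle m_M\rangle\todbrs^*[\,]$ (hence $\todbrf^*[\,]$, surface reduction being included in full reduction) and $[\,]$ is a full normal form, confluence of $\todbrf$ (with $\emptyset$ absorbing) forces $t\langle r\rangle\todbrf^*[\,]$, which excludes $r=\emptyset$ because $t\langle\emptyset\rangle=\emptyset$ is stuck; hence $r\in\nf{m_M}\subseteq\taylornf{M}$. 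This corrected version is essentially the paper's proof localized at $m_M$: the paper assumes $\taylornf{M}=\emptyset$, infers that every approximant --- hence every $t\langle m\rangle$ --- reduces to $\emptyset$, and contradicts $t\langle m\rangle\todbrs^*[\,]$. Finally, the confluence-with-$\emptyset$ issue you flag as the main obstacle is real but not specific to your route: the paper's contradiction rests on exactly the same fact (a term cannot reach both $[\,]$ and $\emptyset$), which the paper files under the standard resource-calculus arguments asserted after Definition~\ref{def:dbang-taylor-reductions}.
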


This result is encouraging for our study of Taylor expansion in $\dbang$
framework, as it applies to $\dcbn$ and $\dcbv$ through our previous 
results (Corollary~\ref{cor:tradn_tnf-tradv-tnf}): 
Theorem~\ref{thm:tnf-dbang} applies to both settings.

The converse fails in general: some non-empty Taylor normal forms correspond to
meaningless terms. As mentioned in~\cite{kag24}, some elementary terms, such as
$xx$, are meaningless, whereas $xy$ is not. Their intersection type system can
distinguish between these terms, but it is unlikely that such a distinction can
be captured at the syntactic level using Taylor expansion. 
Another approach would be to restrict ourselves to a (clash-free) fragment of
$\dbang$ excluding patterns that do not make sense from a $\dcbv$ nor a $\dcbn$
discipline (such as $xx$), but again we can exhibit terms such as $(x\oc x)(x\oc
x)$ (
	Recall that $\tradv{(xM)}=\oc x(\tradv M)$ and $\tradn{(Mx)}=\tradn M
	\oc x$.
) that are meaningless but cannot reasonably be assigned an empty Taylor normal
form. 

We prove in the remaining of this section that the result holds independently
for the two sublanguages of \dbang\ consisting of terms translated from
$\tradv{()}$ and $\tradn{()}$.
Our proof employs techniques adapted from \cbn{}~\cite{ER08} and
\cbv~\cite{CG14}, providing an initial characterization of the relationship between meaningfulness and Taylor
expression in a distant setting. Although it is frustrating that we cannot prove
the equivalence once for $\dbang$ and to apply it directly to its fragments;
this limitation 
also raises an open question which we find to be of interest: is there a
significant, bigger fragment of \dbang\ for which the equivalence can be proven
generically? Would this fragment cover terms not coming from a \dcbv\ or \dcbn\
translations? This is, for now, an open question.

We consider two strict subsets of \dbang : $\dbangv$ and $\dbangn$ corresponding
to terms obtained by translating from \dcbn\ and \dcbv, respectively. These
fragments also have the advantage of excluding 
\emph{clashes} - problematic $\dbang$ terms such as $\der{\lambda xM}$ - 
which are often omitted from the analysis~\cite{bucciarelli2023bang,
ehrhard2016bang} (see Remark~\ref{rem:clashes}).

\begin{definition}\

	\dbangn : $M_n := x \mid \lambda x M_n \mid M_n \oc  M_n \mid M_n\es{\oc
	M_n}x$
	
	\dbangv : 
		$M_v := \oc x \mid \oc(\lambda x M_v) 
		\mid L_v\<\lambda x M_v\> M_v \mid L_v\<x\> M_v \mid \der{M_v}M_v \mid M_v\es{M_v}x $

	$L_v:=\square\mid L_v[M_v/x]$

\end{definition}

A simple inspection of the definitions yields the following property:
\begin{lemma}\label{lem:trad_fragments}
	For any $M\in\dcbv, \tradv M\in\dbangv$, and for any $M\in\dcbn, \tradn M\in\dbangn$.
\end{lemma}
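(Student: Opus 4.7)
The plan is to prove each part by structural induction on $M$; the \dcbn{} case is immediate, while the \dcbv{} case requires a strengthened induction hypothesis to handle the case split in the translation of application. For the \dcbn{} statement, the four productions of $\dbangn$ correspond one-to-one to the four clauses of $\tradn{\cdot}$: $\tradn x = x$ matches the variable production, $\lambda y \tradn N$ matches by the IH on $N$, and $\tradn P \oc \tradn N$ and $\tradn P \es{\oc \tradn N}y$ match by the IHs on their subterms. No other case analysis is needed.

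For the \dcbv{} statement, the translation of application branches on whether $\tradv P$ has the shape $L\langle \oc P'\rangle$, and the relevant $\dbangv$ productions restrict the head $P'$ to be a variable or an abstraction. I therefore strengthen the induction hypothesis: for every $M \in \dcbv$, $\tradv M \in \dbangv$, and moreover whenever $\tradv M = L\langle \oc P\rangle$ the list context $L$ belongs to $L_v$ (all its substituted components lie in $\dbangv$) and $P$ is either a variable or of the form $\lambda y N$ with $N \in \dbangv$. The variable and abstraction cases verify both parts directly, with $L = \square$. For $M = PN$: if the strengthened IH yields $\tradv P = L\langle \oc P'\rangle$, the shape condition provides $P' = x$ or $P' = \lambda y M_v$, so $\tradv M = L\langle P'\rangle \tradv N$ matches either $L_v\langle x\rangle M_v$ or $L_v\langle \lambda y M_v\rangle M_v$; otherwise $\tradv M = \der{\tradv P} \tradv N$ matches $\der{M_v}M_v$. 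In either sub-case $\tradv M$ is an application and thus not of the shape $L\langle \oc P'\rangle$, so the shape condition holds vacuously. For $M = P\es Ny$: when $\tradv P = L\langle \oc P'\rangle$, the substitution is absorbed into the list context, yielding $\tradv M = L[\tradv N/y]\langle \oc P'\rangle$, and the IH on $N$ keeps the extended list context within $L_v$; otherwise $\tradv M = \tradv P\es{\tradv N}y$ matches the production $M_v\es{M_v}y$ and is not of the shape $L\langle \oc P'\rangle$.

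The main obstacle is only bookkeeping: one must verify that the strengthened IH is preserved in the explicit substitution case, namely that absorbing the new substitution into the list context surrounding $\oc P'$ keeps every substituted component in $\dbangv$, which reduces to the IH applied to $N$.
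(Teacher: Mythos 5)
Your proof is correct and matches the paper's intent: the paper disposes of this lemma with ``a simple inspection of the definitions,'' and your structural induction is exactly that inspection spelled out. The strengthened induction hypothesis for the \dcbv{} case (tracking that whenever $\tradv M = L\langle \oc P\rangle$ the context $L$ lies in $L_v$ and $P$ is a variable or an abstraction with body in $\dbangv$) is precisely the bookkeeping needed to make the application and explicit-substitution cases go through, and you verify it correctly in every case.
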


Note that the converse holds for $\dbangn$, but not for $\dbangv$. For example
$\der{\oc x}M\in\dbangv$, but no term of $\dcbv$ translates to
this term due of the side condition of $(-)^v$ which ensures the preservation of
normal forms. However, we cannot
exclude these patterns from $\dbangv$ as they can be obtained from some reduction
as shown in Section~\ref{sec:translations}.

We aim to ensure that our fragments are closed under reduction. Otherwise,
a term in $\dbangv$, for example, could reduce in a term in $\dbang$ for which
meaningfulness cannot be guaranteed (such as $xx$ or clashes like
$\der{\lambda xM}$). The following lemma can be proven by a standard induction.
\begin{lemma}
	$\dbangv$ and $\dbangn$ are closed under $\todbf$.
\end{lemma}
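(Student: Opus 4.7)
The plan is to prove both closure claims by induction on the derivation of $M \todbf N$, relying on a substitution sub-lemma established first. The sub-lemma states that if $M$ is a $\dbangn$ (resp.\ $\dbangv$) term and $N$ is a term that can legally replace a free variable in $M$---any $\dbangn$ term for $\dbangn$, and a variable or $\lambda$-abstraction for $\dbangv$---then $M\{N/x\}$ lies again in $\dbangn$ (resp.\ $\dbangv$). For $\dbangn$ this is immediate by induction on $M$, since every production of the grammar is closed under substitution of a $\dbangn$ term for a free variable. For $\dbangv$, one has to observe that the only positions where a free variable $x$ may appear in a $\dbangv$ term are inside $\oc x$ or as the head of an application $L_v\<x\> M_v$, and in both positions the replacement of $x$ by a variable or an abstraction produces a term still in $\dbangv$.

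Given the sub-lemma, I would go through the three reduction rules and verify closure under each. For $\dbangn$ the dereliction rule never fires because the grammar contains no $\der{-}$ constructor; only $\beta$ and exp remain. For $\beta$, an application $L\<\lambda x P\> Q$ in $\dbangn$ forces $Q = \oc Q'$ and $L$ to be a chain of explicit substitutions of shape $[\oc R_i/y_i]$, so the reduct $L\<P\es{\oc Q'}{x}\>$ directly fits the grammar. For exp, the redex form $M_n\es{\oc M_n}{x}$ forces the list context in the general rule to be empty, hence the reduct is $M\{N/x\}$ and the sub-lemma closes the case.

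For $\dbangv$ I would dispatch on the three application shapes $L_v\<\lambda x M_v\> M_v$, $L_v\<x\> M_v$, and $\der{M_v} M_v$ allowed by the grammar. The $\beta$ rule fires only on the first shape and yields $L_v\<M_v\es{M_v}{x}\>$, which lies in $\dbangv$. The exp rule on $M_v\es{M_v}{x}$ requires the right-hand side to match $L\<\oc N\>$; a case inspection of the grammar forces $L$ to coincide with some $L_v$ and $N$ to be a variable or a $\lambda$-abstraction, so the sub-lemma applies. The dereliction rule fires on $\der{M_v} M_v'$ precisely when $M_v$ has shape $L_v\<\oc N\>$; the reduct $L_v\<N\> M_v'$ then fits either the $L_v\<x\> M_v$ or the $L_v\<\lambda y M_v''\> M_v$ production. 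Contextual closure is then routine, since every production of both grammars has sub-components of the correct sort.

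The main obstacle I anticipate lies in the $\dbangv$ exp and dereliction cases, where one must verify that the general list context $L$ appearing in the reduction rule actually coincides with a $\dbangv$ list context $L_v$ and that the underlying $\oc N$ is a legal $\dbangv$ value. This requires a finite but case-heavy inspection of how a $\dbangv$ term can parse as $L\<\oc N\>$, following the chain of $M_v\es{M_v}{x}$ productions down to a leaf of shape $\oc x$ or $\oc(\lambda y M_v')$.
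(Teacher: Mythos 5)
Your substitution sub-lemma (with the value restriction for $\dbangv$), your treatment of $\dbangn$, and your parsing analysis of how a $\dbangv$ term can have the shape $L\<\oc N\>$ are all correct, and they follow the ``standard induction'' the paper invokes (the paper gives no detailed proof to compare against). However, the $\dbangv$ half of your induction has a genuine gap, concentrated in your closing claim that contextual closure is routine ``since every production of both grammars has sub-components of the correct sort.'' This is true for $\dbangn$ (every proper subterm of a $\dbangn$ term is either a $\dbangn$ term or a bang $\oc M_n$, and a bang is never the root of a redex), but it is false for $\dbangv$: bare variables and bare abstractions are not $\dbangv$ terms, so the head $L_v\<\lambda x M_v\>$ (resp.\ $L_v\<x\>$) of an application is a proper subterm which is \emph{not} of any ``correct sort'', and redexes can be rooted on its explicit-substitution spine. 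Concretely, take $T=\bigl((\lambda z\,\oc z)\es{\oc x}{y}\bigr)\,\oc w\in\dbangv$, an instance of the production $L_v\<\lambda z M_v\>M_v'$ with $L_v=\square\es{\oc x}{y}$. The subterm $(\lambda z\,\oc z)\es{\oc x}{y}$ is an exp-redex, and $T\todbf(\lambda z\,\oc z)\,\oc w$. This step is covered by none of your cases: the redex is not an instance of the production $M_v\es{M_v}{x}$, since its body $\lambda z\,\oc z$ is not a $\dbangv$ term (so your sub-lemma does not even apply to it), and it is not contained in any grammar sub-component of $T$ (those are $\oc z$, $\oc x$ and $\oc w$).

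The fix is to strengthen the induction: prove simultaneously that $\todbf$ maps $\dbangv$ terms to $\dbangv$ terms \emph{and} maps head shapes $L_v\<\lambda x M_v\>$ (resp.\ $L_v\<x\>$) to head shapes of the same kind; correspondingly, extend the substitution sub-lemma to these shapes (substituting a variable or an abstraction for a free variable in $L_v\<\lambda x M_v\>$ yields again such a shape, and similarly a bare abstraction stays a bare abstraction). With this mutual statement your exp case splits into ``body in $\dbangv$'' and ``body a head shape'', and both close. Note that in the dereliction case you already, and correctly, argue on the whole application $\der{M_v}M_v'$ rather than on the redex $\der{M_v}$ alone --- exactly the right instinct; it just needs to be applied systematically to the exp rule as well, which your plan, as written, does not do.
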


\subsection{Meaningfulness and Taylor Expansion for \dcbn}

The case of \dcbn\ is relatively easy to handle, as we can adapt to the distant
case the following properties; which correspond to well-known features of $\lambda$
calculus: resource normal form correspond to head normal forms and terms with head normal
forms are meaningful.

\begin{lemma}
	\label{lemma:dbangn-nf-shape}
	The normal forms of \dbangn{} are of shape $\lambda x_1\dots x_k (x) \oc N_1\dots \oc
	N_l$, where $k,l\in\mathbb N$.
\end{lemma}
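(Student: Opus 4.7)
The plan is a straightforward structural induction on the term $M \in \dbangn$, exploiting the very restrictive shape of the grammar: every application has its argument wrapped in a $\oc$, and every explicit substitution has its right-hand side wrapped in a $\oc$. The second point is decisive: a substitution redex $M\es{L\<\oc N\>}x$ with $L=\square$ immediately matches the pattern of the reduction rule, so any explicit substitution appearing in a $\dbangn$-term is automatically a redex (under the empty full context). Hence normal forms of $\dbangn$ contain no explicit substitutions at all, which is what makes the target shape $\lambda x_1\ldots x_k(x)\oc N_1\ldots\oc N_l$ substitution-free.

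The induction then proceeds by cases. For $M=x$ the term has the required shape with $k=l=0$. For $M=\lambda x M'$, since $\lambda x$ is compatible with $\todbf$, $M$ is normal iff $M'$ is normal; by induction hypothesis $M' = \lambda y_1\ldots y_k (x)\oc N_1\ldots \oc N_l$, and we prepend $\lambda x$. For $M = M_1\oc M_2$, $M$ is normal iff $M_1$ and $M_2$ are both normal and $M_1$ is not of the shape $L\<\lambda y P\>$ (otherwise a $\oc$-redex fires at the root). By induction $M_1$ has the shape $\lambda y_1\ldots y_k(y)\oc P_1\ldots \oc P_j$; since normal forms contain no explicit substitutions, the list context $L$ around any head $\lambda y$ would have to be $\square$, so the condition ``$M_1$ is not $L\<\lambda y P\>$'' forces $k=0$. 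Thus $M_1$ is a head-variable application $(y)\oc P_1\ldots \oc P_j$ and $M=(y)\oc P_1\ldots \oc P_j\oc M_2$, which is of the desired shape. For $M = M_1\es{\oc M_2}{x}$, the reduction rule applied with $L=\square$ gives $M \todb M_1\{M_2/x\}$, so $M$ is never a normal form, which rules this case out entirely.

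The only subtle point — and really the main thing to verify carefully — is the side condition in the application case: we must check that in $\dbangn$ normal forms, every occurrence of a ``head block'' $L\<\lambda x P\>$ would necessarily have $L=\square$, so that the requirement of not being a redex translates exactly to the head not being an abstraction. This follows immediately from the observation above that no explicit substitution can survive in a $\dbangn$ normal form. With this in hand, the induction goes through without difficulty, and the shape $\lambda x_1\ldots x_k(x)\oc N_1\ldots \oc N_l$ is produced by iterating case 2 on the outside and case 3 on the spine, bottoming out at case 1.
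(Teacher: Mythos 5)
Your proof is correct and takes essentially the same approach as the paper: the paper's own proof consists precisely of your two key observations — that any explicit substitution in a $\dbangn$ term is a redex (so normal forms are substitution-free), and that an application whose leftmost subterm is not a variable is reducible — stated without the induction scaffolding. Your structural induction merely spells out the routine details (including the point that the list context $L$ in a potential head redex must be $\square$) that the paper leaves implicit.
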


Naturally, full normal form requires the $N_i$ to be in normal form too, but as
we shall see, this is not relevant for studying meaningfulness, as these terms
will be erased by an appropriate testing context. 
Previous observations can be brought at a resource level.
\begin{lemma}\label{lem:dbangn_nf}
	Let $m\appn M$ with $M\in\dbangn$. If $\nf m\neq\emptyset$, it is of shape $\lambda
	x_1\dots x_k(x)\bar n_1\dots \bar n_l$.
\end{lemma}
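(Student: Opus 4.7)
The plan is to reduce the statement to a structural induction on $p$. Starting from $m\appn M$ with $M\in\dbangn$ and $p\in\nf{m}$ nonempty, I would first invoke Lemma~\ref{lem:taylor:approx-in-reduces-bis} (going through $\taylorn M = \taylor{M}$, since the converse of Lemma~\ref{lem:trad_fragments} holds for $\dbangn$, i.e.\ every term of $\dbangn$ is the translation of some $\dcbn$-term) to produce some $M'$ with $M \todbf^* M'$ and $p \app M'$. Since $\dbangn$ is closed under $\todbf$, we have $M' \in \dbangn$ as well. It therefore suffices to prove: if $p$ is a $\dbangres$-normal form approximating some $M' \in \dbangn$, then $p = \lambda x_1\ldots x_k\,(x)\,\bar n_1\ldots \bar n_l$.

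For the induction on the structure of $p$, the key observations about the approximation relation restricted to $\dbangn$ are: (i) no dereliction is ever introduced, since neither the grammar of $\dbangn$ nor the clauses defining $\appn$ in Definition~\ref{def:dcbn_taylor_approx} produce $\der(\cdot)$; and (ii) because explicit substitutions in $\dbangn$ always have shape $M_1\es{\oc M_2}{x}$, their approximants always carry a bag $[n_1,\ldots,n_l]$ directly in the argument position, so by rule~2 of Definition~\ref{def:dbang-taylor-reductions} (with empty list context) such a term is always a redex. Consequently, if $p$ is in normal form it cannot be headed by $\der$ nor by an explicit substitution. Only three cases remain: $p = y$ gives the conclusion immediately with $k = l = 0$; $p = \lambda y\,p'$ is handled by applying the induction hypothesis to $p'$ and prepending one more $\lambda y$; and $p = p_1[n_1,\ldots,n_l]$ is handled by applying the induction hypothesis to $p_1$ (also in normal form and still approximating a $\dbangn$-term), which yields $p_1 = \lambda y_1\ldots y_k\,(z)\,\bar q_1\ldots \bar q_m$, with $k = 0$ forced by the fact that otherwise $p$ would itself be a redex via the first reduction rule (empty list context).

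The main obstacle is simply the case analysis on $p$'s head constructor and in particular justifying that the substitution case is properly excluded by normal-formhood; once this observation is made, the induction is routine. It is worth noting that this is precisely what fails in $\dbangv$, where approximants may be headed by a dereliction and explicit substitutions may carry non-bag arguments, motivating the separate treatment given later in the paper.
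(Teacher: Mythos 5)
Your proof is correct, and it is worth noting that the paper itself gives no explicit proof of this lemma: it is dispatched with the single remark that the ``previous observations'' (those proving Lemma~\ref{lemma:dbangn-nf-shape}, namely that explicit substitutions and non-variable-headed applications are always reducible) ``can be brought at a resource level.'' Your structural induction on $p$ is exactly that transposition --- no dereliction can appear, explicit-substitution heads are redexes because their arguments are forced to be bags, and application heads must be variables lest rule~1 apply --- so the core of your argument coincides with the paper's intended one. Where you genuinely add something is the preliminary step: elements of $\nf m$ are \emph{reducts} of an approximant of $M$, not approximants of $M$ itself (full resource reduction does not preserve $\app$, as the paper itself stresses after Lemma~\ref{lem:simu_dbr_n}), so the structural analysis cannot be applied to $p$ directly. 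Your use of Lemma~\ref{lem:taylor:approx-in-reduces-bis} together with closure of $\dbangn$ under $\todbf$ is precisely the right bridge, and it is a point the paper's one-line justification glosses over; the alternative would have been to show that the resource sub-syntax of approximants of $\dbangn$-terms (variables, abstractions, applications and substitutions with bag arguments) is itself closed under $\todbrf$, which amounts to re-proving at the resource level the closure you instead import from the term level. Either route works; yours reuses more of the existing machinery and makes the hidden subtlety explicit.
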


We are now able to state the theorem establishing the classical link between Taylor
expansion and meaningfulness in the case of $\dbangn$.
\begin{theorem}\label{thm:dbangn_tnf}
	Let $M\in\dbangn$. If $\taylornf M\neq\emptyset$, then $M$ is
	meaningful.
\end{theorem}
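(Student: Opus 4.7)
The plan is to extract from a non-empty Taylor normal form a surface reduct of $M$ exhibiting a head-normal-form-like shape, and then to construct a testing context that reduces $M$ to a term of the form $\oc P$.

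Given $m \in \taylornf M$, Lemma~\ref{lem:taylor:approx-in-reduces-bis} provides some $M'$ with $M \todbf^* M'$ and $m \app M'$; moreover $M' \in \dbangn$ by closure of $\dbangn$ under $\todbf$. Since $m$ is itself a normal form, Lemma~\ref{lem:dbangn_nf} forces $m = \lambda x_1 \dots x_k.\, (y)\, \bar n_1 \dots \bar n_l$ for some variable $y$, integers $k, l \ge 0$, and bags $\bar n_i$. Inverting the approximation rules of Figure~\ref{fig:approx_dbang} (using that $m$ carries no explicit substitution along its head spine and that a bag can approximate only a $\oc$-term) forces $M'$ to have exactly the shape $\lambda x_1 \dots x_k.\, (y)\, \oc N_1 \dots \oc N_l$. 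Applying the factorization property (Corollary~\ref{cor:standardisation_dbang}) to $M \todbf^* M'$ then yields $P$ with $M \todbs^* P$ and $P = \lambda x_1 \dots x_k.\, (y)\, \oc N_1' \dots \oc N_l'$, sharing the same outer surface structure as $M'$.

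I next construct the testing context. Set $I = \lambda z.\, z$ and $E = \lambda z_1 \dots \lambda z_l.\, \oc I$ (taking $E = \oc I$ when $l = 0$). If $y = x_i$ for some $1 \le i \le k$, let
\[
T \;:=\; \square \,\oc I \dots \oc E \dots \oc I,
\]
with $\oc E$ at the $i$-th argument and $k$ arguments in total. Then $T\<P\>$ surface-reduces by collapsing the $k$ outer abstractions: $x_j$ is substituted by $I$ for $j \ne i$ and $x_i$ by $E$; the head becomes $(\lambda z_1 \dots z_l.\, \oc I)\, \oc N_1'' \dots \oc N_l''$, which surface-reduces to $\oc I$. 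If $y$ is free in $P$, let instead
\[
T \;:=\; (\lambda y.\, \square \,\oc I \dots \oc I)\, \oc E,
\]
with $k$ copies of $\oc I$. The outer abstraction $\lambda y$ captures the head $y$ and the argument $\oc E$ substitutes $E$ for it; the $k$ inner copies of $\oc I$ then collapse the $\lambda x_j$'s, and the resulting term again reduces to $\oc I$. Since surface reduction is closed under testing contexts, $T\<M\> \todbs^* T\<P\> \todbs^* \oc I$, witnessing meaningfulness of $M$.

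The main obstacle is the inversion step fixing the outer shape of $M'$: one must rule out explicit-substitution wrappers or nested bangs on the head spine, using that such wrappers would demand matching wrappers in $m$, which has none. The grammar of $\dbangn$ (where every application argument is already a bang) together with the simple form of the approximation rules makes this inversion tight. Once the shape is established and transported to a surface reduct $P$ of $M$ by factorization, the testing-context construction essentially replays the classical head-normal-form argument for CbN-solvability, adapted to thread bangs and derelictions through the distant substitutions.
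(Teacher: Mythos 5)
Your proof is correct, and its first half follows the same strategy as the paper's: extract from the non-empty Taylor normal form a surface reduct of $M$ with the visible shape $\lambda x_1\dots x_k\,(y)\,\oc N_1\dots\oc N_l$, then close it under a testing context. There are two differences. For the extraction, the paper stays at the resource level: it applies resource factorization (Proposition~\ref{prop:standardisation_resources}) to obtain a surface-reduced approximant $m'$ of that shape and lifts $m\todbrs^* m'$ to $M\todbs^* M'$ by iterating Lemma~\ref{lem:simu_dbr_n}; you instead invoke Lemma~\ref{lem:taylor:approx-in-reduces-bis} to get a full reduct $M\todbf^* M'$ with $m\app M'$, invert the approximation rules, and then recover a surface reduct $P$ with the same outer shape via the term-level factorization (Corollary~\ref{cor:standardisation_dbang}). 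This is the same underlying machinery, packaged through already-proved lemmas rather than inlined, and both routes are sound. The substantive difference is the testing context. The paper uses the single context $T=((\lambda x\,\square)\,\oc(\lambda y_1\dots y_l\,\oc z_0))\,\oc z_1\dots\oc z_k$, which presupposes that the head variable $x$ of $M'$ is \emph{free}: if instead the head variable is one of the bound $x_i$, the outer $\lambda x$ captures nothing, the eraser is discarded, and $T\<M'\>$ surface-reduces to the stuck term $(z_i)\,\oc N''_1\dots\oc N''_l$, which is not a bang (the case is not vacuous: $\lambda x\,(x)\oc y\in\dbangn$ is normal, has non-empty Taylor normal form, and has a bound head variable). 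Your case analysis — eraser $E=\lambda z_1\dots z_l\,\oc I$ inserted at the $i$-th argument position when $y=x_i$ is bound, versus binding $y$ and feeding it $\oc E$ when $y$ is free — covers both situations, so on this point your proof is more careful than the paper's, which silently treats only the free-head-variable case.
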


\begin{corollary}\label{cor:dcbn_meaningful}
	For any $M$ in \dcbn, $\taylornf M\neq\emptyset$ if and only if $M$ is
	meaningful.
\end{corollary}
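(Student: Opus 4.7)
The plan is to simply chain together the transport lemmas established throughout the paper, using the translation $(-)^n : \dcbn \to \dbang$ as a bridge and the fact that its image lands in the well-behaved fragment $\dbangn$. No new work is required; the corollary is essentially the confluence of four results already assembled.

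First, I would invoke Theorem~\ref{thm:trad_meaningfulness}(2) to restate $\dcbn$-meaningfulness of $M$ as $\dbang$-meaningfulness of $\tradn M$. At that point the question lives inside $\dbang$, and by Lemma~\ref{lem:trad_fragments} the translated term $\tradn M$ lies in the fragment $\dbangn$, which is precisely the setting in which both directions of the equivalence between meaningfulness and non-empty Taylor normal form have been established: Theorem~\ref{thm:tnf-dbang} gives meaningful $\Rightarrow \taylornf{\tradn M}\neq\emptyset$, while Theorem~\ref{thm:dbangn_tnf} gives the converse $\taylornf{\tradn M}\neq\emptyset \Rightarrow$ meaningful. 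Combining them yields $\tradn M$ meaningful iff $\taylornf{\tradn M}\neq\emptyset$.

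Finally, to push this equivalence back from $\tradn M$ to $M$ itself, I would apply Corollary~\ref{cor:tradn_tnf-tradv-tnf}, which states $\taylornf M = \emptyset \leftrightarrow \taylornf{\tradn M} = \emptyset$. Chaining the three equivalences gives $M$ is $\dcbn$-meaningful iff $\taylornf M \neq \emptyset$, as required.

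There is essentially no obstacle: every ingredient is stated in the previous sections and the argument is a short diagram chase. The only thing worth stressing is the role of the fragment $\dbangn$; without Lemma~\ref{lem:trad_fragments} one could not invoke Theorem~\ref{thm:dbangn_tnf}, since that direction is known to fail in full $\dbang$ (as discussed with examples like $xy$ and $(x\oc x)(x\oc x)$ in Section~\ref{sec:meaningfulness}). So the key conceptual point, even though the proof is short, is that $(-)^n$ lands strictly inside the fragment where the converse holds.
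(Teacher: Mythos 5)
Your proposal is correct and follows essentially the same route as the paper's own proof: transfer meaningfulness via Theorem~\ref{thm:trad_meaningfulness}, transfer Taylor normal forms via Corollary~\ref{cor:tradn_tnf-tradv-tnf}, use Theorem~\ref{thm:tnf-dbang} for the forward direction, and use Lemma~\ref{lem:trad_fragments} together with Theorem~\ref{thm:dbangn_tnf} for the converse. The only cosmetic difference is that the paper argues the two implications separately while you chain the equivalences, and your closing remark on why landing in the fragment $\dbangn$ is essential matches the paper's own discussion.
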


\subsection{Meaningfulness and Taylor expansion for \dcbv}
The \dcbv{} case requires a finer analysis of normal forms. 
We first characterize normal forms of $\dbangv$ and their counterparts in resource
calculus. The following definition and lemma are derived by a standard induction over the
syntax of $\dbangv$.

\begin{definition}\label{def:syntax_nf_dbangv}
	The normal forms of \dbangv are described by the following syntax:

	$B:=B_\oc\mid\oc\lambda xB\mid \oc x\mid L\<B\> \qquad\qquad\qquad L:= \square\mid L[B_\oc/x]$

	$B_\oc :=L\<x\>B\mid \der{B_\oc}B\mid L\<B_\oc\>$

\end{definition}

\begin{lemma}\label{lem:syntax_nf_dbangvres}
	Let $m\app M\in\dbangv$. If $\nf m\neq \emptyset$ then it is of the
	following form:

	$b:= b_\oc \mid [\lambda xb,\dots ,\lambda xb]\mid [x,\dots ,x]\mid l\<b\> \quad~ l:= \square\mid l[b_\oc/x]$

	$b_\oc :=l\<x\>b\mid \der{b_\oc}b\mid l\<b_\oc\>$

\end{lemma}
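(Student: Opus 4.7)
The plan is to reduce the general claim to a normal-form version, then proceed by structural induction on the $\dbangv$-grammar of $M$. Given $m\app M\in\dbangv$ with $m' \in \nf m$ arbitrary, Lemma~\ref{lem:taylor:approx-in-reduces-bis} yields some $M'$ with $M \todbf^* M'$ and $m' \app M'$; since $\dbangv$ is closed under $\todbf$ (established just above), $M' \in \dbangv$. So it will suffice to show: whenever $m \app M \in \dbangv$ with $m$ in resource normal form, $m$ matches the grammar $b$.

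I would then proceed by structural induction on $M$ following the $\dbangv$-grammar. The two bang cases $M = \oc x$ and $M = \oc(\lambda y N)$ force $m$ to be a bag $[x, \ldots, x]_k$ or $[\lambda y m_1, \ldots, \lambda y m_k]$ respectively; in the second case, each component is in normal form and matches $b$ by the IH on $N$, so $m$ matches $[\lambda y b, \ldots, \lambda y b]$. The ``redex'' case $M = L_v\<\lambda y N\> P$ is vacuous: by Lemma~\ref{lem:approx_contexts} any approximant has the shape $l\<\lambda y n\> p$, a $\dbangres$ beta redex, contradicting normality of $m$. In each remaining case $M = L_v\<x\> N$, $M = \der{N} P$, or $M = N\es{P}{x}$, the approximant takes the expected shape ($l\<x\> p$, $\der{n} p$, or $n[p/x]$ respectively), and normality of $m$ forces, respectively, each substitute of $l$, the dereliction argument $n$, and the explicit substitute $p$ to avoid exposing a bag at the surface, otherwise a substitution or dereliction redex would fire.

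The key technical step will be an auxiliary grammatical observation: a term matching $b$ either decomposes as $l\<q\>$ for some bag $q = [q_1, \ldots, q_j]$, or it already matches the stricter $b_\oc$. This follows by induction on the $b$-derivation, peeling off the $l\<b\>$ case by composing list contexts until reaching either a $b_\oc$ leaf (giving a $b_\oc$ match via $b_\oc := l\<b_\oc\>$) or a bag leaf (giving the decomposition). Combined with the IHs, this closes the three non-trivial cases: the substitutes of $l$ match $b_\oc$ (fitting $l := \square \mid l[b_\oc/x]$), the dereliction argument $n$ matches $b_\oc$ (fitting $b_\oc := \der{b_\oc}b$), and the substitute $p$ matches $b_\oc$ (fitting the substitution grammar); so the corresponding $m$ matches $b_\oc := l\<x\>b$, $b_\oc := \der{b_\oc}b$, and $b := l\<b\>$ respectively. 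The main obstacle I anticipate is the careful bookkeeping of the $b$/$b_\oc$ distinction throughout the induction, making sure every position where a redex could fire is precisely one where the grammar observation lets us upgrade $b$ to $b_\oc$.
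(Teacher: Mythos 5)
Your proposal is correct and takes essentially the same route as the paper, which only remarks that the lemma ``is derived by a standard induction over the syntax of $\dbangv$''---precisely your core argument, with the redex case $L_v\<\lambda x M_v\>M_v$ vacuous and the $b$/$b_\oc$ distinction enforced by normality. Your preliminary reduction via Lemma~\ref{lem:taylor:approx-in-reduces-bis} together with closure of $\dbangv$ under $\todbf$ is the right way to pass from elements of $\nf m$ to normal approximants of reducts of $M$, which the paper leaves implicit.
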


We consider a family of terms which are
suitable for providing an appropriate testing context for any term of $\dbangv$
with non-empty Taylor normal form, in which it eventually reduces to a value. We define
$\circ_0=\lambda x_0x_0$ and $\circ_{k+1}=\lambda x_{k+1}\oc\circ_k$.

We establish the testing context with the following lemma. The proof 
\iflong (see Appendix~\ref{appendix:meaningfulness}) 
\else (\cite{chardonnet2026approximationtheorydistantbang})
\fi
follows Carraro and Guerrieri's
technique of mutual induction~\cite[Lemmas 26 and 27]{CG14}.

\begin{lemma}\label{lem:mutual1}
	Let $\{x_1,\dots ,x_n\}$ be a set of variables and $M\in B$ 
	(Definition~\ref{def:syntax_nf_dbangv})
	with
	$\fv(M)\subseteq\{x_1,\dots ,x_n\}$. There exists $c\in\mathbb N$ such
	that for any $k_1,\dots ,k_n\geq c$ we have
	$M\{\circ_{k_1}/x_1\dots \circ_{k_n}/x_n\}\todbf^* \oc P$ for some $P$.
\end{lemma}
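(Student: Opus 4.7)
My plan is to prove the lemma together with a stronger companion claim on $B_\oc$, by mutual induction on the grammar of $B$ and $B_\oc$ in Definition~\ref{def:syntax_nf_dbangv}. The companion reads: for every $M\in B_\oc$ with $\fv(M)\subseteq\{x_1,\ldots,x_n\}$ and every $d\in\mathbb N$, there exists $c$ such that $k_1,\ldots,k_n\geq c$ implies $M\{\circ_{k_1}/x_1,\ldots,\circ_{k_n}/x_n\}\todbf^*\oc\circ_d$. The need for this strengthening is visible in the case $\der{B_\oc'}B_1$: we do not merely need $B_\oc'$ to reduce to some bang, but to a bang carrying a deep enough reservoir of abstractions, so that the surrounding consumers can continue to ``eat'' arguments without running out.

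The combinatorial core is the computation $\circ_{k+1}\,N\todbf^*\oc\circ_k$ whenever $N\todbf^*\oc Q$: the distance-$\beta$ step gives $(\oc\circ_k)\es{N}{x_{k+1}}$, reducing $N$ inside the substitution to $\oc Q$ and firing (using that $x_{k+1}$ does not occur in $\circ_k$) yields $\oc\circ_k$. Combined with $\der{\oc\circ_{k+1}}\todbf\circ_{k+1}$, this shows that each application or dereliction at the head of a $B_\oc$ drops the index by one. The companion thus propagates the desired $d$ upwards by incrementing it whenever it enters such a consumer. For $L\<x\>B_1$ I invoke the companion IH on each $B_\oc$ inside $L$ (with any target) to fire the outer substitutions, apply the main IH to $B_1$ to obtain $\oc Q$, and use the computation on $\circ_{k_x}\,B_1\{\sigma\}$; this forces $c\geq d+1$. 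For $\der{B_\oc'}B_1$ I call the companion on $B_\oc'$ with target $d+1$, the main IH on $B_1$, then apply both computations in succession. For $L\<B_\oc'\>$ I fire $L$ using the companion on each substitution and conclude by the companion on $B_\oc'$ with the same $d$ (now including the variables bound by $L$).

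For the main statement, the cases $\oc\lambda xB'$ and $\oc x$ are immediate since the substituted term is already a bang; the case $B_\oc$ is a direct appeal to the companion; and $L\<B'\>$ is handled by firing the substitutions of $L$ using the companion and then invoking the main IH on $B'$. The main obstacle lies not in any individual case but in the systematic bookkeeping of $c$: derelictions force the companion target to be incremented while applications consume one unit of the index, so $c$ must be chosen uniformly large enough for the worst nested chain of $B_\oc$-consumers in $M$. Separating the two statements is exactly the device that makes this bookkeeping local, since each inductive step then has only to account for the increment it itself introduces.
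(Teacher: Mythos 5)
Your overall strategy is the paper's own: a mutual induction on the grammars $B$ and $B_\oc$ of Definition~\ref{def:syntax_nf_dbangv}, with a quantitatively strengthened companion statement for $B_\oc$-terms, driven by the two computations $\circ_{k+1}N\todbf^*\oc\circ_k$ (for $N$ reducing to a bang) and $\der{\oc\circ_j}\todbf\circ_j$; the paper's Lemma~\ref{lem:mutual2} plays exactly the role of your companion, with the same case split. However, your companion is misstated, and as written it is false: you claim that for every target $d$ there is a threshold $c$ such that $k_1,\ldots,k_n\geq c$ implies $M\{\circ_{k_1}/x_1,\ldots,\circ_{k_n}/x_n\}\todbf^*\oc\circ_d$, with output index \emph{exactly} $d$. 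Take $M=x(\oc y)\in B_\oc$ (the case $L\<x\>B$ with $L=\square$ and $B=\oc y$). For $k_1\geq 1$, the term $\circ_{k_1}(\oc\circ_{k_2})$ reduces to $(\oc\circ_{k_1-1})\es{\oc\circ_{k_2}}{x_{k_1}}$ and then to its unique normal form $\oc\circ_{k_1-1}$: the output index is pinned to the input index $k_1$, so it cannot equal a fixed $d$ for all $k_1,k_2\geq c$. This is precisely what surfaces in your own head-variable case, where you obtain $\oc\circ_{k_x-1}$ and can only ensure $k_x-1\geq d$ by taking $c\geq d+1$; that proves a strictly weaker conclusion than the companion you announced, so the induction as you set it up does not close.

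The repair is immediate and lands you on the paper's formulation: weaken the companion to ``$M\{\vec{\circ_k}/\vec x\}\todbf^*\oc\circ_j$ for \emph{some} $j\geq d$'' (i.e., the output index can be forced arbitrarily high by raising the threshold, not prescribed exactly). Every use you make of the companion needs only this weaker form: in the case $\der{B_\oc'}B_1$, invoking it with target $d+1$ gives $\oc\circ_j$ with $j\geq d+1$, whence $\der{\oc\circ_j}(B_1\sigma)\todbf\circ_j(B_1\sigma)\todbf^*\circ_j(\oc Q)\todbf^*\oc\circ_{j-1}$ with $j-1\geq d$; and in the cases $L\<B_\oc'\>$ and $L\<B'\>$ one only needs the indices $j_i$ substituted by the list context to exceed the inner threshold, which the weak form supplies. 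With this correction your argument, including the bookkeeping of thresholds by maxima, coincides with the paper's proof of Lemmas~\ref{lem:mutual1} and~\ref{lem:mutual2}.
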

We can now state the central theorem of this section. 
\begin{theorem}
	\label{thm:dbangv_tnf}
	Let $M\in\dbangv$. If $\taylornf M\neq \emptyset$, then $M$ is
	meaningful.
\end{theorem}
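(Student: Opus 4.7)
The plan is to adapt the Carraro--Guerrieri technique~\cite{CG14} to the distant setting, combining the syntactic characterization of $\dbangv$ normal forms (Def~\ref{def:syntax_nf_dbangv} and Lemma~\ref{lem:syntax_nf_dbangvres}) with the mutual-induction Lemma~\ref{lem:mutual1} already proved. The three ingredients will be: (i)~converting the hypothesis $\taylornf M \neq \emptyset$ into the existence of a reduct of $M$ whose \emph{top-level shape} fits the grammar $B$; (ii)~applying Lemma~\ref{lem:mutual1} to build a closing substitution that collapses this reduct to some $\oc P$; (iii)~wrapping this substitution into a testing context and standardizing full reduction to surface reduction.

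For step~(i), I would start by picking $m \in \taylornf M$. By Lemma~\ref{lem:taylor:approx-in-reduces-bis}, there exists $M'$ with $M \todbf^* M'$ and $m \app M'$. Since $M \in \dbangv$ and $\dbangv$ is closed under $\todbf$, we have $M' \in \dbangv$. By Lemma~\ref{lem:syntax_nf_dbangvres}, $m$ (being a resource normal form approximating $M' \in \dbangv$) fits the resource grammar $b$, and the correspondence between approximants and the terms they approximate forces $M'$ to admit a further reduction to some $N$ fitting the term grammar $B$ (any top-level redex in $M'$ not already erased by the approximant structure of $m$ can still be reduced, while the approximant's shape guarantees we never enter an infinite surface-reduction loop at the top level). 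Alternatively, this can be phrased using the Commutation Theorem: $\taylor{\btb(M)} = \taylornf M \neq \emptyset$ produces a Böhm approximant $A \in \setapprox(M)$ with $\taylor A \neq \emptyset$, i.e.\ with no surface $\bot$, and $A \smaller N$ for some reduct $N \in \dbangv$ of $M$; the combination of $A$'s shape and $N$ belonging to $\dbangv$ forces $N$ to be (after further reduction if needed) in $B$.

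For step~(ii), let $\{x_1,\dots,x_n\} \supseteq \fv(N)$. Applying Lemma~\ref{lem:mutual1} to $N \in B$, pick $c$ and any $k_1,\dots,k_n \geq c$, yielding $N\{\circ_{k_1}/x_1,\dots,\circ_{k_n}/x_n\} \todbf^* \oc P$. I then build the testing context $T := (\lambda x_1\dots(\lambda x_n\,\square)\,\circ_{k_n})\cdots\circ_{k_1}$, which fits the grammar of testing contexts from Def~\ref{def:dbang_meaningful}. Then $T\langle M\rangle \todbf^* T\langle N\rangle \todbf^* N\{\circ_{k_i}/x_i\}_i \todbf^* \oc P$. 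Finally, applying standardization (Corollary~\ref{cor:standardisation_dbang}) to this full reduction to the bang $\oc P$, we obtain a surface reduction $T\langle M\rangle \todbs^* \oc P'$ for some $P'$ (the remaining internal reductions occur only under the outer $\oc$), which is exactly the condition for meaningfulness.

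The main obstacle is step~(i): the fact that $M$ need not admit a full normal form, yet we must exhibit a reduct whose \emph{surface} structure matches grammar $B$. The key is that a non-empty Taylor normal form acts as a \emph{finite witness} dictating the top-level shape of some reduct of $M$, while the inner $\bot$-like positions can remain unnormalized (they only need to sit under $\oc$'s, exactly where grammar $B$ permits arbitrary subterms). Making this translation from resource-normal-form structure to term-level structure rigorous — and ensuring it stays within $\dbangv$ — is where the delicate induction lies, whereas steps (ii) and (iii) are routine given the already-established Lemma~\ref{lem:mutual1} and the standardization machinery.
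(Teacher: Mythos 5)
Your proposal follows the paper's proof almost step for step: from $\taylornf M\neq\emptyset$, extract a reduct of $M$ whose surface structure fits the grammar $B$ of Definition~\ref{def:syntax_nf_dbangv} (the paper does this via factorization, Proposition~\ref{prop:standardisation_resources}, plus iterated surface simulation; your route through Lemma~\ref{lem:taylor:approx-in-reduces-bis} is equivalent), then apply Lemma~\ref{lem:mutual1} to the reduct and wrap the resulting substitution into a testing context. Your discussion of step~(i) — in particular that only the \emph{surface} shape of the reduct needs to match $B$, the subterms under $\oc$ remaining unnormalized — is exactly the right point, and it is the same (slightly glossed-over) point on which the paper's proof relies.

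However, step~(ii) contains a concrete error: your testing context $T=(\lambda x_1\dots(\lambda x_n\,\square)\,\circ_{k_n})\cdots\circ_{k_1}$ passes the terms $\circ_{k_i}$ themselves as arguments, and with these the claimed reduction $T\langle N\rangle\todbf^* N\{\circ_{k_1}/x_1,\dots,\circ_{k_n}/x_n\}$ fails. In \dbang{} the explicit substitution rule only fires on arguments of shape $L\langle\oc N\rangle$: after the $\beta$-steps you reach $N\es{\circ_{k_n}}{x_n}\cdots\es{\circ_{k_1}}{x_1}$, and since each $\circ_{k_i}$ is a (normal, closed) abstraction rather than a bang, every one of these explicit substitutions is stuck forever — the head variable of $N$ is never substituted and $T\langle N\rangle$ never reaches any $\oc P$, so meaningfulness is not established. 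This is precisely the ``value status of $\oc$'' feature of \dbang{} emphasized in Section~\ref{sec:dbang}. The repair is immediate and is what the paper does: take the arguments to be $\oc\circ_{k_i}$ (the paper's context is $(\lambda x_1\dots\lambda x_k\,\square)\,\oc N_1\cdots\oc N_k$ with $N_i=\circ_{k_i}$); then each explicit substitution fires and implements exactly the meta-level substitution $\{\circ_{k_i}/x_i\}$ that Lemma~\ref{lem:mutual1} speaks about. With that fix, and your final standardization step (for which the paper instead invokes its earlier remark, after Definition~\ref{def:dbang_meaningful}, that full reduction to a bang already suffices for meaningfulness), your argument coincides with the paper's.
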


\begin{corollary}
	\label{cor:dcbv-meaningful-iff-taylor}
	Let $M\in\dcbv$ then $M$ is meaningful if and only if $\taylornf
	M\neq\emptyset$.
\end{corollary}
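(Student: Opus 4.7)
The plan is to derive this corollary as a direct consequence of previously established results, by routing through the embedding $\tradv{}$ of \dcbv\ into the fragment $\dbangv$ of \dbang. The two directions use complementary chains of implications, and no fresh argument is required beyond assembling these ingredients.

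For the forward direction, suppose $M\in\dcbv$ is \dcbv-meaningful. By Theorem~\ref{thm:trad_meaningfulness}(3), $\tradv M$ is \dbang-meaningful. Theorem~\ref{thm:tnf-dbang} applied to $\tradv M$ then yields $\taylornf{\tradv M}\neq\emptyset$, and Corollary~\ref{cor:tradn_tnf-tradv-tnf} transports non-emptiness back across the translation to conclude $\taylornf M\neq\emptyset$.

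For the converse, assume $\taylornf M\neq\emptyset$. Applying Corollary~\ref{cor:tradn_tnf-tradv-tnf} in the opposite direction gives $\taylornf{\tradv M}\neq\emptyset$. By Lemma~\ref{lem:trad_fragments}, $\tradv M$ lies in the fragment $\dbangv$, so Theorem~\ref{thm:dbangv_tnf} applies and guarantees that $\tradv M$ is \dbang-meaningful. A final use of Theorem~\ref{thm:trad_meaningfulness}(3) then transfers this to the \dcbv-meaningfulness of $M$ itself.

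There is no genuine obstacle here, the corollary is essentially a bookkeeping exercise over the previously proved results. The substantive work sits in Theorem~\ref{thm:dbangv_tnf}, whose proof adapts the Carraro--Guerrieri mutual induction (via the testing family $\circ_k$) to the distant setting, together with the fragment stability results that ensure $\tradv M\in\dbangv$ and that this fragment is closed under $\todbf$. Once these are in hand, the equivalence for \dcbv\ is obtained simply by composing the translation correspondences of Theorems~\ref{thm:trad_meaningfulness} and Corollary~\ref{cor:tradn_tnf-tradv-tnf} with Theorems~\ref{thm:tnf-dbang} and~\ref{thm:dbangv_tnf}.
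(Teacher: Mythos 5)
Your proof is correct and follows essentially the same route as the paper's own argument: both directions are assembled from Theorem~\ref{thm:trad_meaningfulness}, Theorem~\ref{thm:tnf-dbang}, Corollary~\ref{cor:tradn_tnf-tradv-tnf}, Lemma~\ref{lem:trad_fragments}, and Theorem~\ref{thm:dbangv_tnf}, exactly as in the paper. Nothing is missing; this is the intended bookkeeping argument.
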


\section{Conclusion and Discussions}
\label{sec:conclusion}
We developed an approximation theory for the distant \dbang-calculus, defining approximation trees and Taylor expansion and relating them to meaningfulness  and to each other. These results are part of a wider effort to generalize the
theory of the \cbn{} and \cbv{} $\lambda$-calculi. %, and indeed, the Böhm Trees and Taylor Expansions we developed generalized those  in~\cite{kerinec:bohmTreeCbV, kerinec:phd, barendregt-bohm, ehrhard-uniformity-taylor}. \todo{?}
And indeed, we retrieve similar results in distant \cbn{} and distant \cbv{}, via translations in \dbang.

Future work includes extending meaningfulness to proof structures, pursuing Dufour and
Mazza's work~\cite{mazza-dufour} using non-inductive variants of classical approximation
notions (in particular, their Böhm trees do not have an actual tree structure since
approximations cannot be described through an tree-like inductive syntax). Also, we mentioned in
Section~\ref{sec:meaningfulness} an open question about the possibility to
characterize a significant fragment of \dbang\ for which meaningful terms
coincide with terms having a non-empty Taylor normal form; this line of work
should be explored to develop the general understanding of \dbang.

\bibliography{bibli.bib}

\iflong 
 \newpage
 \appendix
  \section{Proofs of Section~\ref{sec:dbang}}
\label{appendix:dbang}

\begin{lemma}\label{lem:simu_dbr_one}
	If $M\todb N$, then for any $m\app M$, either $m\todbr \emptyset$, or there is some $n\app N$ such
	that $m\todbr n$.
\end{lemma}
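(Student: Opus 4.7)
The plan is to proceed by a case analysis on the three root reduction rules of $\todb$ given in Definition~\ref{def:dbang-terms-and-ctxs}, using Lemma~\ref{lem:approx_contexts} to decompose approximants through list contexts and Lemma~\ref{lem:taylor:substitution-lemma} to handle the substitutions.

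Case $M = L\<\lambda x P\>Q \todb L\<P\es Qx\> = N$: since $m \app M$ must be an application, $m = m_1 m_2$ with $m_1 \app L\<\lambda x P\>$ and $m_2 \app Q$. Lemma~\ref{lem:approx_contexts} gives $m_1 = l\<\lambda x p\>$ with $l\app L$ and $p\app P$. Then $m \todbr l\<p\es{m_2}x\>$ by the first resource rule (which yields a singleton), and the right-hand side approximates $N$ by the context-extension of $\app$ together with the approximation rule for explicit substitutions. Note that $\emptyset$ is not reached in this case.

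Case $M = P\es{L\<\oc Q\>}x \todb L\<P\{Q/x\}\> = N$: decomposing $m \app M$ gives $m = p\es{n}x$ with $p\app P$ and $n\app L\<\oc Q\>$, and then $n = l\<[q_1,…,q_k]\>$ with $l\app L$ and each $q_i \app Q$. The resource reduction produces $\emptyset$ unless $k = \deg x p$; when $k = \deg x p$, any chosen $\sigma \in P_k$ yields the reduct $l\<p\{q_{\sigma(1)}/x_1,…,q_{\sigma(k)}/x_k\}\>$, which approximates $L\<P\{Q/x\}\>$ by Lemma~\ref{lem:taylor:substitution-lemma} (applied with each $q_{\sigma(i)} \app Q$) and the context clause.

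Case $M = \der{L\<\oc P\>} \todb L\<P\> = N$: here $m = \der{m'}$ with $m' \app L\<\oc P\>$, so $m' = l\<[p_1,…,p_k]\>$. If $k \neq 1$ the reduct is $\emptyset$; if $k = 1$ the reduct is $l\<p_1\>$, approximating $L\<P\>$ directly. The main obstacle is not conceptual but bookkeeping: one must carefully check that in the substitution case every resource reduct (there may be many, indexed by $\sigma$) genuinely approximates $N$, which is exactly what the ``if and only if'' direction of Lemma~\ref{lem:taylor:substitution-lemma} provides. The remaining cases are routine.
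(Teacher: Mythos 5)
Your proof is correct and follows essentially the same route as the paper's: a case analysis on the three root rules of $\todb$, decomposing the approximant via the rules of Figure~\ref{fig:approx_dbang} together with Lemma~\ref{lem:approx_contexts}, and using Lemma~\ref{lem:taylor:substitution-lemma} to see that each $\sigma$-indexed reduct in the explicit-substitution case approximates $L\<P\{Q/x\}\>$ (the paper leaves these last two invocations implicit, but they are the same steps).
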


\begin{proof}[Proof of Lemma~\ref{lem:simu_dbr_one}]
	By induction on the definition $\todb$:
	\begin{itemize}
		\item If $m\app\der{L\<\oc N\>}$, then there exist $k\in\mathbb
			N, n_1,\dots ,n_k\app N$, and $l\app L$ such that
			$m=\der{l\<[n_1,\dots ,n_k]\>}$, and then $m\todb \emptyset$ if $k\neq
			1$, otherwise $m\todb l\<n_1\>\app L\<N\>$.
		\item If $m\app(L\<\lambda xN\>P)$, then $m=l\<\lambda xn\>p$
			for some $l\app L, n\app N$ and $p\app P$. Then,
			$m\todbr l\<n[p/x]\>\app L\<N[P/x]\>$.
		\item If $m\app N[L\<\oc P\>/x]$ then there are $k\in\mathbb N,
			p_1,\dots ,p_k\app P, n\app N,l\app N$ such that
			$m=n[l\<[p_1,\dots ,p_k]/x]$. Then $m\todbr \emptyset$ if $d_x(m)\neq
			k$, and otherwise for any $\sigma\in P_k$, $m\todbr
			l\<n\{p_{\sigma(1)}/x_1,\dots ,p_{\sigma(k)}/x_k\}\>$. \qedhere
	\end{itemize}
\end{proof}

This simulation property can be extended to surface contexts:
\begin{lemma}
	\label{lem:simu_dbr}
	If $M\todbs N$, then for any $m\app M$, either $m\todbrs \emptyset$ or there is some $n\app N$ such
	that $m\todbrs n$.
\end{lemma}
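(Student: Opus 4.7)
The plan is to lift Lemma~\ref{lem:simu_dbr_one} from root reductions to surface-context reductions, using the decomposition property of approximants through surface contexts given by Lemma~\ref{lem:approx_contexts}. Since $M\todbs N$ by definition means there exist a surface context $S$ and terms $M_0, N_0$ with $M_0\todb N_0$, $M = S\<M_0\>$ and $N = S\<N_0\>$, I would first fix such a decomposition.

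Given any $m\app M = S\<M_0\>$, Lemma~\ref{lem:approx_contexts} delivers a surface resource context $s$ and a resource term $m_0$ such that $m = s\<m_0\>$, with $s\app S$ and $m_0\app M_0$. I would then apply Lemma~\ref{lem:simu_dbr_one} to $m_0$: either $m_0\todbr\emptyset$, in which case the contextual closure together with the convention $f\<\emptyset\> = \emptyset$ for full (hence surface) contexts yields $m = s\<m_0\>\todbrs s\<\emptyset\> = \emptyset$; or else there exists $n_0\app N_0$ with $m_0\todbr n_0$, and then contextual closure gives $m = s\<m_0\>\todbrs s\<n_0\>$. In the latter case, the chosen witness $n := s\<n_0\>$ satisfies $n\app S\<N_0\> = N$ by the contextual extension of $\app$ to surface contexts (since $s\app S$ and $n_0\app N_0$).

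The proof is essentially a direct structural lift; no induction on $M$ is needed once the decomposition $m = s\<m_0\>$ is available. The main work has already been absorbed into Lemma~\ref{lem:approx_contexts} (which gives the single-hole decomposition of an approximant through a surface context) and into Lemma~\ref{lem:simu_dbr_one} (which handles the root reduction case, including the three possible failure modes giving $\emptyset$: a dereliction applied to a bag of cardinality $\neq 1$, and a substitution where the cardinality of the argument bag does not match the number of free occurrences of the substituted variable).

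The only point requiring a small amount of care is the propagation of $\emptyset$ through the surface context $s$, which is not per se a rewriting step but follows from the conventional extension $f\<\emptyset\> = \emptyset$ stated after Definition~\ref{def:dbang-taylor-reductions}; this keeps the two alternatives of the conclusion ($m\todbrs \emptyset$ versus $m\todbrs n$ with $n\app N$) cleanly separated. I do not foresee any obstacle beyond this bookkeeping, and in particular no parallel reasoning is needed here, since surface contexts admit single-hole decomposition — the parallel machinery is only required later (Lemma~\ref{lem:simu_dbr_full}) to handle reductions inside bags under full contexts.
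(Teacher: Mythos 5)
Your proposal is correct and matches the paper's proof in essence: the paper likewise decomposes $m=s\<m'\>$ with $s\app S$ along the surface context, applies Lemma~\ref{lem:simu_dbr_one} at the root, and lifts the result back through $s$ (handling $\emptyset$ via the convention $s\<\emptyset\>=\emptyset$, exactly as you do). The only difference is organizational: the paper runs an explicit induction on the surface context $S$, which is precisely the verification of the recomposition fact you invoke without proof — that $s\app S$ and $n_0\app N_0$ imply $s\<n_0\>\app S\<N_0\>$, i.e.\ the converse of Lemma~\ref{lem:approx_contexts}, which the paper does not state as a lemma but establishes on the fly through that induction.
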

\begin{proof}[Proof of Lemma~\ref{lem:simu_dbr}]
	Let $M=S\<M'\>$ and $N=S\<N'\>$ with $M\todb N$; and then $m=s\<m'\>$
	for $s\app S$.
	By induction on surface contexts:
	\begin{itemize}
		\item $S=\square$. Then $M\todb N$, we apply
			Lemma~\ref{lem:simu_dbr_one}.
		\item $S=\lambda x S'$. Then $M=\lambda xS'\<M'\>$ and
			$N=\lambda x S'\<N'\>$. By induction hypothesis, either
			$m'\todbrs \emptyset$, and then	$\lambda
			xs'\<m'\>\todbrs \emptyset$, either
			we have some $n'\app N'$ such that $m'\todbrs n'$, and
			then $\lambda xs\<m'\>\todbr \lambda xs\<n'\>$ by
			definition of resource surface reduction.
	\end{itemize}
	All remainder cases are similar, since for any resource surface context
	$s$ and term $n\app N$, there exist $S$ such that $s\<n\>\app
	S\<N\>$ which enables the induction hypothesis. Again, this fails for
	full contexts. 
\end{proof}

We can obtain a symmetric property with the same arguments. We could qualify,
following Dufour and Mazza~\cite{mazza-dufour}, the previous lemma as
\emph{push forward}, and the next lemmas as \emph{pull back}.
\begin{lemma}
	\label{lem:simu_dbr_bis}
	If $M\todbs N$, then for any $n\app N$, there is some $m\app M$ such
	that $m\todbrs n$.
\end{lemma}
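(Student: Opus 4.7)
The plan is to mirror the structure of Lemma~\ref{lem:simu_dbr}: first handle the three root-level redex cases via Lemma~\ref{lem:approx_contexts} (for list contexts) and the Substitution Lemma, then close under surface contexts by induction using Lemma~\ref{lem:approx_contexts} again. The key switch compared to the push-forward version is that we now \emph{build} the source approximant $m$ from the target $n$, so the bags and the permutation in the substitution rule must be synthesised carefully.

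First I would prove the root case: assume $M \todb N$ and fix $n \app N$. In the dereliction case $\der{L\<\oc N'\>} \todb L\<N'\>$, Lemma~\ref{lem:approx_contexts} decomposes $n = l\<n'\>$ with $l \app L$, $n' \app N'$; then $m := \der{l\<[n']\>}$ approximates $M$ and reduces to $n$ via the $\der{[-]}$-rule with $k=1$. In the $\beta$-case $L\<\lambda x M'\>P \todb L\<M'\es Px\>$, write $n = l\<n'\>$ with $n' \app M'\es Px$; from the shape of the approximation relation, $n' = m'[p/x]$ with $m' \app M'$ and $p \app P$, so $m := l\<\lambda x m'\>p$ does the job. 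The delicate case is the substitution rule $M'\es{L\<\oc P\>}x \todb L\<M'\{P/x\}\>$: decompose $n = l\<n''\>$ with $n'' \app M'\{P/x\}$, and apply Lemma~\ref{lem:taylor:substitution-lemma} to get $n'' = m'\{p_1/x_1,\dots,p_k/x_k\}$ with $k = \deg{x}{m'}$, $m' \app M'$, $p_i \app P$. Then set
\[
m := m'\es{l\<[p_1,\dots,p_k]\>}x,
\]
which approximates $M$ since $l\<[p_1,\dots,p_k]\> \app L\<\oc P\>$. The resource substitution rule applied with $\sigma$ the identity permutation (legal because $k = \deg{x}{m'}$) yields exactly $l\<n''\> = n$.

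The inductive step is then routine. Given $S\<M'\> \todbs S\<N'\>$ with $M' \todb N'$ and $n \app S\<N'\>$, Lemma~\ref{lem:approx_contexts} decomposes $n = s\<n'\>$ with $s \app S$ and $n' \app N'$. The root case provides $m' \app M'$ with $m' \todbrs n'$, and contextual closure of $\todbrs$ under resource surface contexts gives $s\<m'\> \todbrs s\<n'\> = n$, while $s\<m'\> \app S\<M'\>$ by construction. Here it is essential that $S$ is a surface (not full) context, so that Lemma~\ref{lem:approx_contexts} applies and the reduction can be performed in $s$ at surface level; this is precisely the reason the analogous pull-back statement fails for $\todbf$ and forces the parallel reduction of Section~\ref{subsec:parallel}.

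The main obstacle is the substitution case, and specifically ensuring the arity condition $k = \deg{x}{m'}$ required to fire the resource substitution rule. This is exactly what the Substitution Lemma for Taylor expansion delivers: it guarantees both the decomposition of $n''$ and the matching of the number of free $x$-occurrences with the bag size we must synthesise. Once this bookkeeping is in place, the remainder of the argument is a direct mirror of the proof of Lemma~\ref{lem:simu_dbr}.
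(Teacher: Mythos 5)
Your proposal is correct and takes essentially the same route as the paper: the paper's proof proceeds by induction on the surface context $S$, with the base case dismissed as ``a straightforward case analysis on $\todb$'', and your root-level cases (dereliction, $\beta$, and the substitution case synthesised via Lemma~\ref{lem:taylor:substitution-lemma}) are exactly that missing analysis, while your use of Lemma~\ref{lem:approx_contexts} to close under surface contexts matches the paper's inductive contextual step.
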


\begin{proof}[Proof of Lemma~\ref{lem:simu_dbr_bis}]
	Let $M = S\langle M'\rangle, N = S\langle N'\rangle$ with $M' \todb N'$ then
	we can reason by induction on $S$.
	\begin{itemize}
		\item $S = \square$. Then $M \todb N$ and by a straightforward case
		analysis on $\todb$.
		\item $S = \lambda xS'$. Then $M = \lambda x S'\langle M'\rangle, N =
		\lambda x S'\langle N'\rangle$ and $m = \lambda x s' \langle m' \rangle$
		with $s'\app S'$. By induction hypothesis there exists $m' \app M'$ such
		that $m' \todbr n'$ then let $m = \lambda x s'\langle m'\rangle$ we
		have that $m \todbrs n$ by closure.
	\end{itemize}
	All the remainder cases are similar.
\end{proof}

\begin{proof}[Proof of Lemma~\ref{lem:simu_dbr_full}]
	We proceed by induction on the exponential depth where the reduction
	occurs.	By lemma~\ref{lem:simu_dbr}, and since $\todbrs\subseteq\totodbr$, the
	only case we need to show is the full context closure, where $M=F\<M'\>=\oc
	F'\<M'\>$ and $N=F\<N'\>$. 

	Then, $m=[p_1,\dots ,p_k]$ for some $k\in\mathbb N$, where $p_i\app
	F'\<M'\>$.
	
	By induction hypothesis, either some $p_i$ reduces to $\emptyset$, and then also
	$m\totodbr \emptyset$, either for any $i\leq k$, there is some $p_i'\app
	F'\<N'\>$ such that $p_i\totodbr p_i'$. Then,
	$m\totodbr[p_1',\dots ,p_k']\app \oc F'\<N'\>=N$.
\end{proof}

\begin{proof}[Proof of Lemma~\ref{lem:bohm-taylor:equiv-imply-same-nfTaylor}]
	By Theorem~\ref{thm:simu_dbang_full}, we immediately have $\taylornf{M}\subseteq\taylornf{N}$.
	Then, let $n\in\taylornf N$. By iteration of Lemma~\ref{lem:simu_dbr_full_bis}, there
	is some $m\in\taylor M$ such that $m\totodbr^*n$. Then $n$ must
	be in $\taylornf M$. 
\end{proof}

\begin{proof}[Proof of Lemma~\ref{lem:taylor:approx-in-reduces-bis}]
	Consider $m_0\app M$ such that $m_0\todbrf^* m$. We reason by induction
	over the exponential depth under which the reduction occurs. If this
	depth is $0$, then $m_0\todbrs^*m$ and we apply (iteratively)
	Lemma~\ref{lem:simu_dbr_n} to conclude.

	Otherwise, we use the factorization property : by
	Proposition~\ref{prop:standardisation_resources}, we have some $n$ such
	that $m_0\todbrs^*n= s\<\bar n_1,\dots ,\bar n_k\>\totodbr^*s\<\bar n_1',\dots ,\bar n_k'\>=m$. By
	Lemma~\ref{lem:simu_dbr_n}, we have some $N=S\<\oc N_1,\dots ,\oc N_k\>$ such that $M\todbs^*N$ and
	$n\app N$ (thus $n_i\app N_i$). Then, for any $i\leq k$, $\bar
	n_i=[n_{i,1},\dots ,n_{i,l_i}]$ and $\bar n_i'=[n_{i,1}',\dots ,n_{i,l_i}']$ (in
	normal form, since these are subterms of $m$) with
	$n_{i,j}\app N_i$ and $n_{i,j}\todbrf^*n_{i,j}'$. This reduction occurs
	under an exponential (\emph{i.e.} inside a multiset), we then can apply
	our induction hypothesis to assert that there is some $N_i'$ such that
	$N_i\todbf N_i'$ and $n_{i,j}'\app N_i'$. We can conclude, by setting
	$M'=S\<\oc N_1',\dots ,\oc N_k'\>$.
\end{proof}

%%%%%%%%%%%%%%%%%%%%%%%%%%%%%%%%%%%
%			  BOHM
%%%%%%%%%%%%%%%%%%%%%%%%%%%%%%%%%%%
\subsection{Proof of Section~\ref{subsec:bohm-dbang}}

\begin{proof}[Proof of Lemma~\ref{lem:bang:bohm:reduce-have-at-least-same-approxs}]
	We show the lemma holds for one-step reductions, assuming $M\todbf N$. The closure is easily
	obtained  by induction on the number of steps. By definition,
	$M=A\{P_1/\bot_1,\dots ,P_k/\bot_k\}$, and by Lemma~\ref{lem:approx_bot}, there
	must be some $j\in\{1,\dots ,k\}$ such that $N=A(\{P_i/\bot_i\})_{i\neq
	j}\{P'_j/\bot_j\}$ with $P_j\todbf P_j'$. Consequently, we have
	$A\smaller N$.
\end{proof}

\begin{proof}[Proof of Lemma~\ref{lem:bang:bohm:reduce-have-same-approxs}]
	From Definition \ref{def:bang:bohm:approx-set} we deduce
	$\setApprox(N) \subseteq \setApprox(M)$, and from Lemma \ref{lem:bang:bohm:reduce-have-at-least-same-approxs} the other inclusion. 
\end{proof}

\begin{proof}[Proof of Lemma~\ref{lem:ideal}]
	The downwards closure is by definition of $\setApprox(M)$. For
	directedness, let us assume $A_1, A_2 \in
			\setApprox(M)$, we show that there exists $A_3 \in \setApprox(M)$ such
			that $A_1 \smaller A_3$ and $A_2 \smaller A_3$, by
			induction on $A_1$.

	\begin{itemize}
		\item If $A_1 = x$ then $M\todbf^*x$ and
			$A_2 = x$ or $\bot$ (Remark~\ref{rem:approx-normal}), then we set $A_3=x$.
		\item If $A_1 = \bot$, by definition of $\smaller$ we
			have $A_1 \smaller A_2$, we set $A_3=A_2$.
		\item If $A_1 = \lambda x A_1'$, by
			Lemma~\ref{lem:approx_bot} we have
			$M \todbf^* \lambda x N$ and $A_1' \in
			\setApprox(N)$.
			Then, by
			Lemma~\ref{lem:bang:bohm:reduce-have-same-approxs}
			we have $A_2 \in \setApprox(\lambda x N)$ so
			either $A_2 = \bot$ 
			and we set $A_3=A_2$ or $\lambda x A_2'$ so
			$A_2' \in \setApprox(N)$.
			Then, by induction hypothesis, there
			is some $A_3'$ such that $A_1'\smaller A_3'$
			and $A_2'\smaller A_3'$. We then set
			$A_3=\lambda xA_3'$.
		\item If $A_1 = \oc A_1'$ we reason as in the previous case.
		\item If $A_1 = \der{A'_1}$.
			Then $M \todbf^* \der{N}$ and $A' \in \setApprox(N)$.
			By Lemma~\ref{lem:bang:bohm:reduce-have-same-approxs} we
			have $A_2 \in \setApprox(\der{N})$ so either
			$A_2 = \bot$ or $A_2 = \der{A_2'}$ such that
			$A_2' \in \der{A_\oc}$, and therefore we can use
			our induction hypothesis to obtain an
			upper bound $A_3'$ of $A_1'$ and $A_2'$ and set
			$A_3=\der{A_3'}$, which is indeed an upper bound
			of $\{A_1,A_2\}$ by contextual closure of the
			approximation order.
		\item If $A_1 = A'_1\es{A''_1}{x}$,
			then $M\todbf^* N'\es{N''}{x}= N$
			with $A_1' \in \setApprox(N')$ and $A_1'' \in
			\setApprox(N'')$.
			$A''_1$ is not of shape $L\<\oc-\>$, then
			neither is $N''$. Again,
			$A_2\in\setApprox(N_1)$, then
			$A_2=A_2'[A_2''/x]$ with $A_2'\smaller N'$
			and $A_2''\smaller N''$. By induction
			hypothesis, we can find $A_3'$ an upper bound of
			$\{A_1',A_2'\}$ and $A_3''$ an upper bound of
			$\{A_1'',A_2''\}$. We conclude by setting
			$A_3=A_3'[A_3''/x]$.
		\item If $A_1=A_1'A_1''$, we reason similarly, but using the
			fact that $A_1'$ cannot be of shape $L\<\lambda x -\>$. \qedhere
	\end{itemize}
\end{proof}

%%%%%%%%%%%%%%%%%%%%%%%%%%%
%		COMMUTATION
%%%%%%%%%%%%%%%%%%%%%%%%%%%

\subsection{Proof of Section~\ref{subsec:bang:commutation}}

\begin{proof}[Proof of Lemma~\ref{lem:approx-taylor-subset}]
	By induction on $A$. If $A=\bot$ then $\taylor
	A=\emptyset\subseteq\taylor M$. If $A=x$ then $M=x$ and $\taylor
	A=\taylor M=\{x\}$. If $A=\lambda xA'$, then $M=\lambda xM'$ with $A'\smaller
	M'$. By induction hypothesis, $\taylor{ A'}\subseteq\taylor {M'}$. Then,
	$\taylor A=\{\lambda xa'\mid a'\in\taylor {A'}\}\subseteq\taylor M=\{\lambda
	xm'\mid m'\in\taylor {M'}\}$. The other cases are treated similarly by routine
	induction.
\end{proof}

\begin{proof}[Proof of Lemma~\ref{lem:approx_subset_nft}]
	We have some $M'$ such that $M\todbf^* M'$ and $A\smaller M'$. We have
	that $\nf{\taylor M}=\nf{\taylor{M'}}$ by Lemma~\ref{lem:bohm-taylor:equiv-imply-same-nfTaylor}.
	By Lemma~\ref{lem:approx-taylor-subset} we have $\taylor A\subseteq\taylor{M'}$. We conclude by
	observing that terms in $\taylor A$ are in normal form, and that normal terms in
	$\taylor {M'}$ must also belong to $\nf{\taylor{M'}}$ as they are not affected
	by any reduction.
\end{proof}

\begin{proof}[Proof of Lemma~\ref{lem:taylor-bohm:nf-imply-in-approx}]
	By induction on $m$.
	\begin{itemize}
		\item If $m=x$, then $M=x$ and we set $A=x$.
		\item If $m=\lambda xn$, then $M=\lambda xN$ with $n\app N$.
			Since $n$ must be in normal form, we can apply the
			induction hypothesis to obtain some $A'\smaller N$
			such that $n\app A'$. We then set $A=\lambda xA'$.
		\item If $m=m_1m_2$, then $M=M_1M_2$ with $m_i\app M_i$. Again,
			by induction hypothesis, we have $A_1\smaller M_1$
			and $A_2\smaller M_2$ with $m_i\app A_i$. It remains
			to show that $A_1A_2$ belongs to the set of approximants
			described in Definition~\ref{def:bohm_approx}. Notice
			that $m_1$ cannot be of shape $l\<\lambda x—\>$, then since $m_1\app
			A_1$, $A_1$ cannot be a bottom or an abstraction. A
			simple examination of the syntax of approximants is
			enough to conclude that $A_1A_2$ indeed belongs to it.
		\item If $m=\der n$, then we again obtain $n\app N$ and some
			$A'\smaller N$ with $n\app A$. Since $n$ cannot be of
			shape $l\<[-]\>$.
			 (since $m$ is normal), we can again check
			Definition~\ref{def:bohm_approx} to conclude that $A=\der
			A'$ is an approximant, and that $m\app A$.
		\item If $m=n\es px$, we reason as for the application case, but
			using the case that $p$ cannot be of shape $l\<[-]\>$. 
		\item If $m=[n_1,\dots ,n_k]$, then $M=\oc N$ with $n_i\app N$ for
			all $i\in\{1,\dots ,k\}$. Then by induction hypothesis, there
			is $A_i\smaller N$ with $n_i\app A_i$. We then take
			$A=\oc A_i$, which works for any $i$.\qedhere
	\end{itemize}
\end{proof}

\begin{proof}[Proof of Theorem~\ref{thm:taylor-bohm-commute}]
We proceed by double inclusion.
\begin{itemize}
	\item Take $m \in \taylor{ \btb{M}}$, then there exists some $A_0\in\setapprox(M)$ such that
		$m \in \taylor{ A_0}$, by Definition~\ref{def:taylor_bohm}.
		There is $M_0$ such that $M \todbf^* M_0$ and $A_0 \smaller M_0$.
		We can therefore apply Lemma
		\ref{lem:approx_subset_nft} to conclude
		that $m \in \taylornf {M_0}$, which is
		equal to $\taylornf M$ by Lemma \ref{lem:bohm-taylor:equiv-imply-same-nfTaylor}.
	\item Assume $m \in \taylornf M$. By Lemma
		\ref{lem:taylor:approx-in-reduces-bis} there exists $M_0$ such
		that
		$M \todbf^* M_0$ and $m \in \taylor{M_0}$. By Lemma
		\ref{lem:taylor-bohm:nf-imply-in-approx}, there is $A\smaller
		M_0$ such that
		$m \in \taylor{ A}$. By definition, $A\in\setApprox(M)$, so we conclude that
		$m \in \taylor{ \btb{M}}$. \qedhere
\end{itemize}
\end{proof}

  \section{Proofs of Section~\ref{sec:translations}}
\label{appendix:translation}

\begin{proof}[Proof of Lemma~\ref{lem:taylor_trad_dcbn_dcbv}]
	For the $\dcbn$ case:
	Considering a resource term $m\in\dbangres$, we can show that $m\appn M$ (see
	Definition~\ref{def:dcbn_taylor_approx}) if and only if $m\app \tradn
	M$ (see Figure~\ref{fig:approx_dbang}), by an immediate induction on $M$.
	\begin{itemize}
		\item $x$ is the only approximation of $x=\tradn x$.
		\item $m\appn\lambda xN \in \dcbn$ if and only if $m=\lambda xn$ with
			$n\appn N$, if and only if (induction hypothesis) $n\app \tradn N$,
			and then if and only if $m\app \lambda x\tradn N=\tradn{(\lambda xN)}$
		\item $m\appn NP\in\dcbn$ if and only if $m=n[p_1,\dots ,p_k]$ with
			$n\appn N, p_i\app P$ for any $i\leq k$ if and only if
			(induction hypothesis) $n\app \tradn N$ and
			$p_i\app\tradn P$, and then if and only if $m\app
			\tradn{(NP)}$.
		\item Case $M= N[P/x]$ is similar to the previous one. \qedhere
	\end{itemize}

	For the $\dcbv$ case:
		By induction on $M$:
	\begin{itemize}
		\item For any $k\in\mathbb N$, $[x]_k\appv x$ and $[x]_k\app
			\tradv x = \oc x$.
		\item For any $k\in\mathbb N$, $[\lambda xm_1,\dots ,\lambda xm_k]
			\appv \lambda xM$ iff $m_i\appv M$ for all $i$ iff
			(induction hypothesis) $m_i\app \tradv M$ iff $[\lambda
			xm_1,\dots ,\lambda xm_k]\app\tradv{(\lambda xM)}=\oc(\lambda
			x\tradv M)$.
		\item $m\appv NP$.
			\begin{itemize}
				\item Either $N$ is an application. Then
					$m\appv NP$ iff $m=\der np$ with $n\appv N$ and $p\appv
					P$ iff (induction hypothesis) $n\app \tradv N$ and
					$p\app\tradv P$ iff $\der np\app\tradv{(NP)}$.
				\item Either $N=\oc N'$, and then $m\appv NP$
					iff $m= n'p$ with $n'\appv N'$ and
					$p\appv P$ iff $n'\app\tradv{N'}$, and
					$p\app\tradv P$ iff
					$n'p\app\tradv{(NP)}$. \qedhere
			\end{itemize}
	\end{itemize}
\end{proof}

\begin{lemma}[Substitution]\label{lem:trad_subst}~
	\begin{enumerate}
		\item Let $M,N\in\dcbn$. $\tradn M\{\tradn
			N/x\}=\tradn{M\{N/x\}}$.
		\item Let $M,N\in\dcbv$. If $\tradv N=\oc P$, then
			$\tradv{M\{N/x\}}=\tradv M\{P/x\}$
	\end{enumerate}
\end{lemma}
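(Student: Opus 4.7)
The plan is to proceed by structural induction on $M$ in both parts, using $\alpha$-conversion to avoid capture when passing under binders (in particular, assuming that bound variables differ from $x$ and are not free in $\tradn N$ or $P$ as appropriate).

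For part (1) on \dcbn, each case reduces to routine bookkeeping. The base cases $M=x$ (both sides equal $\tradn N$) and $M=y\neq x$ (both sides equal $y$) are immediate. For $M=\lambda yM'$, $M=M_1M_2$, and $M=M_1\es{M_2}{y}$, the translation $\tradn{\cdot}$ is compositional — each \dcbn\ constructor is mapped to an outer \dbang\ constructor in a fixed way — so the meta-substitution commutes with each clause and the induction hypothesis closes the case directly.

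For part (2) on \dcbv, the hypothesis $\tradv N=\oc P$ is precisely what makes the variable case work: $\tradv x\{P/x\}=(\oc x)\{P/x\}=\oc P=\tradv N=\tradv{x\{N/x\}}$. Abstractions and explicit substitutions proceed compositionally as in \dcbn. The only substantive case is $M=M_1M_2$, since the definition of $\tradv{(M_1M_2)}$ branches on whether $\tradv{M_1}$ has the form $L\<\oc Q\>$. To close this case I would first establish a \emph{shape-preservation} sub-claim: for every $M_1\in\dcbv$, $\tradv{M_1}$ is of the form $L\<\oc Q\>$ if and only if $\tradv{M_1\{N/x\}}$ is of the form $L'\<\oc Q'\>$. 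This follows from a short induction on $M_1$, since the outer syntactic category produced by each clause of $\tradv{\cdot}$ is determined by the outermost constructor of $M_1$, and every occurrence of $x$ in $\tradv{M_1}$ is guarded by a $\oc$ (because $\tradv x=\oc x$), so the meta-substitution $\{P/x\}$ cannot flip the shape at the root; when $M_1=x$ itself, both $\tradv{M_1}=\oc x$ and $\tradv{M_1\{N/x\}}=\tradv N=\oc P$ are bang-headed.

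With the sub-claim in hand, the two branches of the application rule remain synchronized between the two sides of the equation. In the branch $\tradv{M_1}=L\<\oc Q\>$, the induction hypothesis on $M_1$ gives $\tradv{M_1\{N/x\}}=L\{P/x\}\<\oc Q\{P/x\}\>$, so both translations of $(M_1M_2)\{N/x\}$ and $\tradv{M_1M_2}\{P/x\}$ reduce to $L\{P/x\}\<Q\{P/x\}\>\tradv{M_2\{N/x\}}$ after also applying the hypothesis to $M_2$; in the other branch, both sides use the dereliction fallback, and the hypothesis on $M_1,M_2$ concludes directly. The main obstacle is exactly the shape-preservation sub-claim: although intuitively clear, it requires an explicit enumeration of the clauses of $\tradv{\cdot}$ to confirm that substitution preserves the value/non-value dichotomy at the root — the very predicate on which the application clause of $\tradv{\cdot}$ branches — and this is the only point where the non-uniform definition of $\tradv{\cdot}$ at applications could, in principle, cause the two sides to split differently.
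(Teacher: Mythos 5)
Your proof is correct and takes essentially the same route as the paper, whose entire argument is ``standard induction on $M$'' together with the remark that the value hypothesis $\tradv N=\oc P$ is what makes the variable case of \dcbv{} work; your detailed treatment of the \dcbv{} application case, via the shape-preservation sub-claim, just fills in the details the paper leaves implicit. One minor imprecision: it is \emph{not} true that every occurrence of $x$ in $\tradv{M_1}$ is $\oc$-guarded (e.g.\ $\tradv{(x\,M_2)}=x\,\tradv{M_2}$ has a bare head occurrence of $x$), but your sub-claim still holds, because after peeling the explicit substitutions from the root of a translated term the remaining core is never a bare variable (it is always $\oc$-headed or an application), so the substitution $\{P/x\}$ can never change the root constructor and hence cannot flip the $L\<\oc Q\>$ shape.
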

\begin{proof}
	Standard induction on $M$. This substitution lemma only holds for value in the
	case of $\dcbv$, because substituting a term to (the traduction of) a variable
	necessarily puts it under an exponential. This is not an issue because in \dcbv, these
	substitutions occur only if it is the case.
\end{proof}

\begin{proof}[Proof of Theorem~\ref{lem:trad_embedding}]
	For $\circ\in\{v,n\}$, the statements of the lemma can be depicted as
	follows, where the dashed lines and the term $P$ are the ones to be
	established:
	\begin{center}\begin{tikzpicture}[scale=0.7]
		\node(mn) at (0,0){$M^\circ$};
		\node(n) at (3,0){$N$};
		\node(m) at (0,-2){$M$};
		\node(p) at (5,-2){$P$};
		\node(pn) at (5,0){$P^\circ$};

		\draw[->](mn)--node[pos=1,below]{\tiny{$\oc$}}(n);
		\draw[->, dashed](m)--node[pos=1,above]{$*$}node[pos=1,below]{$\circ$}(p);
		\draw[->](m)--node[left]{$()^\circ$}(mn);
		\draw[->](p)--node[left]{$()^\circ$}(pn);
		\draw[->,dashed](n)--node[pos=1,above]{$*$}node[pos=1,below]{\tiny{$\oc$}}(pn);
	\end{tikzpicture}\end{center}
	First, notice that the translations $\tradv{()}$ and $\tradn{()}$
	generate only redexes of shape $L\<\lambda xM\>N$ and
	$M[L\<\oc N\>/x]$
	\footnote{
		Redexes like $\der{L\<\oc N\>}$ can however
		appear during reductions, from translations $\tradv{()}$, but not in the
		translation itself.
	}.

	(1) By induction on the reduction $\tradn M\todbf N$.
	\begin{itemize}
		\item $\tradn M=\tradn L\<\lambda x \tradn{N_1}\> \oc\tradn{N_2}$ and
			$N=\tradn{L}\<\tradn{N_1}[\oc\tradn{N_2}/x]\>$. We have
			$N=\tradn{(L\<N_1[N_2/x]\>)}$ by definition of
			$\tradn{()}$. 
			Since $M=L\<\lambda x N_1\>N_2$, we have
			$M\todbf L\<N_1[N_2/x]\>$, and we are done, setting
			$P=L\<N_1[N_2/x]\>$.

		\item $\tradn M=\tradn{N_1}[\tradn L\<\oc \tradn{N_2}\>/x]$ and
			$N=\tradn L\<\tradn{N_1}\{\tradn{N_2}/x\}\>$. By
			Lemma~\ref{lem:trad_subst},
			$N=\tradn{(L\<N_1\{N_2/x\}\>)}$, and again we are done,
			setting $P=L\<N_1\{N_2/x\}\>$, since $M=N_1[L\< N_2\>/x]\todbf L\<N_1\{N_2/x\}\>$.
		\item The reduction is contextual:
			\begin{itemize}
				\item $\tradn M=\tradn{N_1}\oc\tradn{N_2}$ and
					$N=N_1'\oc\tradn N_2$ with $N_1\ton
					N_1'$. By induction hypothesis, there is
					some $P_1$ such that $N_1\ton^* P_1$ and
					$N_1'\todbf^* \tradn{P_1}$. Then we set
					$P=P_1N_2$, and we have indeed
					$M=N_1N_1\ton^*P$ and $N=N_1'\oc\tradn
					N_2\to^*
					\tradn{P_1}\oc\tradn{N_2}=\tradn{(P_1N_2)}=\tradn P$.
				\item $\tradn M=\tradn{N_1}\oc\tradn{N_2}$ and
					$N=\tradn{N_1}\oc N_2'$ with $N_2\todbf
					N_2'$. By induction hypothesis, we have
					some $P_2$ such that $N_2\ton^*P_2$ and
					$N_2'\todbf^*\tradn{P_2}$. We then set
					$P=N_1P_2$.
				\item $\tradn
					M=\tradn{N_1}[\oc\tradn{N_2}/x],N=\tradn{N_1}[\oc
					N_2'/x]$ with $N_2\todbf N_2'$. We reason
					as in the previous case.
				\item $\tradn M=\lambda x\tradn N_0, N=\lambda x
					N_0'$ with $\tradn{N_0}\todbf N_0'$. By induction
					hypothesis, there is $P_0$ such that
					$N_0\ton^* P_0$ and $N_0'\todbf^*
					\tradn{P_0}$. We then set $P=\lambda x
					P_0$.
			\end{itemize}
	\end{itemize}
	(2) By induction on the reduction $\tradv M\todbf N$. The first two
	cases are similar to before except the position of the exponential.
	\begin{itemize}
		\item $\tradv M=\tradv L\<\lambda x \tradv{N_1}\>\tradv{N_2}$ and
			$N=\tradv L\<\tradv{N_1}[\tradv{N_2}/x]\>$. We have
			$N=\tradv{(L\<N_1[N_2/x]\>)}$ by definition of
			$\tradv{()}$, and $M=L\<\lambda xN_1\>N_2$. We set
			$P=L\<N_1[N_2/x]\>$, satisfying $M\tov P$ and $N\todbf^0
			\tradv P$.
		\item $\tradv M=\tradv{N_1}[\tradv L\<\oc \tradv{N_2}\>/x]$ and
			$N=\tradv L\<\tradv{N_1}\{\tradv{N_2}/x\}\>$, then
			$M=N_1[L\<N_2'\>/x]$ with $N_2'$ being a
			variable or an application. By
			Lemma~\ref{lem:trad_subst}
			$\tradv{N_1\{N_2'/x\}}=\tradv{N_1}\{\tradv{N_2}/x\}$, so
			$N=\tradv{(L\<N_1\{N_2/x\})}$. We then set
			$P=L\<N_1\{N_2/x\}\>$.
		\item The reduction is contextual. We only detail the case where
			the reduction occurs in the left member of an
			application and under a dereliction; the other
			cases follow from an application
			of the induction hypothesis as before. The second of
			these two cases is
			important, as it is the responsible for the only
			configuration where $P$ must be distinct from $N$.
			\begin{itemize}
				\item $\tradv M=\der{\tradv N_1}\tradv N_2$ (in
					that case $M=N_1N_2$ with $N_1$ not
					being of shape $L\<V\>$ for any value
					$V$) and $N=\der{N_1'}\tradv{N_2}$. By
					induction hypothesis, there is some
					$P_1$ such that $N_1\tov^*P_1$ and
					$N_1'\todbf^*\tradv{P_1}$. We then have two
					possibilities:
					\begin{itemize}
						\item $\tradv{P_1}\neq L\<!-\>$
							($P_1$ is not a value).
							Then,
							$\tradv{(P_1N_2)}=\der{\tradv{P_1}}\tradv{N_2}$.
							We then set $P=P_1N_2$,
							and we have
							$M=N_1N_2\tov^*P_1N_2$
							and
							$N\todbf^*\der{\tradv{P_1}}\tradv{N_2}=\tradv
							P$.
						\item $\tradv{P_1}=\tradv L\<\oc
							\tradv Q\>$. Then,
							$\tradv{(P_1N_2)}=
							\tradv L\<\tradv Q\>\tradv
							N_2$. We have $N\todbf^*\der{\tradv
                                                       L\<\oc\tradv
                                                       Q\>}\tradv{N_2}\todbf\tradv L\<\tradv Q\>\tradv{N_2}$ 
                                                       by a single reduction
                                                       step\footnote{
                                                               These steps are called \emph{administrative}
                                                               in Arrial, Guerrieri and Kessner's
                                                               work~\cite{arrial-diligence}.
                                                       }. We then set $P=L\< Q\>N_2$.
                                                       It verifies $N\todbf^*\tradv P$ as we
                                                       just saw. We also have
							$M\tov^* P$, because
                                                       $M=N_1N_2$ and $N_1\tov^*P_1$.
                                                       Since $\tradv{P_1}=\tradv
                                                       L\<\oc\tradv Q\>$, it follows that
                                                       $Q$ is a value (either a variable
                                                       or an abstraction) and that
                                                       $P_1=L\<Q\>$, by definition of
                                                       $\tradv{()}$. \qedhere
					\end{itemize}

			\end{itemize}
	\end{itemize}
\end{proof}

\begin{proof}[Proof of Theorem~\ref{thm:bohm-trad}]
	For this proof we will benefit from the properties of Böhm trees stated
	in Proposition~\ref{prop:bohm-properties}.

	(1) We proceed by coinduction on $\btn(M)$. 
	\begin{itemize}
		\item If $\btn(M)=\bot$, then $\setapprox (M)=\{\bot\}$. We need
			to show that $\setapprox(\tradn M)=\{\bot\}$. Consider
			$A\in\setapprox(\tradn M)$, we have $\tradn M\todbf^*N$
			with $A\sqsubseteq N$. By
			Lemma~\ref{lem:trad_embedding}, we have some $P$ such
			that $M\todbf^*P$ and $N\todbf^*\tradn P$. By
			Lemma~\ref{lem:bang:bohm:reduce-have-at-least-same-approxs},
			$A\smaller \tradn P$. Now, observe that $P$ must be some
			$\dcbn$ redex, otherwise $\setapprox(M)$ would contain
			other approximations than $\bot$. Then, by simulation
			(Theorem~\ref{thm:trad_meaningfulness}), $\tradn P$ also
			is a redex, and since the syntax of approximants
			contains no redex, necessarily $A=\bot$. We conclude
			that $\btb(\tradn M)=\bot$.
		\item If $\btn(M)=x$, then
			$\tradn{\btn(M)}=x=\btb(x)=\btb(\tradn x)$.
		\item If $\btn(M)=\lambda x\btn(N)$, then
			$\tradn{(\btn(M))}=\tradn{(\lambda x\btn(N))}=\lambda
			x\tradn{(\btn(N))}$ (by definition of $\tradn{()}$). By
			coinduction hypothesis, $\tradn{(\btn(N))}=\btb(\tradn
			N)$. Then, $\tradn{(\btn(M))}=\lambda x\btb(\tradn
			N)=\btb(\lambda x\tradn N)=\btb(\tradn{(\lambda x
			N)})=\btb(\tradn M)$.
		\item If $\btn(M)$ is an application, then it is equal to
			some $(x)\btn(N)$. Then, we have
			$\tradn{(\btn(M))}=(x)\oc\tradn{(\btn(N))}=
			(x)\oc\btb(\tradn N)$, by coinduction hypothesis,
			which is equal to $\btb((x)\oc \tradn
			N)=\btb(\tradn{(x)N})=\btb(\tradn M)$.
		\item $\btn(M)$ cannot contain any explicit substitution, as they
			always correspond to redexes in $\dcbn$.
	\end{itemize}
	
	(2) We proceed by coinduction on $\btv(M)$.
	\begin{itemize}
		\item If $\btv(M)=\bot$, we reason as above, using this time the
			second item of Lemma~\ref{lem:trad_embedding}.
		\item If $\btv(M)=x$, then $\tradv{(\btv(M))}=\tradv x=\oc
			x=\btb(\oc x)=\btb(\tradv x)$.
		\item If $\btv(M)=\lambda x\btv(N)$, then
			$\tradv{(\btv(M))}=\oc(\lambda x\tradv{(\btv(N))})$. By
			coinduction hypothesis, it is equal to $\oc(\lambda
			x\btb(\tradv N))=\btb(\oc(\lambda x\tradv
			N))=\btb(\tradv{(\lambda xN)})=\btb(\tradv M)$.
		\item If $\btv(M)$ is an application, then it must be equal to
			some $x([\btv(N_i)/y_i])_{1\leq i\leq k}\btv(\tradv N_0)$
			(because in this case $M$ reduces to some $L\<x\>N_0$).
			
			Then $\tradv{(\btv(M))}=
			x([\tradv{(\btv(N_i))}/y_i])_{1\leq i\leq k}\tradv{(\btv(N_0))}$.
			By coinduction hypothesis,
			$\tradv{(\btv(N_j))}=\btb(\tradv{N_j})$ for
			$j\in\{0,\dots ,k\}$. Then
			$\tradv{(\btv(M))}=
			x([(\btb(\tradv{N_i}))/y_i])_{1\leq i\leq
			k}\btb(\tradv{N_0})$. Again, for $i\in\{1,\dots ,k\}$,
			$\btb(\tradv{N_i})$ cannot be an exponential, since
			those explicit substitutions must not be reducible. Then,
			$\tradv{(\btv(M))}=\btb\left(x\left([\tradv{N_i}/x]\right)_{i\in\{1,\dots ,k\}}\tradv
			N_0\right)=\btb(\tradv M)$.
		\item If $\btv(M)=\btv(N_1)[\btv(N_2)/x]$, then again $\btv(N_2)$
			cannot be a value, hence
			$\tradv{(\btv(M))}=\btb(\tradv{N_1})[\btb(\tradv{N_2})/x]$
			(by coinduction hypothesis)
			$=\btb(\tradv{N_1}[\tradv{N_2}/x])=\btb(\tradv{(N_1[N_2/x])})$. \qedhere
	\end{itemize}
\end{proof}

\begin{proof}[Proof of Theorem~\ref{commutation-thm-translations}]
	Let $\circ\in\{n,v\}$. We have the following equalities:
	\begin{align*}
		\nf{\mathcal T^\circ (M)} &= \nf{\taylor{M^\circ}} &\text{By
		Lemma~\ref{lem:taylor_trad_dcbn_dcbv}} \\
		&= \taylor{BT(M^\circ)}   &\text{By
		Theorem~\ref{thm:taylor-bohm-commute}}  \\
		&= \taylor{(BT_\circ(M)}^\circ) &\text{By
		Theorem~\ref{thm:bohm-trad} } \\
		&=\mathcal T_\circ(BT_\circ(M)) &\text{By
		Lemma~\ref{lem:taylor_trad_dcbn_dcbv}}
	\end{align*}
\end{proof}

\section{Proofs of Section~\ref{sec:meaningfulness}}
\label{appendix:meaningfulness}

\begin{proof}[Proof of Theorem~\ref{thm:tnf-dbang}]
	We show the contrapositive of the statement: assume that $\taylornf
	M=\emptyset$. By definition, for every $m\in\taylor M$, we have $m\todbr^*
	\emptyset$. Consider now any resource testing context $t$. We can easily
	establish that $t\<m\>\todbr^* \emptyset $; since $\emptyset m=\emptyset,
	(\lambda x \emptyset)m=\emptyset$, and then by induction.

	For $M$ to be meaningful, there must exist a testing context $T$ such that
	$T\<M\>\todbs^k \oc P$ for some $P$ and $k\in\mathbb N$. We have
	$[]\app \oc P$. By iteratively applying 
	Lemma~\ref{lem:simu_dbr_bis}, there is some term $s\app T\<M\>$ such
	that $s\todbrs^k []$. By Lemma~\ref{lem:approx_contexts}, and because
	testing contexts are included in surface contexts, we have some $t\app
	T, m\app M$ such that $s=t\<m\>$.

	This leads to a contradiction: $t\<m\>\todbrs []$, yet we have shown
	that $t\<m\>\todbrs \emptyset$ for any $t$.
\end{proof}

\begin{proof}[Proof of Lemma~\ref{lemma:dbangn-nf-shape}]
	The following observations suffice:
	\begin{itemize}
		\item terms in \dbangn{} containing explicit substitutions are
			reducible. 
		\item If the leftmost subterm of an application is not a
		variable, the entire term is reducible. \qedhere
	\end{itemize}
\end{proof}

\begin{proof}[Proof of Theorem~\ref{thm:dbangn_tnf}]
	Consider some
	$p\in\taylornf M$, assumed non-empty. Then there is some $m\app M$ such
	that $m\todbrf^*p$.

	By Lemma~\ref{lem:dbangn_nf}, $p=\lambda x_1\dots x_k(x)\bar n_1\dots \bar n_l$
	for some $k,l\in\mathbb N$.
	Proposition~\ref{prop:standardisation_resources} allows us to focus on
	surface reduction: there are some $m'=\lambda
	x_1\dots x_k(x)\bar
	n_1',\dots ,\bar n_l'$ such that $m\todbrs^* m'\totodbr^* p$ (the second part
	of the reduction acting inside the bags).

	Then, by iteratively applying Lemma~\ref{lem:simu_dbr_n}, we obtain $M'\in\dbang$
	such that $m'\app M'$ and $M\todbs^*M'$. By the definition of
	the approximation relation $\app$, we have $M'=\lambda x_1\dots x_k(x)\oc
	N'_1\dots \oc N'_l$ where $\bar n_i'\app \oc N'_i$.

	We now define the appropriate testing context $T= ((\lambda x
	\square) \oc(\lambda y_1\dots y_l\oc z_0))\oc z_1\dots \oc z_k$
	where the $z_i$ are chosen distinct from the $x_j$ and $y_j$.

	We observe that $T\<M'\>\todbs (\lambda x_1\dots x_k(\lambda y_1\dots y_l\oc
	z_0)\oc z_1\dots \oc z_k)\oc N_1'\dots \oc N_k'\todbs^2\oc z_0$.
	We conclude as follows: since $T\<M'\>\todbs^2\oc z_0$, $M\todbs^*M'$ and $T$ is a surface
	context, we have $T\<M\>\todbs^*\oc z_0$ by the contextual closure of
	$\todbs$. Therefore, $M$ is meaningful.
\end{proof}

\begin{proof}[Proof of Corollary~\ref{cor:dcbn_meaningful}]
	Recall that, by Corollary~\ref{cor:tradn_tnf-tradv-tnf} and Theorem~\ref{thm:trad_meaningfulness}, 
	$\taylornf M\neq\emptyset$ if and
	only if $\taylornf {\tradn M}\neq\emptyset$, and $M$ is meaningful if
	and only if $\tradn M$ is meaningful.

	($\to$) Assume $M$ is meaningful, then $\tradn M$ is also meaningful, and
	by Theorem~\ref{thm:tnf-dbang}, $\taylornf{\tradn M}\neq \emptyset$.
	It follows that $\taylornf M\neq\emptyset$.

	($\leftarrow$) If $\taylornf M\neq\emptyset$, then $\taylornf{\tradn
	M}\neq\emptyset$. By Lemma~\ref{lem:trad_fragments}, $\tradn
	M\in\dbangn$, and Theorem~\ref{thm:dbangn_tnf} implies that $\tradn M$
	must be meaningful. Therefore, $M$ is also meaningful.
\end{proof}

In what follows, the variables in $\circ_i$ are always taken fresh, so as they
do not interfere with variables in the terms where the $\circ_i$ are
substituted. In particular, we use the fact that for any $i>1$ and any $M$, $\circ_k \oc
M\todbf^2\oc\circ_{k-1}$.

\begin{proof}[Proof of Lemma~\ref{lem:mutual1}]
	By induction on the syntax $B$:
	\begin{itemize}
		\item If $M$ is of the form $\oc x$ or $\oc\lambda xB$, then we are
			done since $M\sigma\todbf^0 \oc P$ for some $P$ and any
			substitution $\sigma$.
		\item If $M\in B_\oc$, we apply Lemma~\ref{lem:mutual2},
			which guarantees the existence of a substitution $\sigma$
			such that $M\sigma\todbf^*\oc\circ_j$ for some $j$.
		\item If $M=N[P_1/x_1]\dots [P_k/x_k]$, then $N\in B$ and $P_i\in
			B_\oc$ for all $i$. Let $\{y_1,\dots ,y_l\}=\fv(N)\cup\bigcup_{i\leq
			k}\fv(P_k)$. By induction hypothesis, we have
			$c$ such that for any $k_1,\dots ,k_l\geq c$,
			$N\{\circ_{k_1}/y_1,\dots ,\circ{k_l}/y_l\}\todbf \oc P$ for
			some $P$. By Lemma~\ref{lem:mutual2},
			for $i\leq k$, there exist $c_i$
			and $n_i$ such that for any $k_{i,1},\dots ,k_{i,l}\geq c_i$,
			and for some $j_i\geq n_i$,
			$P_i\{\circ_{k_{i,1}}/y_1,\dots ,\circ_{k_{i,l}}/y_l\}\todbf^*\oc\circ_{j_i}$.

			Consider then $m_i=\max\{k_i,k_{i,1},\dots k_{i,k}\}$ for any
			$i\leq l$. We then have some $P'$ and some $r_i$ with
			$N[P_1/x_1]\dots [P_k/x_k]\{\circ_{m_1}/y_1,\dots ,\circ_{m_l}/y_l\}\todbf^*
			\oc P' [\oc\circ_{r_1}/y_1]\dots [\oc\circ_{r_l}/y_l]\todbf^l 
			\oc P' \{\oc\circ_{r_1}/y_1\}\dots \{\oc\circ_{r_l}/y_l\}$,
			which is again a value as required. \qedhere
	\end{itemize}
\end{proof}

\begin{lemma}
\label{lem:mutual2}
	Let $\{x_1,\dots ,x_n\}$ be a set of variables and $M\in B_\oc$
	(Definition~\ref{def:syntax_nf_dbangv}) with
	$\fv(M)\subseteq\{x_1,\dots ,x_n\}$. There exist $k,c\in\mathbb N$ such
	that for any $k_1,\dots ,k_n\geq c$, there is some $j\geq k$ with
	$M\{\circ_{k_1}/x_1\dots \circ_{k_n}/x_n\}\todbf^* \oc \circ_j$. 
\end{lemma}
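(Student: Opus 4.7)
The plan is to prove Lemma~\ref{lem:mutual2} by mutual structural induction with Lemma~\ref{lem:mutual1}, well-founded on the size of $M$, with case analysis on the three productions of $M\in B_\oc$. The arithmetic backbone is the identity $\circ_k\,\oc N\todbf^2 \oc\circ_{k-1}$, which holds for any $N$ because $\circ_{k-1}$ is closed and the distance $\beta$-rule together with the subsequent substitution step collapses everything. This identity dictates the ``fuel consumption'' of the construction: each head application of a substituted $\circ_{k_r}$ to a $\oc$-form argument trims exactly one level of the index, while each explicit substitution $[Q/y]$ whose variable $y$ is absent from its receiver simply vanishes once $Q$ reduces to $\oc$-form.

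First, for $M = L\langle x_r\rangle B'$ with $L = \square[Q_1/y_1]\cdots[Q_m/y_m]$, we invoke Lemma~\ref{lem:mutual1} on the strictly smaller $B'$ to obtain a bound $c'$ ensuring $B'\{\circ_{k_i}/x_i\}_i \todbf^* \oc P'$, and Lemma~\ref{lem:mutual2} (induction hypothesis) on each $Q_j$ to obtain bounds $c_j, k_j$ ensuring $Q_j\{\circ_{k_i}/x_i\}_i \todbf^* \oc\circ_{r_j}$ with $r_j\geq k_j$. Choosing $c = 1+\max(c', c_1,\dots, c_m)$ and $k = c-1$, the substituted $M$ reduces first by the head $\beta$-at-a-distance to $L'\langle \oc\circ_{k_r - 1}\rangle$ (where $L'$ carries the substituted $Q_j$'s), after which closedness of $\circ_{k_r - 1}$ lets every remaining substitution in $L'$ fire and disappear once each $Q_j\{\cdots\}$ reaches its $\oc$-form, producing $\oc\circ_{k_r - 1}$ with $k_r - 1\geq k$. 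For $M = L\langle B'_\oc\rangle$, we apply the induction hypothesis to $B'_\oc$ and to each $Q_j$ and conclude analogously, without any $\beta$-consumption, so that setting $k = k_0$ works. For $M = \der{B'_\oc}\, B'$, we apply Lemma~\ref{lem:mutual2} to $B'_\oc$ (obtaining $\oc\circ_{r_0}$ with $r_0\geq k_0$) and Lemma~\ref{lem:mutual1} to $B'$ (obtaining $\oc P'$), then chain $\der{\oc\circ_{r_0}}\,\oc P'\todbf \circ_{r_0}\,\oc P'\todbf^2 \oc\circ_{r_0-1}$; enlarging $c_0$ if necessary to guarantee $k_0\geq 1$, we set $k = k_0 - 1$.

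The main obstacle is bookkeeping the constants. Each chosen $c$ must dominate every inductively produced bound, and the output lower bound $k$ must correctly account for the one-per-use decrement of the index imposed by head $\beta$-steps and derelictions. Since the number of such decrements is controlled by the syntactic shape of $M$, choosing $c$ sufficiently large and $k$ accordingly is routine once the three cases are disentangled; the only genuine subtlety is the need to ensure $k_0\geq 1$ in the dereliction case, so that the residue $\circ_{r_0}$ actually expands to a $\lambda$-abstraction capable of consuming $B'$. The mutual induction is well-founded on $|M|$ because every inductive call is to a strict subterm, and Lemma~\ref{lem:mutual1} is invoked only on the $B$-subterms that sit strictly inside the $B_\oc$-structure under consideration.
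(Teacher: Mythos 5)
Your proposal is correct and follows essentially the same route as the paper: a mutual structural induction with Lemma~\ref{lem:mutual1} (the Carraro--Guerrieri technique), one case per production of $B_\oc$, driven by the identity $\circ_k\,\oc N\todbf^2\oc\circ_{k-1}$ and the vanishing of explicit substitutions once their arguments reach $\oc$-form. Even your one flagged subtlety---guaranteeing the dereliction case receives an index $j\geq 1$ by pushing the input bounds higher, which strictly speaking uses the monotone dependence of the output index on the inputs rather than the bare statement of the induction hypothesis---is resolved with exactly the same (slightly informal) strengthening that the paper itself invokes when it writes that taking $j>1$ ``is allowed by our hypothesis''.
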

\begin{proof}
	By induction on $B_\oc$:
	\begin{itemize}
		\item $M=L\<x\>N=(x[N_1/y_1]\dots [N_m/y_m])N$ with $N\in B$, $N_i\in
			B_\oc$ for all $x\leq m$, and $\{x_1,\dots ,x_n\}=\fv(M)$. By
			Lemma~\ref{lem:mutual1}, there
			exists $k$ such that
			$N\{\circ_{k_1}/x_1,\dots ,\circ_{k_n}/x_n\}\todbf^* \oc P$ for
			any $k_i\geq k$ and some $P$. 
			
			By induction hypothesis, we have for each $i\leq m$,
			some $l_i$ and $c_i$ such that for any $l_{i,1},\dots ,l_{i,n}\geq l_i$
			$N_i \{\circ_{l_{i,1}}/y_1,\dots ,\circ_{l_{i,n}}/y_n\}\todbf^*
			\oc\circ_{j_i}$ for some $j_i\geq c_i$.

			Let $n_x$ be the index of $x$ in
			$\{x_1,\dots ,x_n\}$ (of course, $x\in\fv(M)$).

			We then
			set $r_i=\max\{k_i,
			l_{i,1},\dots ,l_{i,n}\}$ for each $i\neq n_x$; and we
			consider $r_{n_x}$ an arbitrary integer greater or equal
			to $\max\{k_{n_x},l_{n_x,1},\dots ,l_{n_x,n}\}$.
			
			We find 
			that $M\{\circ_{r_1}/x_1,\dots ,\circ_{r_m}/x_m\}\todbf^*
			\circ_{n_x}[\oc\circ_{r_1'}/x_1]\dots [\oc\circ_{r_m'}/x_1]\oc
			P'$, with $r'_i\geq
			r_i$ for all $i\leq m$. The reduction then yields 
			$\circ_{n_x}\oc P'$, which reduces immediately to $\oc
			\circ_{r_{n_x}-1}$. This concludes the case, as the
			reduction holds for any
			$r_{n_x}\geq\max\{k_{n_x},l_{n_x,1},\dots ,l_{n_x,n}\}$.
			
		\item $M=\der N N'$ with $N\in B_\oc$, $N'\in B$, and
			$\{x_1,\dots ,x_m\}=\fv(M)$. By induction
			hypothesis, we have some $n, n',c$ such that for any
			$n_1,\dots ,n_m\geq n$ and ,
			$N\{\circ_{n_1}/x_1,\dots ,\circ_{n_m}/x_m\}\todbf^*\oc\circ_j$ for
			all $j\geq c$, and for any $n'_1,\dots ,n'_m\geq n'$,
			$N'\{\circ_{n'_1}/x_1,\dots ,\circ_{n'_m}/x_m\}\todbf^*\oc P$
			for some $P$.

			Then, consider $k_i=\max\{n_i,n_i'\}$ for any $i\leq m$,
			we have that
			$M\{\circ_{k_1}/x_1,\dots ,\circ_{k_m}/x_m\}\todbf^* \der{\oc
			\circ_j}\oc P'\todbf \circ_j\oc P'\todbf^2\oc\circ_{j-1}$
			(notice that we need here to take $j>1$, which is
			allowed by our hypothesis).
		\item $M=N[P_1/x_1]\dots [P_k/x_k]$. This case is similar to the third case of
			Lemma~\ref{lem:mutual1}: the explicit substitutions are
			removed after an application of the induction
			hypothesis. \qedhere
	\end{itemize}
\end{proof}

\begin{proof}[Proof of Theorem~\ref{thm:dbangv_tnf}]
	Consider some
	$p\in\taylornf M$, assumed non-empty. There is some $m\app M$ such
	that $m\todbrf^*p$. By Lemma~\ref{lem:syntax_nf_dbangvres}, $p$ belongs to the syntax $b$.
	Proposition~\ref{prop:standardisation_resources} ensures that there is some $p'\in
	b$ such that $m\todbrs p'$ (as in Theorem~\ref{thm:dbangn_tnf}, we
	focus on internal reduction). 

	By iteratively applying Lemma~\ref{lem:simu_dbr_bis}, we obtain $P'$ such that
	$M\todbf^* P'$ and $p'\app P'$. By the definition of $\app$, we also
	have that $P'\in B$.

	Let $\{x_1,\dots ,x_k\}=\fv(P')$. Lemma~\ref{lem:mutual1} implies that there
	are some terms $N_1,\dots ,N_k$ such that $P'\{N_1/x_1,\dots ,N_k/x_k\}\todbf^* \oc
	Q$ for some term $Q$. 

	We define the testing context $C=(\lambda x_1\dots \lambda x_k \square)\oc
	N_1\dots \oc N_k$, which satisfies $C\<P'\>\todbf^*\oc Q$.
	We can conclude that $M$ is meaningful by the contextuality of reduction,
	since $C\<M\>\todbf^* C\<P'\>\todbf^* \oc Q$.
\end{proof}

\begin{proof}[Proof of Corollary~\ref{cor:dcbv-meaningful-iff-taylor}]
	Recall that, by Corollary~\ref{cor:tradn_tnf-tradv-tnf} and Theorem~\ref{thm:trad_meaningfulness}, 
	$\taylornf M\neq\emptyset$ if and
	only if $\taylornf {\tradv M}\neq\emptyset$, and $M$ is meaningful if
	and only if $\tradv M$ is meaningful.

	($\to$) Assume $M$ is meaningful, then $\tradv M$ is meaningful, and
	by Theorem~\ref{thm:tnf-dbang}, $\taylornf{\tradv M}\neq \emptyset$.
	It follows that $\taylornf M\neq\emptyset$.

	($\leftarrow$) If $\taylornf M\neq\emptyset$, then $\taylornf{\tradv
	M}\neq\emptyset$. By Lemma~\ref{lem:trad_fragments}, $\tradv
	M\in\dbangv$, and Theorem~\ref{thm:dbangv_tnf} implies that $\tradv M$
	is meaningful. Therefore, $M$ is also meaningful.
\end{proof}

\fi
\end{document}